\documentclass[14 pt]{article}

\usepackage{amsfonts,amssymb}
\usepackage{color}
\usepackage{amsthm, amssymb, latexsym, amsmath}
\usepackage[margin=1.25 in]{geometry}

\usepackage{graphics,geometry,epsfig}
\usepackage[active]{srcltx}
\usepackage{times}

\newtheorem{theorem}{Theorem}[section]

\newtheorem{lemma}[theorem]{Lemma}

\newtheorem{remark}{Remark}

\newtheorem{thm}{Theorem}

\def\eq{\begin{equation}}
\def\eeq{\end{equation}}
\def\eqnn{\begin{eqnarray*}}
\def\eeqnn{\end{eqnarray*}}
\def\eqn{\begin{eqnarray}}
\def\eeqn{\end{eqnarray}}
\def\bal{\begin{align}}
\def\eal{\end{align}}
\def\prf{\begin{proof}}
\def\endprf{\end{proof}}
\newcommand{\R}{{\BR}}
\newcommand{\C}{{\mathbb C}}
\newcommand{\BC}{{\mathbb C}}
\newcommand{\BR}{{\mathbb R}}
\newcommand{\RR}{{\mathbb R}}

\newcommand{\cA}{{\cal{A}}}
\newcommand{\cB}{{\cal{B}}}
\newcommand{\cD}{{\cal{D}}}
\newcommand{\cF}{{\cal{F}}}
\newcommand{\cH}{{\cal{H}}}

\newcommand{\cM}{{\cal{M}}}
\newcommand{\cR}{{\cal{R}}}
\newcommand{\cS}{{\cal{S}}}

%




\newcommand{\cW}{\mathcal{W}}

\def\Fo{{\mathcal F}} 
\def\Wmap{{\mathcal W}}
\newcommand{\eps}{{\varepsilon}}

\newcommand{\vphi}{{\varphi}}
\newcommand{\e}{{\epsilon}}       
\newcommand{\vep}{{{\varepsilon}}}           
\newcommand{\al}{{\alpha}}
\newcommand{\del}{{\delta}}
\newcommand{\lam}{{\lambda}}           
\newcommand{\Lam}{{\Lambda}}
\newcommand{\Om}{\Omega}                
\newcommand{\om}{\omega}
\newcommand{\s}{{\sigma}}
\newcommand{\z}{{\zeta}}


\newcommand{\field}[1]{\mathbb{#1}}

\newcommand{\CC}{\field{C}}     
%
\newcommand{\fh}{\mathfrak{h}}  
%

\newcommand{\bchi}{{\overline{\chi}}}

\newcommand{\btau}{\bar{\tau}}
%


\newcommand{\uw}{{\underline w}}
\newcommand{\ux}{{\underline{x}}}
\newcommand{\uy}{{\underline{y}}}
%


%


\newcommand{\oP}{{\overline{P}}}

%


%
\newcommand{\p}{\partial}

\newcommand{\lan}{\langle}
\newcommand{\ran}{\rangle}
\newcommand{\la}{\langle}
\newcommand{\ra}{\rangle}
\newcommand{\rIm}{{\rm{Im\, }}}
\newcommand{\rRan}{{\rm{Ran\, }}}
\newcommand{\rRe}{{\rm{Re\, }}}
\renewcommand{\Im}{\mathrm{Im}}
\renewcommand{\Re}{\mathrm{Re}}               
\newcommand{\Ran}{\mathrm{Ran}}             

\newcommand{\Null}{{\rm Null\ }}

\newcommand{\cirS}{\mathop{\bigcirc\kern -.73em {\scriptstyle{\rm S}}}}

\newcommand{\supp}{\mathrm{supp}}

\newcommand{\cern}{\mathrm{Null}}

\newcommand{\one}{\mathbf{1}}
\newcommand{\bfone}{{\bf 1}}
\newcommand{\id}{{\bf 1}}
\newcommand{\1}{{\bf 1}}

\renewcommand{\d}{{\rm d}}

\newcommand{\QED}{\hspace*{\fill}\mbox{$\Box$}}


\newcommand{\adj}{\mathrm{ad}}

\newcommand{\lb}{\left(}
\newcommand{\rb}{\right)}


\newcommand{\const}{\mathrm{const}}

\newcommand{\at}{{p}}

\def\knull{\kappa}
\def\ks{\knull_\sigma}
\def\e{\epsilon}
\def\c{e} 

\def\vac{\Omega}

\renewcommand{\part}{{\rm part}}
\newcommand{\el}{{\mathrm{el}}}
\newcommand{\re}{{\mathrm{f}}}

\def\Phsig{\Phi_{\sigma}}
\def\Psig{\Psi_{\sigma}}

\newcommand{\hf}{H_f}
\def\H(\vep, \theta){H(\vep, \theta)}

\def\hsm{H} 
\def\Hn{H}

\def\Kn{K_\chi} 

\def\Hns{H_{\sigma}}
\def\Kns{K_\sigma}

\def\nablE{\nabla E_\sigma}
\def\Pf{P_f}

\def\Ptot{P_{tot}}
\def\Afs{A_\sigma}
\def\Egs{E_{\sigma}}
\def\Af{A_\chi} 
\def\Eg{E_{}} 
\def\E0{E_{0}}

\newcommand{\DETAILS}[1]{}

\newcommand{\DATUM}{February 26, 2011}              
\pagestyle{myheadings}                         
\markboth{\hfill{Les Houches Lectures, February 26, 2011}}{{Les Houches Lectures, February 26, 2011}\hfill}  %

\title{Renormalization Group and Problem of Radiation\\ 
Les Houches, August, 2010}
\author{Israel Michael Sigal
\thanks{Dept.~of Math.,
Univ. of Toronto, Toronto, Canada; Supported by NSERC Grant No. NA7901} \\
}

\begin{document}
\date{\DATUM}
\maketitle
\begin{abstract} The standard model of non-relativistic quantum electrodynamics describes non-relativistic quantum matter, such as atoms and molecules, coupled to the quantized electromagnetic field. Within this model, we review basic notions, results and techniques in the theory of radiation. We describe the key technique in this area - the spectral renormalization group. Our review is based on joint works with Volker Bach and J\"urg Fr\"ohlich and with Walid Abou Salem, Thomas Chen, J\'er\'emy Faupin and Marcel Griesemer. Brief discussion of related contributions is given at the end of these lectures. This review  will appear in  "Quantum Theory from Small to Large Scales", Lecture Notes of the Les Houches Summer Schools, volume 95, Oxford University Press, 2011.
\end{abstract}
\textit{Key words}: quantum electrodynamics, photons and electrons, renormalization group, quantum resonances, spectral theory, Schr\"odinger operators, ground state, quantum dynamics, non-relativistic theory.


\section{Overview}

We will describe some key results  in theory of radiation for the standard model of non-relativistic electrodynamics (QED). 
The non-relativistic QED 
was proposed in early days of Quantum Mechanics\footnote{It was used, as already known, by Fermi (\cite{fermi}) in 1932 in his review of theory of radiation.} and it describes  quantum-mechanical particle systems coupled to quantized electromagnetic field.  It arises from a standard quantization of the corresponding classical systems 
(with possible addition of internal - spin - degrees of freedom)\footnote{In fact, it is the most general quantum theory obtained as a quantization of a classical system.} and it gives a complete and consistent account of electrons and nuclei interacting with electro-magnetic radiation at low energies. 
In fact, it accounts for all physical phenomena in QED, apart from vacuum polarization.
Sample of issues it addresses are

\begin{itemize}
\item
Stability; 

\item Radiation; 

\item Renormalization of mass;

\item Anomalous magnetic moment;

\item One-particle states;

\item Scattering theory.
\end{itemize}
There was a remarkable progress in the last 10 or so years in rigorous understanding of the corresponding phenomena. In this brief review we will deal with results concerning the first two items. We translate 
them into mathematical terms:

\begin{itemize}
\item
Stability $ \Longleftrightarrow$ Existence of the \textit{ground state};

\item Radiation $ \Longleftrightarrow$ Formation of 
    \textit{resonances} out of  the excited states of particle systems, scattering theory.

\end{itemize}
One of the key notions here is that of the \textit{resonance}. It gives a clear-cut mathematical description of processes of emission and absorption of the electro-magnetic radiation.
\medskip

The key and unifying technique we will concentrate on is the spectral \textit{renormalization group}. It is easily combined with other techniques, e.g. complex deformations (for resonances), the Mourre estimate (for dynamics), analyticity, fiber integral decompositions and Ward identities (used so far for translationally invariant systems).  
It was also extended to analysis of existence and stability of thermal states.

\medskip
\noindent \textbf{Acknowledgements} \\
The author is grateful to Walid Abou Salem, Thomas Chen, J\'er\'emy Faupin, Marcel Griesemer and especially Volker Bach and J\"urg Fr\"ohlich for fruitful collaboration and for all they have taught him in the course of joint work. 

\bigskip

\section{Non-relativistic QED}\label{sec:nrqed}
\subsection{Schr\"odinger equation}\label{sec:model}
We consider a system consisting of $n$ charged particles
interacting between themselves and with external fields, which are coupled to quantized electromagnetic field. The 
 starting point of the non-relativistic QED is the state Hilbert space $\cH=\cH_{p}\otimes\cH_{f}$,
which is the tensor product of the state spaces of the particles,  $\cH_\at$, say, $\cH_\at=L^2(\R^{3n})$,
and of Bosonic Fock space $\cH_{f}$ of the quantized electromagnetic field,  and the standard quantum Hamiltonian $\hsm\equiv H_{g\chi}$ on $\cH=\cH_{p}\otimes\cH_{f}$, given (in the units in which the Planck constant divided by $2\pi$ and the speed of light 
 are equal to $1:\  \hbar=1$ and
$c=1$)  
by 
\begin{equation} \label{Hsm}
\hsm=\sum\limits_{j=1}^n{1\over 2m_j}
(i\nabla_{x_j}-g\Af(x_j))^2+V(x)+H_f
\end{equation}
(see \cite{fermi} and \cite{PauliFierz}). Here, $m_j$ and $x_j$,  $j=1, ..., n$, are the ('bare') particle masses and the particle positions,
$x=(x_1,\dots,x_n)$, $V(x)$ is the total potential affecting particles and $g>0$ is a coupling constant related to the particle charges, 
$\Af:=\check\chi *A$, where $A(y)$ is the \textit{quantized vector potential}, in the Coulomb gauge ($\textrm{div} A(y)=0$), describing the quantized electromagnetic field, 
 and $\chi$ is an \textit{ultraviolet cut-off}, 
\begin{equation}\label{A}
 \Af(y)=\int(e^{iky}a(k)+e^{-iky}a^*(k))\chi(k){d^3k\over \sqrt{|k|}}
\end{equation}
($a(k)$ and $a^*(k)$ are annihilation and creation operators
acting on the Fock space $\cH_{f}\equiv \cF$, see Supplement \ref{sec:crannihoprs} for the definitions),  and $\hf$ is the quantum Hamiltonian of the quantized electromagnetic field, describing the dynamics of the latter, it is given by
\begin{equation} \label{Hf}
\hf \ = \ \int d^3 k \; \om(k) a^*(k) \cdot  a(k),
\end{equation}
where $\om(k) \ = \ |k|$ is the dispersion law connecting the energy
of the field quantum with its wave vector $k$.
For simplicity we omitted the interaction of the spin with magnetic field.
(For a discussion of this Hamiltonian including units, the removal of the center-of-mass motion of the particle system and taking into account the spin of the particles, see Appendix \ref{sec:pot}. Note that our units are not dimensionless. We use this units since we want to keep track of the particle masses. To pass to the dimensionless units we would have to set $m_\el=1$ also.)
The Hamiltonian $H$ determines the dynamics via  the time-dependent Schr\"odinger equation
$$i\partial_t\psi= H\psi,$$
 where $\psi$ is a differentiable path in $\cH=\cH_{p}\otimes\cH_{f}$.

The ultraviolet cut-off, $\chi$, satisfies
$\chi(k)= 1 $ in a neighborhood of $k=0$ and is decaying at infinity on the scale  $\kappa$ and sufficiently fast. 
We assume that $V(x)$ is a generalized $n$-body potential, i.e. it satisfies the assumptions:
\begin{itemize}
\item [(V)] $V(x) = \sum_i W_i(\pi_i x)$, where $\pi_i$ are a linear maps from
$\mathbb{R}^{3n}$ to $\mathbb{R}^{m_i},\ m_i \le 3n $ and $ W_i$ are
Kato-Rellich potentials (i.e. $W_i(\pi_i x) \in
L^{p_i}(\mathbb{R}^{m_i}) + (L^\infty (\mathbb{R}^{3n}))_\eps$ with
$p_i=2$ for $m_i \le 3,\ p_i>2 $ for $m_i =4$ and $p_i \ge m_i/2$
for $m_i > 4$).
\end{itemize}
Under the assumption (V), the operator $H$ is self-adjoint and bounded below.


We assume for simplicity that our matter consists of electrons and the nuclei and that the nuclei are infinitely heavy and therefore are manifested through the interactions only (put differently, the molecules are treated in the Born - Oppenheimer approximation). In this case, the coupling constant $g$ is related to the electron charge $-e$ as $g:= \alpha^{3/2}$, where $\alpha =\frac{e^2}{4\pi \hbar c}\approx {1\over 137}$, the fine-structure constant, and  $m_j =m$.
 It is shown (see Section \ref{sec:liter} and a review in \cite{bach}) 
  for references and  discussion) that the physical electron mass, $m_\el $, is not the same as the parameter $m\equiv m_j$ (the 'bare' electron mass) entering \eqref{Hsm}, but depends on $m$ and $\kappa$. Inverting this relation, we can think of $m$ as a function of $m_{\el}$ and $\kappa$.  
If we fix the particle potential $V(x)$ (e.g. taking it to be the total Coulomb potential), and $m_{\el}$ and $e$, then the Hamiltonian \eqref{Hsm} depends on one free parameter, the bare electron mass $m$ (or the ultraviolet cut-off scale, $\kappa$). 
%
\DETAILS{If we fix the particle potential $V(x)$, 
then the Hamiltonian $\hsm\equiv H_{g\chi}$ depends on \textit{two free parameters}:
\begin{itemize}
\item The coupling constant $g$ (related to the electron charges);

\item The ultraviolet cut-off $\kappa$ (related to the electron renormalized mass. 
\end{itemize}}

\bigskip

\subsection{Stability and radiation}\label{sec:stabrad}

We begin with considering the matter system alone. As was mentioned above, its state space, $\cH_{p}$, is either $L^2(\mathbb{R}^{3n})$ or a subspace of
this space determined by a symmetry group of the particle system, and its Hamiltonian operator,  $H_p$, acting on 
$\cH_{p}$, is given by
\begin{equation} \label{Hp}
H_p:=\sum\limits_{j=1}^n {-1\over 2m_j} \Delta_{x_j}+V(x),
\end{equation}
where $\Delta_{x_j}$ is the Laplacian in the variable $x_j$ and, recall, $V(x)$ is the total potential of the particle system. Under the conditions on the potentials $V(x)$, given above, the operator $H_p$ is self-adjoint and bounded below. Typically, according to the HVZ theorem, its spectrum consists of isolated eigenvalues,  $\epsilon^{(p)}_0 < \epsilon^{(p)}_1 < ... <\Sigma^{(p)}$, and continuum $[\Sigma^{(p)}, \infty)$, starting at the ionization threshold $\Sigma^{(p)}$, as shown in the figure below.  

\DETAILS{\begin{figure}[h!]
	\centering 

  \includegraphics[width=2.5in]{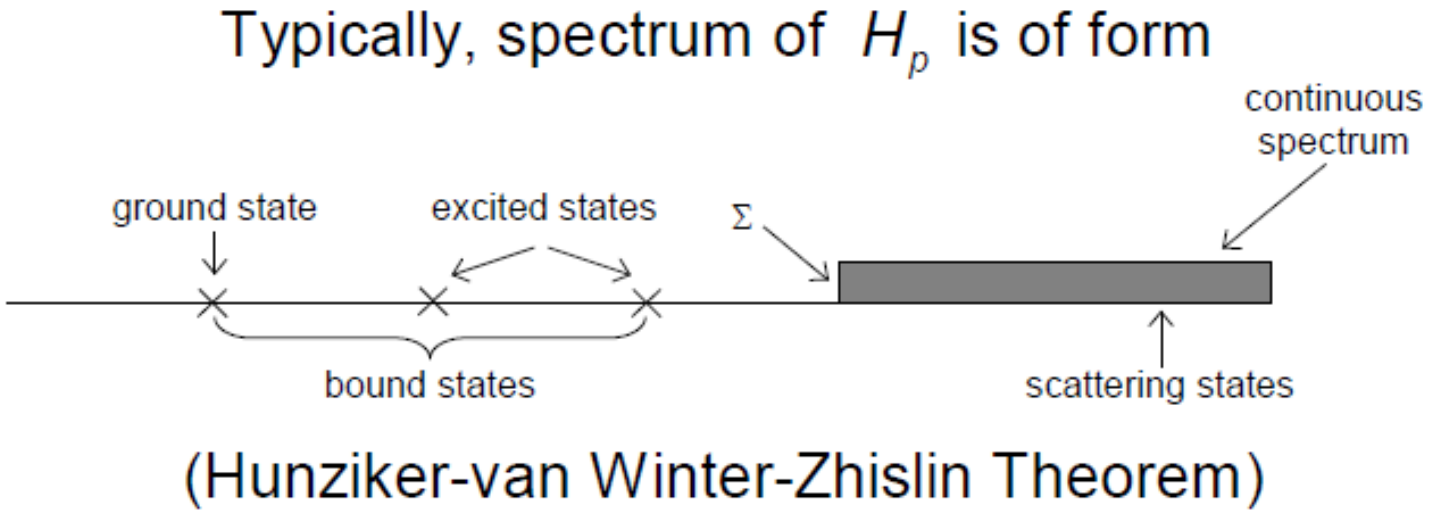} 
\end{figure} }


\bigskip

\bigskip

%
\includegraphics[height=4cm]{lh2.pdf}

\bigskip

\bigskip  
The eigenfunctions corresponding to the isolated eigenvalues are exponentially localized. Thus left on its own the particle system, either in its ground state or in one of the excited states, is stable and well localized in space. We expect that this picture changes dramatically when the total system (the universe) also includes the electromagnetic field, which at this level must be considered to be quantum. As was already indicated above what we expect is the following

\begin{itemize}
\item The stability of the system under consideration is
equivalent to the statement of existence of the ground state of
$H$, i.e. an eigenfunction with the smallest possible energy.

\bigskip

\item The physical phenomenon of radiation is expressed mathematically as
emergence of resonances out of
excited states of a particle system
due to coupling of this system to the quantum electro-magnetic
field.

\end{itemize}
Our goal is to develop the spectral theory of the Hamiltonian $H$  and relate to the properties of the relevant evolution. 
Namely, 
we would like to show that

\bigskip

1) The \textit{ground state} of the particle system is \textit{stable} when the coupling is turned on, while

\bigskip

2) The excited states, generically, are not. They turn into \textit{resonances}.



\bigskip

\subsection{Ultra-violet cut-off} \label{subsec:UV}

 We reintroduce the Planck constant, $\hbar$, speed of light, $c$, and electron mass, $m_\el$, for a moment. Assuming the ultra-violet cut-off $\chi(k)$
decays on the scale $\kappa$, in order to correctly describe the
phenomena of interest,  such as emission and absorption of
electromagnetic radiation,  i.e. for optical and rf modes, we have
to assume that the cut-off energy, $$\hbar c \kappa\ \gg\ \alpha^2 m_\el c^2,\ \mbox{ionization energy, characteristic energy of the particle motion}.$$
On the other hand, we should exclude the energies where the relativistic effects, such as
electron-positron pair creation, vacuum polarization and
relativistic recoil, take place, and therefore we assume
$$ \hbar c \kappa \ll m_\el c^2,\ \mbox{the rest energy of the
the electron}.$$
Combining the last two conditions we arrive
at $\alpha^2 m_\el c/\hbar \ll  \kappa \ll   m_\el c/\hbar$, or in our units,  
$$\alpha^2m_\el   \ll  \kappa \ll m_\el \,\,\ .$$ The Hamiltonian \eqref{Hsm} is obtained by the
rescaling $x \rightarrow \alpha^{-1} x$ and $k \rightarrow \alpha^2
k$ of the original QED Hamiltonian (see Appendix \ref{sec:pot}). After this rescaling, the new cut-off momentum scale, $\kappa'= \alpha^{-2} \kappa$, satisfies
$$m_\el \ll \kappa' \ll \alpha^{-2}m_\el,$$
which is easily accommodated by our estimates (e.g. we can have $\kappa'
=O(\alpha^{-1/3}m_\el)).$ 



\bigskip

\section{Resonances}

As was mentioned above, the mathematical language which describes the physical phenomenon of radiation is that of \textit{quantum resonances}. We expect that the latter emerge out of
excited states of a particle system
due to coupling of this system to the quantum electro-magnetic
field.

Quantum resonances manifest themselves in three different ways:

1) Eigenvalues of complexly deformed Hamiltonian;

2) Poles of the meromorphic continuation of the resolvent across the continuous spectrum;

3) Metastable states.

\bigskip

\subsection{Complex deformation}

To define resonances we use complex deformation method. In order to be able to apply this method 
we choose the ultraviolet cut-off,
$\chi(k)$, so that
\begin{itemize}
\item [] The function $\theta \rightarrow \chi(e^{-\theta} k)$  has an
analytic continuation from the real axis, $\mathbb{R}$, to the strip
$\{\theta \in \mathbb{C} | |\rIm\ \theta | < \pi/4 \}$ as a $L^2
\bigcap L^\infty (\mathbb{R}^3)$ function, \end{itemize} e.g.
$\chi(k)= e^{-|k|^2/\kappa^2}$. For the same purpose, we assume that the potential, $V(x)$,
satisfies the condition:
\begin{itemize}
\item [(DA)] The
the particle potential $V(x)$ is dilation analytic in the sense that
the operator-function
$\theta \rightarrow V(e^{\theta} x)$ $(-\Delta +1)^{-1}$ has an
analytic continuation from the real axis, $\mathbb{R}$, to the strip
$\{\theta \in \mathbb{C} | |\rIm\ \theta | < \theta_0 \}$ for some
$\theta_0 > 0$.
\end{itemize}


To define the resonances for the Hamiltonian $H$
we pass to the one-parameter (deformation) family
\begin{equation}\label{Htheta} 
H_{ \theta} := U_{\theta} H U_{\theta}^{-1},
\end{equation}
where $\theta$ is a real parameter and $U_{\theta},$ on
the total Hilbert space ${\mathcal H}:= {\mathcal H}_p \otimes
{\mathcal F}$, is the one-parameter group of unitary operators, whose action is rescaling particle positions and of
photon momenta:
$$x_j\rightarrow e^\theta x_j\  \mbox{and}\ k\rightarrow e^{-\theta} k.$$

One can show show that:

1) Under a certain analyticity condition on coupling functions, the family $H_{\theta}$  has an analytic
continuation in $\theta$ to the disc $D(0, \theta_0)$, as a type A
family in the sense of Kato;

2) The real eigenvalues of $H_{ \theta},\ \rIm \theta
>0,$ coincide with eigenvalues of $H$ and that complex
eigenvalues of $H_{ \theta},\ \rIm \theta >0,$ lie in the
complex half-plane $\mathbb{C}^-$;

3) The complex
eigenvalues of $H_{ \theta},\ \rIm \theta
>0,$ are locally independent of $\theta$.
The typical spectrum of $[H_{ \theta}\equiv H^{SM}_{ \theta}]|_{g=0},\ \rIm \theta
>0$ (here the superindex SM stands for the standard model) is shown in the figure below.
\bigskip

\bigskip

\bigskip


\includegraphics[height=4cm]{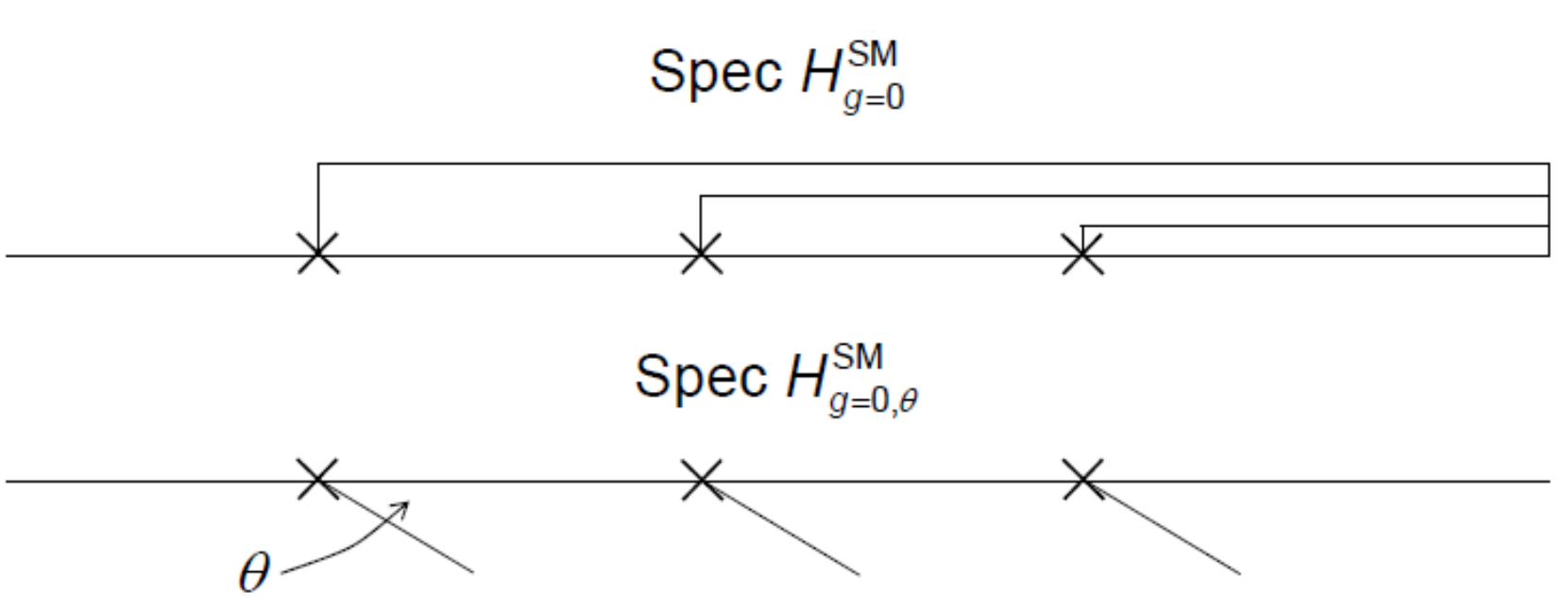}


\bigskip

\bigskip  

We call complex eigenvalues of $H_{ \theta},\ \rIm \theta
>0$ the \textit{resonances} of $H$.



 As an example of the above procedure we consider the complex deformation of the hydrogen atom and photon Hamiltonians $H_{hydr}:=-\frac{1}{2m}\Delta-\frac{\alpha}{|x|}$ and $H_f$: $$H_{hydr \theta}=e^{-2\theta}(-\frac{1}{2m}\Delta)-e^{-\theta}\frac{\alpha}{|x|},\ H_{f \theta}=e^{-\theta}H_f.$$ Let $e^{hydr}_j$ be the eigenvalues of the hydrogen atom.  Then the spectra of these deformations are
 $$\s(H_{hydr\theta})=\{e^{hydr}_j\}\cup e^{-2\Im\theta}[0,\infty),\ \s(H_{f\theta})=\{0\}\cup e^{-\Im\theta}[0,\infty).$$

\bigskip

\subsection{Resonances as poles}
Similarly to eigenvalues, we would like to characterize the resonances in terms of poles of matrix elements of the resolvent $(H -z)^{-1}$ of the Hamiltonian $H$. To this end we have to go beyond the spectral analysis of $H$. Let $\Psi_{\theta}=U_{\theta}\Psi$, etc., for $\theta\in
\mathbb{R}$ and $z\in \mathbb{C}^+$. Use the unitarity of
$U_{\theta}$ 
for real $\theta$, to obtain (the Combes argument)
\begin{equation} \label{meromcont}
\langle\Psi, \  (H -z)^{-1}\Phi\rangle  = \langle
\Psi_{\bar\theta} , ( H_{ \theta} -z)^{-1}\Phi_{\theta}
\rangle.
\end{equation}
Assume now that for a dense set of $\Psi$'s and $\Phi$'s (say, $\cD$, defined below), $\Psi_{\theta}$ and $\Phi_{\theta}$ have analytic continuations into a complex
neighbourhood of $\theta=0$ and continue the r.h.s of \eqref{meromcont} analytically first in $\theta$ into the upper half-plane and then in $z$ across the continuous spectrum. 
This meromorphic continuation has the following properties:
\begin{itemize}
\item The real eigenvalues of $H_{ \theta}$ give real poles of the
r.h.s. of \eqref{meromcont} and therefore they are the eigenvalues of
$H$.

\item The complex eigenvalues of $H_{\theta}$ 
are poles of the meromorphic continuation of the l.h.s. of \eqref{meromcont} across the
spectrum of $H$ onto the second Riemann sheet.
\end{itemize}
The poles manifest themselves physically as  bumps in the scattering cross-section or poles in the scattering matrix. 


%
%
\DETAILS{A simple example illustrating the use of complex deformation in order to continue analytically matrix elements of resolvents is an analytic continuation of the integral $\int_0^\infty \frac{f(\omega)}{\om-z}d\om$. Namely, we would like  to continue analytically this integral  across the semi-axis $(0, \infty)$ from the upper semi-plane $\C^+$ to the second Riemann sheet. 
Assuming that $f(e^{-\theta}\omega)$ is analytic in a neighbourhood of $\theta=0$, we can write such a continuation as $\int_0^\infty \frac{f(e^{-\theta}\omega)}{\om-e^{\theta}z}d\om,\ \Im\theta>0,\ \Im z<0.$}
%
%

The r.h.s. of \eqref{meromcont} has an analytic continuation into a complex neighbourhood of $\theta=0$, if $\Psi, \Phi\in \cD$, where
\begin{equation} \label{D}
\cD :=\bigcup_{n >0,a >0} \Ran \big(\chi_{N \le n}\chi_{|T| \le
a}\big). \end{equation}
Here $N  =  \int d^3 k  a^*(k) a(k)$ be the
photon number operator and $T$ be the self-adjoint
generator of the one-parameter group $U_{\theta},\ \theta \in
\mathbb{R}$.  (It is dense, since $N$ and  $T$ commute.)


\bigskip

\subsection{Resonance states as metastable states}\label{sec:meta}

While bound states are stationary solutions, one expects that resonances to lead to almost stationary, long-living solutions. Let $z_*,\ \rIm z_* \le0,$ be the ground state or
resonance eigenvalue. One expects that for an initial condition,
$\psi_0$, localized in a small energy interval around the ground state
or resonance energy, $\rRe z_*$,  the solution, $\psi=e^{-i H t}\psi_0$, of the
time-dependent Schr\"odinger equation, $i\partial_t\psi=
H\psi$,  is of the form
\begin{equation} \label{ResonDecay}
\psi = e^{-i z_* t}\phi_* +
O_{\textrm{loc}}(t^{-\alpha})+ O_{\textrm{res}}(g^{\beta}),
\end{equation}
for some $\alpha,\ \beta >  0$ (depending on $\psi_0$),
where
\begin{itemize}
\item $\phi_*$ is either the ground state or an excited state of the unperturbed system, depending on whether $z_*$ is the ground
state energy or a resonance eigenvalue;

\item The error term $O_{\textrm{loc}}(t^{-\alpha})$ satisfies $\|(\one+|T|)^{-\nu} O_{\textrm{loc}}(t^{-\alpha})\| \le C t^{-\alpha}$, where $T$ is the generator of the group $U_\theta$, with an appropriate $\nu>0$.

\end{itemize}

For the \textit{ground state}, \eqref{ResonDecay}, without the error term $O_{\textrm{res}}(g^{\beta})$, is called the local decay property (see Section \ref{sec:relat-res}). 
One way to prove it is to
use the formula  connecting the propagator and the resolvent:
\begin{align}\label{prop-intrepr}
& e^{-i H t}f(H)= \frac{1}{\pi} \int_{-\infty}^\infty d\lambda
f(\lambda)e^{-i\lambda t}\rIm(H-\lambda-i0)^{-1}.
\end{align}
Then one controls the boundary values of the resolvent on the spectrum (the corresponding result is called the limiting absorption principle, see Appendix \ref{sec:locdectransf})  and uses properties of the Fourier transform. 

\medskip

For the\textit{ resonances}, \eqref{ResonDecay} implies that $-\rIm z_*$ has the meaning of the decay probability per unit time, and $(-\rIm z_*)^{-1}$, as the life-time of the resonance.
To prove it,
one uses \eqref{meromcont} and the analyticity of its r.h.s. in $z$ and performs in \eqref{prop-intrepr} 
a suitable deformation of the contour of integration  to the second Riemann sheet to pick up the contribution of poles there. This works 
when the resonances are isolated. In the present case, they are not. This is a consequence
of the infrared problem.
Hence, determining the long-time behaviour of $e^{-i H t}\psi_0$ is a subtle problem in this case.

\bigskip

\subsection{Comparison with Quantum Mechanics}\label{sec:qm}
This situation is quite different from the one in Quantum Mechanics
(e.g. Stark effect or tunneling decay) where the resonances are
isolated eigenvalues of complexly deformed Hamiltonians. This makes
the proof of their existence and establishing their properties, e.g.
independence of $\theta$ (and, in fact, of the transformation group
$U_{\theta}$), relatively easy. In the non-relativistic QED (and
other massless theories),
giving meaning of the resonance poles and proving independence of
their location of $\theta$ is a rather involved matter. 

\bigskip

\subsection{Infrared problem}\label{sec:IRprob}

The resonances arise from the eigenvalues of the non-interacting
Hamiltonian $H_{g=0}$. The latter is of the form
\begin{equation}\label{H0} 
  H_{0} = H_\part \otimes \bfone_f + \bfone_\part \otimes H_f\, .
\end{equation}
The low energy spectrum of the
operator $H_0$ consists of branches $[\epsilon^{(p)}_i,
\infty)$ of absolutely continuous spectrum and of the eigenvalues
$\epsilon^{(p)}_i$'s, sitting at the continuous spectrum
'thresholds' $\epsilon^{(p)}_i$'s.
Here, recall, $\epsilon^{(p)}_0 < \epsilon^{(p)}_1 < ... <\Sigma^{(p)}$ are the isolated eigenvalues of the particle Hamiltonian $H_p$.
Let $\phi^{(p)}_i$ be the eigenfunctions of the particle system, while $\Om$ be the photon vacuum. The eigenvalues $\epsilon^{(p)}_i$'s correspond to the eigenfunctions $\phi^{(p)}_i\otimes \Om$ of $H_0$. The branches $[\epsilon^{(p)}_i, \infty)$ of absolutely continuous spectrum are associated with generalized eigenfunctions of the form  $\phi^{(p)}_i\otimes g_\lam$, where $g_\lam$ are the generalized eigenfunctions of $H_f:\ H_fg_\lam=\lam g_\lam,\ 0<\lam<\infty$.

The absence of gaps between the eigenvalues and thresholds is a
consequence of the fact that the photons  are massless. 
To address this problem we use  the spectral renormalization group (RG).  
The problem here is that the leading part of the perturbation in  $H$ is marginal. 
\DETAILS{either assuming the non-physical infrared behaviour of the
vector potential by replacing $|k|^{-1/2}$ in the vector potential
\eqref{A} by $|k|^{-1/2 +\varepsilon}$, with $\varepsilon
>0$, or by assuming presence of a strong confining external potential
so that $V(x) \ge c|x|^2$ for $x$ large.}

\bigskip


\bigskip

%
\section{Existence of the Ground and Resonance States}\label{sec:existgrres}

\subsection{Bifurcation of eigenvalues and resonances}\label{subsec:existgrres}
Stated informally what we show is

\begin{itemize}
\item  The ground state of $H |_{g=0}$ $\Rightarrow$ the ground state  $H$ ($\epsilon_{0} = \epsilon^{(p)}_{0} +O(g^2)$ and $\epsilon_{0} < \epsilon^{(p)}_{0}$); 


\item  The excited states of $H |_{g=0}$ $\Rightarrow$ (generically) the resonances of $H$ 
($\epsilon_{j,k} = \epsilon^{(p)}_{j} +O(g^2)$);



\item There is $ \Sigma>\inf\s(H)$ (the ionization threshold, $\Sigma+\Sigma^{(p)}+O(g^2)$) s.t. for energies $<\Sigma$ that particles are exponentially localized around the common center of mass.
\end{itemize}
For energies $>\Sigma$ the system either sheds off locally the excess of energy and descends into a localized state or breaks apart with some of the particles flying off to infinity.

To formulate this result more precisely, denote
$$\epsilon^{(p)}_{gap}(\nu):=\min \{|\epsilon^{(p)}_i
-\epsilon^{(p)}_j |\ |\ i\neq j,\ \epsilon^{(p)}_i, \epsilon^{(p)}_j
\le \nu \}.$$ 
%

%
{\bf{Theorem 4.1}} \textit{(Fate of particle bound states}).
Fix $e^{(p)}_0 < \nu < \inf \sigma_{ess} (H_p)$ and let
$g\ll \epsilon^{(p)}_{gap}(\nu)$. 
Then for $ g \ne 0,$
\begin{itemize}


\item  $H$ has a ground state, originating from a ground state
of $H |_{g=0}$ ($\epsilon_{0} = \epsilon^{(p)}_{0} +O(g^2),\ \epsilon_{0} < \epsilon^{(p)}_{0}$);

\item Generically, $H$ has no other bound state (besides the ground state);

\item  Eigenvalues, $\epsilon^{(p)}_{j} < \nu,\ j\ne 0$, of $H |_{g=0}\ \Longrightarrow$ 
resonance eigenvalues, $\epsilon_{j, k}$, of $H$;

\item  $\epsilon_{j,k} = \epsilon^{(p)}_{j} +O(g^2)$ and the total multiplicity of $\epsilon_{j, k}$ equals  the multiplicity of
$\epsilon^{(p)}_{j}$; 




\item The ground and resonance states are exponentially localized in the physical space: $$\|e^{\delta|x|}\psi\|<\infty,\ \forall \psi\in\Ran E_\Delta(H),\ \delta<\Sigma^{(p)}-\sup\Delta.$$
\end{itemize}


%
%

\bigskip

\bigskip

\includegraphics[height=5cm]{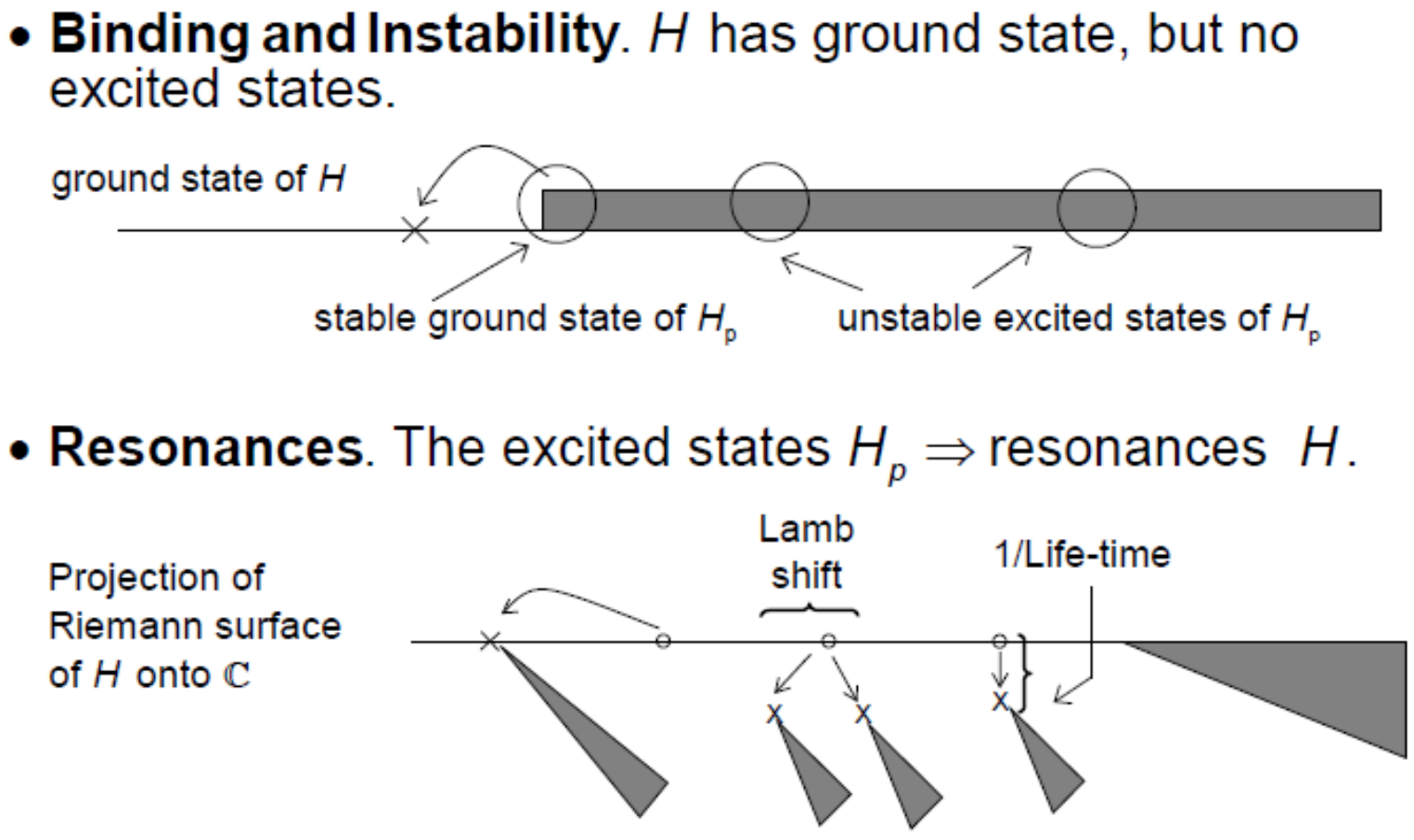}

\bigskip

\bigskip  


%
%


\noindent\textbf{Remark.} The relation $\epsilon_{0} < \epsilon^{(p)}_{0}$ is due to the fact that the electron surrounded by clouds of photons become heavier.
\subsection{Meromorphic continuation across spectrum}\label{subsec:meromcont}

\textbf{Theorem 4.2}. (\textit{Meromorphic continuation of the matrix elements of
the resolvent}) Assume $g\ll \epsilon^{(p)}_{gap}(\nu)$ and let $\epsilon_{0}:=\inf \sigma (H)$ be the ground state energy of $H$.
Then
\begin{itemize}
\item For a dense set (defined in \eqref{D} below) 
of vectors $\Psi$ and $\Phi$, the matrix elements
 $$F(z, \Psi, \Phi):=\langle \Psi, (H-z)^{-1}\Phi\rangle$$
have meromorphic continuations from $\mathbb{C}^+$
across the interval $(\epsilon_{0}, \nu)\subset \s_{ess}(H)$ 
into 
$$\{z \in \mathbb{C}^- |\ \epsilon_{0}
< \rRe z < \nu\}/\bigcup_{0 \le j \le j(\nu)}S_{j, k},$$
where $S_{j, k}$ are the wedges starting at the resonances 
 \begin{equation} \label{Sj}
S_{j, k}:=\{ z \in \mathbb{C} 
\mid \frac{1}{2}\rRe (e^{\theta} (z - \epsilon_{j, k})) \ge  
| \rIm (e^{\theta} (z - \epsilon_{j, k})) |\};
\end{equation}
\item This
continuation has poles at $\epsilon_{j, k}$: $\lim_{z
\rightarrow \epsilon_{j, k}}(\epsilon_{j, k}  -z) F(z, \Psi, \Phi)$ is finite
and $\ne 0$. 
\end{itemize}



\bigskip

\subsection{Discussion}\label{sec:disc}

%
%

\begin{itemize}
\item   Generically, excited states  turn into the resonances, not
bound states. 


\item  The second theorem implies the absolute continuity of the
spectrum and its proof gives also the limiting absorption principle for $H$ (see Appendix \ref{sec:locdectransf} for the definitions). 

\item  The proof of first theorem gives fast convergent
expressions in the coupling constant $g$ for the ground state energy
and resonances.

\item One can show analyticity of $\epsilon_{j,k}$ in the coupling constant $g$ (see Appendix \ref{sec:relat-res} for a result on the ground state energies).




\item  The meromorphic continuation in question is constructed in terms
of matrix elements of the resolvent of a complex deformation,
$H_{ \theta},\ \rIm \theta > 0,$ of the Hamiltonian
$H$.

\item A description of resonance poles is given in Section \ref{sec:relat-res}.
\end{itemize}

\bigskip

\subsection{Approach}\label{sec:Idea}

The main steps in our analysis of the spectral structure of the quantum Hamiltonian $H$ are:
\begin{itemize}
\item  Perform a new canonical transformation  (a generalized Pauli-Fierz transform)
\begin{equation*}
H \rightarrow H^{PF}:= e^{-ig F}H e^{ig F},
\end{equation*}
in order to bring $H$ to a more convenient for our analysis form;

\item Apply the spectral renormalization group (RG) on new -- momentum anisotropic -- Banach spaces. 
\end{itemize}

\bigskip

The main ideas of the spectral RG are as follows:
\begin{itemize}
\item Pass from a single operator $H^{PF}_\theta$ to a Banach space $\cal B$ of Hamiltonian-type
operators;
\item Construct a map, $\cR_{\rho}$, (RG transformation) on $\cB$,
 with the following properties:


(a) $\cR_{\rho}$ is 'isospectral';  

(b) $\cR_{\rho}$ removes the photon degrees of freedom related to
energies $\ge \rho$.
\item Relate the dynamics of semi-flow, $\cR_{\rho}^n, n \ge 1$, 
(called renormalization group) to spectral properties of individual
operators in $\cB$.

\end{itemize}

\bigskip

\bigskip

\section{Generalized Pauli-Fierz transformation} \label{sec-PF-transf}
%
We perform a canonical transformation  (generalized Pauli-Fierz transform) of $H$ in order to bring it to a form which is accessible to spectral renormalization group (it removes the marginal operators). For simplicity, consider one particle of mass $1$. We define the generalized Pauli-Fierz
transformation as:
\begin{equation} \label{HPF}
H^{PF}: = e^{-ig F} H e^{ig F},
\end{equation}
where 
$F(x)$ is the self-adjoint operator 
given by
%
\begin{equation}\label{F}
F(x)=\sum_\lambda
\int(\bar{f}_{x,\lambda}(k)a_{\lambda}(k)+f_{x,\lambda}(k)a_{\lambda}^*(k))
\frac{\chi(k)d^3k}{\sqrt{|k|}},
\end{equation}
with the coupling function $f_{x,\lambda}(k)$ chosen as
\begin{equation}\label{f}
f_{x,\lambda}(k):=
e^{-ikx}\frac{\varphi(|k|^{\frac{1}{2}}e_\lambda(k)
\cdot x)}{\sqrt{|k|}}, 
\end{equation}
with $\varphi \in C^2\ \mbox{bounded, with bounded derivatives and satisfying}\ \varphi'(0)=1.$ 
For the \textit{standard}  Pauli-Fierz transformation, we have $\varphi(s)=s$.

\bigskip


The Hamiltonian $H^{PF}$ is of the same form as $H$. 
Indeed,  using the commutator expansion\\
$e^{-ig F(x)} H_f e^{ig F(x)}= - i g [F,H_f] - g^2 [F, [F,
H_f]],$ we compute
\begin{equation} \label{H^PF}
H^{PF} = \frac{1}{2m} (p + g A_{\chi\varphi}(x))^2 + V_g(x) + H_f + gG(x), 
\end{equation}
where
\begin{equation*}
A_{\chi\varphi}(x)=\sum_\lambda
\int(\bar{\varphi}_{x,\lambda}(k)a_{\lambda}(k)
+\varphi_{x,\lambda}(k)a_{\lambda}^*(k)){\chi(k)d^3k\over \sqrt{|k|}},
\end{equation*}
with the new coupling function $\varphi_{\lambda, x}(k):=
e_{\lambda}(k)e^{-ikx}- \nabla_x f_{x,\lambda}(k)$ and
 $$
V_g(x):= V(x) + 2
g^2\sum_\lambda \int |k| |f_{x,\lambda}(k)|^2d^3k, $$
\begin{equation*}
G(x):=- i\sum_\lambda
\int|k|(\bar{f}_{x,\lambda}(k)a_{\lambda}(k)-f_{x,\lambda}(k)a_{\lambda}^*(k))
\frac{\chi(k) d^3k}{\sqrt{|k|}}.
\end{equation*}
The potential $V_g(x)$ is a small perturbation of $V(x)$ and the operator $G(x)$ is easy to control.  The new coupling function has better infrared behaviour for bounded $|x|$:  
\begin{equation}\label{chi-estim1}
|\varphi_{\lambda, x}(k)|
\le \const \min (1, \sqrt{|k|}\la x\ra).
\end{equation}

To prove the results above we first establish the spectral properties of the generalized Pauli-Fierz Hamiltonian $H^{PF}$ and then transfer the obtained information to the original Hamiltonian $H$.

%
%
%
\DETAILS{(The terms $gG$ and $V_g - V$ come from the commutator expansion
$e^{-ig F} H_f e^{ig F}$ $= - i g [F,H_f] - g^2 [F, [F,
H_f]]$.) Observe that the operator-family $A_1(x)$ is of the form

The fact that the operators $A_1$ and $G$ have better infra-red
behavior than the original vector potential $A$, is used in proving,
with a help of a renormalization group, the existence of the ground
state and resonances for the Hamiltonian $H_g^{SM}$.


We mention for further references that the operator \eqref{H^PFa} can be
written as
\begin{equation} \label{Hpf}
H_g^{PF} \ = \ H_{0}^{PF} \, + \, I_{g}^{PF} ,
\end{equation}
where $H^{PF}_{0}=H_0 + 2 g^2\sum_\lambda \int
|k||f_{x,\lambda}(k)|^2d^3k + g^2\sum_\lambda
\int{|\chi_\lambda(k)|^2\over |k|}d^3k$, with $H_0 := H_p+H_f$
and $I_{g}^{PF}$ is defined by this relation. Note that the operator
$I_{g}^{PF}$ contains linear and quadratic terms in the creation and
annihilation operators and that the operator $H^{PF}_{0}$ is of the
form $H^{PF}_{0}= H^{PF}_{p}+H_{f}$ where
\begin{equation} \label{Hpg}
H^{PF}_{p}:=H_{p}+2 g^2\sum_\lambda \int |k|
|f_{x,\lambda}(k)|^2d^3k + g^2\sum_\lambda
\int{|\chi_\lambda(k)|^2\over |k|}d^3k
\end{equation}
with $H_p$ given in \eqref{Hp}.

%

Since the operator $F(x)$ in \eqref{HPF} is self-adjoint, the
operators $H_g^{SM}$ and $H_g^{PF}$ have the same eigenvalues with
closely related eigenfunctions and the same essential spectra.}
%
%
%




\bigskip

\section{Renormalization Group Map}\label{sec:RG}


The \textit{renormalization map} is defined  on Hamiltonians acting on $\cH_f$ which as follows
\begin{equation}
\cR_{\rho }=\rho^{-1} S_{\rho}\circ F_{ \rho}, \label{RGmap}
\end{equation}
where $ \rho >0$, $S_\rho:
\cB[\cH] \to \cB[\cH]$ is the \textit{scaling transformation}:
%
%
%
\begin{equation} \label{Srho}
S_\rho(\one) \ := \ \one , \hspace{5mm} S_\rho (a^\#(k)) := \
\rho^{-3/2} \, a^\#( \rho^{-1} k) ,
\end{equation}
and  $F_{ \rho}$ is
the \textit{ Feshbach-Schur map}, or decimation, map,
\begin{equation} \label{Fesh}
 F_{\rho} (H ) \ := \ \chi_{\rho} (H  -
 H \bchi_{\rho}  (\bchi_{\rho} H \bchi_{\rho})^{-1} \bchi_{\rho} H) \chi_{\rho} ,
\end{equation}
where $\chi_{\rho}$ and $ \overline{\chi}_{\rho}$ is a pair of orthogonal projections, defined as
$$\chi_{\rho} =\chi_{H_{p\theta}=e_j}\otimes\chi_{H_f\le\rho}\ \quad \mbox{and}\ \quad \overline{\chi}_{\rho}:=\mathbf{1}- \chi_{\rho}.$$

\noindent\textbf{Remark.} For simplicity we defined the decimation map as the Feshbach-Schur map. Technically it is more convenient to use the \textit{smooth Feshbach-Schur map}, which we defined and discussed in Appendix \ref{sec:sfm}.   The smooth Feshbach-Schur map 
uses 'smooth' projections which form a partition of unity
$\chi_{\rho}^{2}+ \overline{\chi}_{\rho}^{2} = \mathbf{1}$, instead of true projections as defined above.

%

\bigskip

%
The map  $F_{ \rho}$ is \textit{isospectral} in the sense of the  following theorem:
\begin{theorem}\label{thm:isospF}
{\em \begin{itemize}
\item[(i)] $ \lambda \in \rho(H ) \Leftrightarrow 0 \in \rho(F_{\rho}
(H - \lambda))$; 
\item[(ii)]  
$H\psi = \lambda \psi\ \Longleftrightarrow$
$F_{\rho} (H - \lambda) \, \vphi = 0;$
\DETAILS{\item[(iii)] If $\vphi
\in \Ran\, \pi \setminus \{0\}$ solves $F_{\rho} (H - \lambda)
\, \vphi = 0$ then $\psi := Q_{\pi} (H - \lambda) \vphi \in \cH
\setminus \{0\}$ solves $H\psi = \lambda \psi$;}
\item[(iii)] 
$\dim \cern (H - \lambda) = \dim \cern F_{\rho} (H -
\lambda)$;
\item[(iv)] 
$(H - \lambda)^{-1}$ exists $ \Longleftrightarrow$ $F_{\rho} (H - \lambda)^{-1}$
exists.

%
\end{itemize}}
\end{theorem}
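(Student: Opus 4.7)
The approach is the standard Schur--Feshbach (block) calculus: I would prove all four items in one stroke by exhibiting an explicit LDU-type factorization of $H-\lambda$ with respect to the orthogonal decomposition $\cH=\Ran\chi_\rho\oplus\Ran\bchi_\rho$. Writing
$$H-\lambda=\begin{pmatrix}\chi_\rho(H-\lambda)\chi_\rho & \chi_\rho H\bchi_\rho\\ \bchi_\rho H\chi_\rho & B_\lambda\end{pmatrix},\quad B_\lambda:=\bchi_\rho(H-\lambda)\bchi_\rho,$$
the implicit hypothesis of the theorem is that $B_\lambda$ is boundedly invertible on $\Ran\bchi_\rho$, so that the defining formula for $F_\rho(H-\lambda)$ makes sense. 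In the application to QED this holds because $H_f\ge\rho$ on $\Ran\bchi_\rho$ dominates both $\lambda$ and the interaction piece of $H$, giving the relative bound $\|B_\lambda^{-1}\|=O(\rho^{-1})$.

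Given invertibility of $B_\lambda$, block Gaussian elimination yields
$$H-\lambda \;=\; L\,\begin{pmatrix}F_\rho(H-\lambda) & 0\\ 0 & B_\lambda\end{pmatrix}\,U,$$
where
$$L:=\begin{pmatrix}\one & \chi_\rho H\bchi_\rho\,B_\lambda^{-1}\\ 0 & \one\end{pmatrix},\qquad U:=\begin{pmatrix}\one & 0\\ B_\lambda^{-1}\bchi_\rho H\chi_\rho & \one\end{pmatrix}.$$
Both $L$ and $U$ are boundedly invertible (their inverses are obtained by flipping the sign of the off-diagonal block). Statements (i) and (iv) are then immediate: since $L$, $U$, and the $(2,2)$ entry $B_\lambda$ are all boundedly invertible, $(H-\lambda)$ is invertible on $\cH$ if and only if $F_\rho(H-\lambda)$ is invertible on $\Ran\chi_\rho$.

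For (ii) and (iii), the same factorization yields an explicit bijection between the kernels. If $\psi\in\cern(H-\lambda)$, then $\varphi:=\chi_\rho\psi\in\cern F_\rho(H-\lambda)$; conversely, given $\varphi\in\cern F_\rho(H-\lambda)\subset\Ran\chi_\rho$, the vector
$$\psi \;:=\; U^{-1}\begin{pmatrix}\varphi\\ 0\end{pmatrix} \;=\; \varphi-B_\lambda^{-1}\bchi_\rho H\,\varphi$$
satisfies $(H-\lambda)\psi=0$, as one checks by applying $L\cdot\mathrm{diag}(F_\rho(H-\lambda),B_\lambda)\cdot U$ to it and using $F_\rho(H-\lambda)\varphi=0$. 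A short computation shows these two maps are mutually inverse, which gives both the equivalence of the eigenvalue problems in (ii) and the kernel-dimension equality in (iii).

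The argument is entirely conceptual; the only substantive obstacle is technical bookkeeping to justify the invertibility of $B_\lambda$ on $\Ran\bchi_\rho$ and the boundedness of the off-diagonal maps $\chi_\rho H\bchi_\rho$ and $\bchi_\rho H\chi_\rho$. This reduces to a relative-bound estimate controlling the interaction by $H_f$ on the high-energy subspace, and is available from the explicit form of the (generalized) Pauli--Fierz Hamiltonian of Section~\ref{sec-PF-transf}. For the smooth Feshbach--Schur map of Appendix~\ref{sec:sfm}, where $\chi_\rho^2+\bchi_\rho^2=\one$ replaces projector orthogonality, the same LDU strategy applies after modifying $L$ and $U$ by the smooth partition; the isospectral conclusions are unchanged.
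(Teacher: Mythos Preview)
Your proposal is correct, and at bottom it is the same Schur--complement argument the paper gives, just packaged differently. The paper works directly with the operators $Q=\chi-\bchi H_{\bchi}^{-1}\bchi H\chi$ and $Q^{\#}=\chi-\chi H\bchi H_{\bchi}^{-1}\bchi$ (these are precisely your $U^{-1}$ restricted to $\Ran\chi$ and the corresponding piece of $L^{-1}$), proves the intertwining identities $HQ=\chi F$ and $Q^{\#}H=F\chi$, and then verifies the resolvent formula and the kernel bijections by hand. Your LDU factorization is cleaner in the projection case and makes (i)--(iv) drop out simultaneously. The paper's more pedestrian route is chosen for a reason, however: every line is written so that it still makes sense when $\chi,\bchi$ satisfy only $\chi^2+\bchi^2=\one$ and are not projections (Appendix~\ref{sec:sfm}). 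Your closing remark that the LDU strategy ``applies after modifying $L$ and $U$ by the smooth partition'' glosses over a real difficulty: in the smooth case there is no direct-sum decomposition $\cH=\Ran\chi\oplus\Ran\bchi$, hence no $2\times 2$ block matrix to factor, and one is driven back to the $Q,Q^{\#}$ identities the paper uses.
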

For the proof of this theorem as well as for the  relation between $\psi$ and $\varphi$ in (ii) and between $(H - \lambda)^{-1}$ and $F_{\rho} (H - \lambda)^{-1}$ in (iv) see Appendix \ref{sec:isosprel}.

\bigskip

\section{A Banach Space of Hamiltonians} \label{sec:Bansp}
\DETAILS{We construct a Banach space of Hamiltonians on which the
renormalization transformation will be defined. In order not to
complicate notation unnecessarily we will think about the creation-
and annihilation operators used below as scalar operators,
neglecting the helicity of photons.
We explain at the end of Supplement A how to reinterpret the
corresponding expression for the photon creation- and annihilation
operators.
%
%
 and $m,n \ge 0$. Given functions $w_{m,n}:
I\times B_1^{m+n} \rightarrow \mathbb{C}, m+n > 0$, we consider
monomials, $W_{m,n} \equiv W_{m,n}[w_{m,n}]$, in the creation and
annihilation operators defined as follows:}
%
%
%
We will study operators on the subspace $\rRan\chi_1$ of  the Fock space $\cF$. Such operators are said to be
in the \textit{generalized normal form} if they can be written as:
\begin{eqnarray} \label{H} 
H&=&\sum_{m+n\ge 0} W_{m+n},\\
W_{m+n}&=&\int_{B_1^{m+n}}  \prod_{i=1}^{m+n} d^3 k_i \; 
\prod_{i=1}^m a^*(k_i ) \, w_{m,n} \big( \hf ; k_{(m+n)} \big) \, \prod_{i=m+1}^{m+n} a(k_i ), \nonumber
\end{eqnarray}
where $B_1^r$ denotes the Cartesian product of $r$ unit balls in
$\RR^{3}$, $ k_{(m)} \: := \: (k_1, \ldots, k_m)$ and $w_{m,n}: I\times B_1^{m+n}\to \C,\ I:=[0,1]$.
We 
sometimes we display the dependence of $H$ and $W_{m,n}$ on the coupling functions $\uw:= (w_{m,n},\ m+n\ge 0)$ by writing $H[\uw ]$ and $W_{m,n}[\uw ]$.

We assume that 
the functions $w_{m,n}( r, , k_{(m+n)})$ are  
continuously differentiable in $r \in I$, symmetric w.~r.~t.\ the
variables $ (k_1, \ldots, k_m)$ and $ (k_{m+1}, \ldots, k_{m+n})$
and obey $\| w_{m,n} \|_{\mu,1} \ :=
\sum_{n=0}^{1} \|  \partial_r^n w_{m,n} \|_{\mu} \ < \ \infty ,
$
where $\mu \ge 0$ and
\begin{equation} \label{wmn}
\| w_{m,n} \|_{\mu } \ := 
\max_j \sup_{r \in I, k_{(m+n)} \in B_1^{m+n}} \big| | k_j|^{-\mu}\prod_{i=1}^{m+n} | k_i|^{1/2} 
w_{m,n}(r ; k_{(m+n)}) \big|.
\end{equation}
Here $k_j \in \mathbb{R}^3$ is the $j-$th  $3-$
vector in $k_{(m,n)}$ over which we take the supremum.
Note that these norms are anisotropic in the total momentum space.

\medskip

For $\mu \ge0$ and $0 < \xi < 1$ 
we define the Banach space 
\begin{equation} \label{Bmuxi}
\cB^{\mu\xi}:=\{H\ :\ \big\|  H \big\|_{\mu,\xi} \ := \ \sum_{m+n \geq 0}
\xi^{-(m+n)} \; \| w_{m,n} \|_{\mu, 1} \ < \ \infty \}.
\end{equation}
We mention some properties of these spaces 
\begin{itemize} 
\item For any $\mu \ge 0$ and $0 < \xi < 1$, the map $H : \uw \to H[\uw ]$,
given in \eqref{H}, is one-to-one.
\item If $ H$ is self-adjoint, then so are $W_{0,0}$ and $ \sum_{m+n \geq 1} \chi_1 W_{m,n}\chi_1$ (see \eqref{H}). 
\end{itemize}
\textbf{Remarks.} 1) Unlike the Banach spaces defined in \cite{bfs1, bfs2, bcfs1}, the Banach spaces are anisotropic in the momentum space. This is needed to overcome the problem of marginal operators which arise in the renormalization group approach. 

2) The self-adjointness statement follows from \cite{bcfs1}, Eq. (3.33).


\bigskip

\subsection{Basic bound} \label{sec:bnd}
The following bound shows that our Banach space norm control the operator norm and the terms with higher numbers of creation and annihilation operators make progressively smaller contributions:

\begin{theorem} 
{\em Let
$\chi_\rho\equiv\chi_{H_f\le\rho}$.  Then for all $\rho >0$ and $m+n \geq 1$}
\begin{equation}
\big\| \chi_\rho \, H_{m,n} \, \chi_\rho
 \big\|
\ \leq \ \frac{\rho^{m+n+\mu}}{\sqrt{m! \, n!} } \, \| w_{m,n}
\|_{\mu }.
\end{equation}
\end{theorem}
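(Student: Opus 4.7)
The approach follows the standard operator bound for Wick-ordered monomials on Fock space, adapted to the anisotropic norm $\|\cdot\|_\mu$ in \eqref{wmn}. For $\phi,\psi\in\Ran\chi_\rho$, I would expand the matrix element as
\begin{equation*}
\langle\phi, W_{m,n}\psi\rangle = \int_{B_1^{m+n}}\langle a(k_1)\cdots a(k_m)\phi,\, w_{m,n}(H_f; k)\, a(k_{m+1})\cdots a(k_{m+n})\psi\rangle\, dk_{(m+n)}
\end{equation*}
and estimate the three pieces (two products of annihilation operators and the middle scalar function of $H_f$) separately.

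The first key observation is that any vector in $\Ran\chi_{H_f\leq\rho}$ lies in the closure of $n$-photon sectors supported on configurations with $|k_i|\leq\rho$, since each photon contributes $|k_i|$ to $H_f$. Hence $a(k)\chi_\rho=0$ whenever $|k|>\rho$, so the effective integration domain shrinks to $\{|k_i|\leq\rho\}^{m+n}$, and the vectors on which $w_{m,n}(H_f;k)$ acts satisfy $H_f\leq\rho$. Consequently the operator $w_{m,n}(H_f;k)$ is bounded pointwise by $\sup_{r\in[0,\rho]}|w_{m,n}(r;k)|$, which, by the definition of $\|\cdot\|_\mu$ with the index $j^\ast$ chosen so that $|k_{j^\ast}|$ is smallest (and hence $\leq\rho$), is at most $\|w_{m,n}\|_\mu\,\rho^\mu\prod_i|k_i|^{-1/2}$. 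This accounts for the factor $\rho^\mu$.

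After pulling out this scalar bound, I would apply Cauchy--Schwarz to factor the $\phi$- and $\psi$-contributions and reduce to a one-sided estimate
\begin{equation*}
\int_{|k_i|\leq\rho}\prod_{i=1}^r|k_i|^{-1/2}\,\|a(k_1)\cdots a(k_r)\psi\|\, dk_{(r)} \leq C^r\,\frac{\rho^r}{\sqrt{r!}}\,\|\psi\|,\qquad\psi\in\Ran\chi_\rho,
\end{equation*}
valid for $r=m$ and $r=n$. This one-sided bound I would establish by induction on $r$, repeatedly applying Cauchy--Schwarz in $k$ together with the identity $\int|k|\,\|a(k)\psi\|^2\,dk=\langle\psi,H_f\psi\rangle\leq\rho\|\psi\|^2$. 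Each step extracts one power of $\rho$ from the $H_f$-bound and one from the weighted angular volume $\int_{|k|\leq\rho}|k|^{-2}\,d^3k\sim\rho$, while the combinatorial factor $(r!)^{-1/2}$ is produced by symmetrizing the integrand under permutations of the bosonic $k$-variables and using the canonical commutation relations.

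Putting the three ingredients together yields $\|w_{m,n}\|_\mu\cdot\rho^\mu\cdot(\rho^m/\sqrt{m!})\cdot(\rho^n/\sqrt{n!})$, which is exactly the claimed bound. The main obstacle is the iterative Cauchy--Schwarz in the third step: the weights must be balanced so that each annihilation operator extracts precisely one power of $\rho$, and the factorial $(r!)^{-1/2}$ must be recovered from bosonic symmetry rather than from a spurious factorial in the volume integration --- otherwise the estimate would degrade to $\rho^{r/2}$ or carry the wrong combinatorial weight.
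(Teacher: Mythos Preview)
Your approach is correct and essentially the same as the paper's: both localize the $k$-integration via the energy cutoff, bound the middle factor $w_{m,n}(H_f;k)$ pointwise using the $\|\cdot\|_\mu$ norm, and then apply Cauchy--Schwarz together with the identity $\int|k|\,\|a(k)\psi\|^2\,dk=\langle\psi,H_f\psi\rangle\leq\rho\|\psi\|^2$. The only tactical difference is that the paper invokes the pull-through formulae $a(k)f(H_f)=f(H_f+|k|)a(k)$ explicitly (writing $\chi_\rho=\chi_\rho\chi_{2\rho}$ and pulling $\chi_{2\rho}$ inward) to obtain the restriction $|k_i|\leq 2\rho$, whereas you argue directly from the support of Fock vectors in $\Ran\chi_{H_f\leq\rho}$ to get the slightly sharper $|k_i|\leq\rho$; these are equivalent observations, since your support claim $a(k)\chi_\rho=0$ for $|k|>\rho$ is exactly the pull-through identity $a(k)\chi_{H_f\leq\rho}=\chi_{H_f\leq\rho-|k|}a(k)$ read on the level of supports.
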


\noindent \textit{Sketch of proof.}
For simplicity we prove this inequality for $m=n=1$. Let $\phi\in \cF$ and $\Phi_k=a(k ) \chi_\rho\phi$.  We have
\begin{eqnarray} \nonumber
\la   \chi_\rho\phi, H_{1,1} \, \chi_\rho\phi\ra &=& \int_{B_1^{2}}  \prod_{i=1}^{2} d^3 k_i \; 
\la  \Phi_{k_1},  \, w_{1,1} \big( \hf ; k_1, k_2 \big) \,  \Phi_{k_2}\ra.
\end{eqnarray}
Now we write $\chi_\rho=\chi_\rho\chi_{2\rho}$ and pull  $\chi_{2\rho}$ toward $w_{1,1} ( \hf ; k_1, k_2)$ using the pull-trough formulae
$$a(k) \, f(\hf) \ = \ f( \hf + |k| ) \, a(k),\
f(\hf) \, a^*(k) \ = \ a^*(k) \, f( \hf +|k|)$$
(see Appendix \ref{sec:pullthrough}). This gives
\begin{eqnarray*}
&&|\la \phi, \chi_\rho \, H_{1,1} \, \chi_\rho\phi\ra|\\
&=& |\int_{|k_i|\le 2\rho, i=1,2}  \prod_{i=1}^{2} d^3 k_i \; \la \Phi_{k_1}, \chi_{2\rho-|k_1|} w_{1,1} \big( \hf ; k_1, k_2 \big) \chi_{2\rho-|k_2|}  \Phi_{k_2}\ra|\\
 &\le& \int_{|k_i|\le 2\rho, i=1,2}  \prod_{i=1}^{2} d^3 k_i \; \| \Phi_{k_1}\|\ \| w_{1,1} \big( \hf ; k_1, k_2 \big) \|\ \|  \Phi_{k_2}\| \\
 &\le& \bigg(\int_{|k_i|\le 2\rho, i=1,2}  \prod_{i=1}^{2} d^3 k_i \frac{\| w_{1,1} \big( \hf ; k_1, k_2 \big) \|^2}{|k_1||k_2|}\bigg)^{1/2}\int d^3 k\|\sqrt{|k|}  \Phi_{k}\|^2.
\end{eqnarray*}
Now, using $\| w_{1,1} \big( \hf ; k_1, k_2 \big) \|\le \| w_{1,1}  \|_\mu\frac{|k_1|^{\mu}+|k_2|^{\mu}}{|k_1|^\frac{1}{2}|k_2|^\frac{1}{2}}$ and $\int d^3 k\|\sqrt{|k|}  \Phi_{k}\|^2=\|\sqrt{H_f}  \chi_\rho\phi\|^2$, we find
$$|\la \phi, \chi_\rho \, H_{1,1} \, \chi_\rho\phi\ra|\lesssim \rho^{2+\mu} \| w_{1,1}\|_{\mu }.$$

\bigskip

\subsection{Unstable, neutral and stable components} \label{sec:comp}

We decompose $H\in \cB^{\mu\xi}$ into the components 
$E:=\la\Om, H\Om\ra,\ T:=H_{0,0}-\la\Om, H\Om\ra,\ 
W:=\sum_{m+n \geq 1} W_{m,n},$ 
so that
\begin{equation}\label{Hsplit}
  H=E\one + T + W.
 \end{equation}
If we assume $\sup_{r \in [0,\infty)}| T'(r) - 1 | \ll 1$, then we have $T \sim H_f$. These Hamiltonian components scale as follows
\begin{itemize}
\item 
$\rho^{-1} S_\rho \big( \hf \big) \ = \  \hf $ 
($\hf$ is a \emph{fixed point} of $\rho^{-1} S_\rho$);

\item  $\rho^{-1}  S_\rho(E \cdot \one) =
\rho^{-1} E \cdot \one$ ($E \cdot \one$ \emph{
expand} under $\rho^{-1} S_\rho$ at a rate $\rho^{-1}$);

\item $
 \| S_\rho(W_{m,n}) \|_{\mu} \le \rho^{\alpha} \, \| w_{m,n} \|_{\mu},\ \alpha:=m+n- 1 +\mu\del_{m+n=1}$ (
 $W_{mn}$ \textit{contract} under $\rho^{-1} S_\rho$, if $\mu> 0$).

\end{itemize}
Thus for $\mu >0$, $E,\ T,\ W$ behave, in the terminology of the renormalization group approach, as relevant, marginal, and irrelevant operators, respectively. For $\mu =0$, the operators $W_{mn},\ m+n=1,$ become marginal.

\bigskip

\section{Action of Renormalization Map} \label{sec:actionRG}

To control the components $E, T, W$ of $H$ we introduce, for $\alpha, \beta, \gamma >0$, the following polydisc:
\begin{eqnarray} 
\cD^{\mu}(\alpha,\beta,\gamma) && :=  \Big\{ H=E+T+W \in
\cB^{\mu\xi} \ | \ |E|\leq\alpha, \nonumber\\
 &&\sup_{r \in [0,\infty)}| T'(r) - 1 | \leq
\beta,\  \| W \|_{\mu, \xi}\leq\gamma \Big\}.\nonumber
\end{eqnarray}
(Strictly speaking we should write 
$\chi_{H_{p\theta}=e_j}\otimes \cD^{\mu}(\alpha,\beta,\gamma)$ instead of $\cD^{\mu}(\alpha,\beta,\gamma)$ (for various sets of parameters $\alpha,\beta,\gamma$) in the statement below.)
\begin{theorem} \label{thm:act-RGmap}
{\em Let $ 0<\rho<1/2,\ \alpha,\ \beta,\ \gamma \le \rho/8$ and $\mu_0=1/2$. Then there is $c>0$, s.t.

\begin{itemize}
\medskip
\item $\cR_\rho(H^{PF}_\theta)\in \cD^{\mu_0}(\al_0, \beta_0, \gamma_0),\ \al_0=c g^2 \rho^{\mu_0-2},  \beta_0=c g^2\rho^{\mu_0 -1}$,  $\gamma_0 =c g\rho^\mu_0$,

    provided $g \ll 1$;

\item $ \cD^{\mu}(\alpha,\beta,\gamma)\subset D(\cR_\rho)$, provided $\mu>0$;

\item
    $\cR_\rho :\cD^{\mu}(\alpha,\beta,\gamma)\rightarrow \cD^{\mu}(\alpha',\beta',\gamma'),$
    continuously, with 

  $\alpha'=\rho^{-1}\alpha+c\lb\gamma^2/2\rho\rb, \beta'=\beta+c\lb\gamma^2/2\rho\rb,\ \gamma'=c\rho^\mu\gamma.$
\medskip

\end{itemize}}
\end{theorem}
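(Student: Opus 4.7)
The plan is to write $F_\rho(H)$ as a Neumann series in the interaction $W$ obtained by expanding the resolvent $(\bchi_\rho H\bchi_\rho)^{-1}$ about its free part, bring each resulting monomial to the generalized normal form \eqref{H} by Wick ordering, estimate every kernel in $\|\cdot\|_{\mu,1}$, and then apply the scaling $\rho^{-1}S_\rho$ term by term. The bounds in the first item are obtained by specializing the third to the input $H^{PF}_\theta$, whose coupling functions inherit the improved infrared behaviour \eqref{chi-estim1} with exponent $\mu_0=1/2$.

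Fix $H = E\one + T + W \in \cD^\mu(\alpha,\beta,\gamma)$ with $\alpha,\beta,\gamma \le \rho/8$, and write $\bchi_\rho H\bchi_\rho = K_0 + \bchi_\rho W\bchi_\rho$ with $K_0 := \bchi_\rho(E+T)\bchi_\rho$. On $\Ran\bchi_\rho$ one has $\hf \ge \rho$, and the constraint $|T'(r)-1|\le \beta$ together with $T(0)=0$ gives $T(r)\ge (1-\beta)r$, so $K_0\ge (1-\beta)\rho - |E|\ge \rho/2$ on $\Ran\bchi_\rho$ and in particular $\|K_0^{-1}\bchi_\rho\|\le 2/\rho$. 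Combined with the basic bound of Section~\ref{sec:bnd}, which yields $\|\chi_\rho W\chi_\rho\|\le \rho^\mu\|W\|_{\mu,\xi}\le \rho^\mu\gamma$ for $\rho\le\xi$, this makes the Neumann series
\begin{equation*}
  (\bchi_\rho H\bchi_\rho)^{-1} \ = \ \sum_{L\ge 0}(-K_0^{-1}\bchi_\rho W\bchi_\rho)^L K_0^{-1}
\end{equation*}
norm-convergent on $\Ran\bchi_\rho$, which proves the second item.

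Substituting this expansion into $F_\rho(H) = \chi_\rho(E+T)\chi_\rho + \chi_\rho W\chi_\rho - \chi_\rho W\bchi_\rho(\bchi_\rho H\bchi_\rho)^{-1}\bchi_\rho W\chi_\rho$ (where $\chi_\rho(E+T)\bchi_\rho=0$ since $E+T$ is a function of $\hf$ and commutes with $\chi_\rho$) produces a sum of Wick monomials, each a product of factors $W_{p_i,q_i}$ separated by resolvent factors $K_0^{-1}\bchi_\rho$. Using the pull-through formulae of Appendix~\ref{sec:pullthrough}, I would move every creation operator to the left and every annihilation operator to the right, at the cost of shifting each $\hf$ appearing in the kernels $w_{p_i,q_i}$ and in the intermediate resolvents by the appropriate sum of $|k|$'s; this expresses the symbol $w^\#_{m,n}(r;k_{(m+n)})$ of $F_\rho(H)$ as a finite sum over contraction diagrams of momentum integrals. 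The $\|\cdot\|_{\mu,1}$-norm of each diagram would then be controlled by (i) extracting the $|k_j|^{-\mu}$ anisotropy from the single external leg realizing the supremum, (ii) computing $\partial_r$ by Leibniz, noting that $\partial_r K_0^{-1} = -K_0^{-2}(1+\partial_r T)$ contributes $O(\rho^{-1})$, and (iii) bounding every internal momentum integral exactly as in the proof of the basic bound of Section~\ref{sec:bnd}. The resulting geometric series has ratio $O(\gamma\rho^{\mu-1})\ll 1$, so the interaction contribution to $\|F_\rho(H) - \chi_\rho(E+T+W)\chi_\rho\|_{\mu,\xi}$ is $O(\gamma^2/\rho)$. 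Applying finally the scaling identities of Section~\ref{sec:comp} --- $\rho^{-1}S_\rho(\hf)=\hf$, $\rho^{-1}S_\rho(E\one)=\rho^{-1}E\one$, and $\|S_\rho W_{m,n}\|_\mu\le \rho^{m+n-1+\mu\del_{m+n=1}}\|w_{m,n}\|_\mu$ --- yields $\alpha' = \rho^{-1}\alpha + C\gamma^2/\rho$, $\beta' = \beta + C\gamma^2/\rho$ and $\gamma' = C\rho^\mu\gamma$, proving the third item.

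The first item follows by specializing to $H = H^{PF}_\theta$. A direct reading of \eqref{H^PF} shows that $H^{PF}_\theta$ already has the form \eqref{H} with $E = O(g^2)$, $T = e^{-\theta}\hf + O(g^2)$ and kernels only of types $(m,n)\in\{(1,0),(0,1),(2,0),(1,1),(0,2)\}$; the improved IR estimate \eqref{chi-estim1} then yields $\|w_{m,n}\|_{\mu_0,1} = O(g^{m+n})$ with $\mu_0=1/2$, so the input lies in a polydisc with parameters of order $g$. A single application of the third item, together with the observation that the leading $O(g)$ terms $W_{1,0}, W_{0,1}$ contribute to $E$ only at second order through a single Feshbach contraction against $K_0^{-1}$, yields $\alpha_0 = cg^2\rho^{\mu_0-2}$, $\beta_0 = cg^2\rho^{\mu_0-1}$, $\gamma_0 = cg\rho^{\mu_0}$ for $g\ll 1$. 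The main obstacle throughout is the Wick-ordering estimate (iii): each contraction integrates out a momentum variable against a $|k|^{-\mu}$ weight, and the delicate combinatorial bookkeeping required to prove that the distinguished external leg realizing the supremum in $\|w^\#_{m,n}\|_\mu$ still carries a clean $|k_j|^{-\mu}$ factor --- while simultaneously controlling $\partial_r$ through products of resolvents $K_0^{-1}$ and $r$-dependent kernels --- is what supplies the crucial extra $\rho^\mu$ contraction factor for the marginal operators $W_{m,n}$ with $m+n=1$ and thereby renders them strictly irrelevant under $\cR_\rho$, in contrast to the isotropic situation of \cite{bfs1,bfs2,bcfs1}.
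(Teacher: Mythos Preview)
Your overall architecture---Neumann expansion of the Feshbach--Schur map about $K_0=\bchi_\rho(E+T)\bchi_\rho$, Wick ordering via pull-through, then scaling---matches the paper's. The domain argument for item~(ii) is essentially identical to the paper's.

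The genuine gap is in the Wick-ordering estimate. You write that the $\|\cdot\|_{\mu,1}$-norm ``of each diagram would then be controlled'' and that ``the resulting geometric series has ratio $O(\gamma\rho^{\mu-1})$''. But the number of contraction diagrams contributing to the $(m,n)$-kernel of $W(K_0^{-1}\bchi_\rho W)^s$ grows like $s!$, so a diagram-by-diagram majoration produces a series of the form $\sum_s s!\,(C\gamma/\rho)^s$, which diverges. The paper flags this explicitly as the main obstacle. Its resolution is \emph{not} to estimate each diagram separately, but to re-sum all contractions for fixed $(m,n)$ into a single vacuum expectation:
\[
w^{(s)}_{m,n}(r;k_{(m+n)}) \ \sim \ \big\la\Omega,\ \big[W\big(K_0^{-1}\bchi_\rho^2 W\big)^s\big]_{m,n}\Omega\big\ra,
\]
where $[\cdot]_{m,n}$ denotes the operator with $m$ prescribed creation operators and $n$ prescribed annihilation operators stripped off. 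This packages the $s!$ terms back into a single operator product, which is then estimated in one stroke as $\lesssim\|\chi_{\rho'}W'\chi_{\rho'}\|^{s+1}$, restoring the geometric series. Your sketch never invokes this re-summation, and the phrases ``finite sum over contraction diagrams'' and ``norm of each diagram'' indicate you intend the divergent route. Your final paragraph acknowledges ``delicate combinatorial bookkeeping'' but does not supply the mechanism that actually beats the factorial; without it, step~(iii) of your outline fails and the claimed bounds $\alpha',\beta',\gamma'$ do not follow.
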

%
\DETAILS{\begin{theorem} {\em Let 
$\mu>0$. Then for an absolute constant $c$ and
for any $ 0<\rho<1/2,\ \alpha,\beta \le \frac{\rho}{8},$
$\gamma \le \frac{\rho}{2c}$, we have that
\begin{equation*}
\cR_\rho 
:\cD^{\mu,s}(\alpha,\beta,\gamma)\rightarrow
\cD^{\mu,s}(\alpha',\beta',\gamma'),
\end{equation*}
continuously, with 
\begin{equation*}
\alpha'=\alpha+c\lb\gamma^2/2\rho\rb,
\beta'=\beta+c\lb\gamma^2/2\rho\rb,\ \gamma'=30
c^2\rho^\mu\gamma.
\end{equation*}}
\end{theorem}}
%
%
\noindent \textit{Sketch of proof of the second and third properties.} 1) 
$\cD^{\mu}(\rho/8, 1/8, \rho/8)\subset D(\cR_\rho)$. Since $W:= H-E-T$ defines a bounded operator on $\cF$, we
only need to check the invertibility of $H_{\tau \chi_\rho}$ on
$\Ran \,\bchi_\rho$. The operator $E+T$ is
invertible on $\Ran \,\bchi_\rho$:
for all $r \in [3\rho/4, \infty)$
\begin{eqnarray*} 
\Re\ T(r) + \Re\ E & \geq & r \, - \, | T(r) - r | \, - \, |E|
\nonumber \\ & \geq & r \big( 1 \, - \, \sup_{r} | T'(r) - 1 | \big) \: - \:  |E| \nonumber \\
& \geq & \frac{3 \, \rho}{4} ( 1 - 1/8 ) \: - \: \frac{\rho}{8} \
\geq \ \frac{ \rho}{2} \nonumber \\
&\Rightarrow & E+T\ \mbox{is invertible and}\ \|(E+T)^{-1}\|
\le 2/\rho.
\end{eqnarray*}
Now, by the basic estimate, $\big\| W \|\leq
 \rho/8$ and therefore,
\begin{eqnarray*}  && \big\|\bchi_\rho W \bchi_\rho (E +
T)^{-1}\|\leq 1/4  \\
&\Rightarrow &  
E + T+\bchi_\rho W \bchi_\rho\  \mbox{is invertible on}\  \Ran \,\bchi_\rho \\
&\Rightarrow & \cD^{\mu}(\rho/8, 1/8, \rho/8)\subset D(F_\rho)=D(\cR_\rho).
\end{eqnarray*}



2) $\cR_\rho :\cD^{\mu}(\alpha,\beta,\gamma)\rightarrow \cD^{\mu}(\alpha',\beta',\gamma')$ (normal form of $\cR_{\rho}(H)$).  Recall that $\chi_\rho \equiv \chi_{H_f \le
\rho}$ and $\bchi_\rho:= 1 -\chi_\rho$. Let $H_0:=E+T$, so that $H=H_0+W$. We have shown above
\begin{equation*}
\big\|  H_0^{-1} \bchi_\rho \big\| \ \leq \ \frac{2}{\rho}
\hspace{5mm} \mbox{and} \hspace{5mm} \| W \| \ \leq \ \frac{\rho}{8}.
\end{equation*}
In the Feshbach-Schur map, $F_{\rho}$,
%
\begin{eqnarray*}
 F_{\rho} \big( H \big) \ = \  \chi_\rho \big(H_0+W
 -  W \, \bchi_\rho \big( \bchi_\rho (H_0+W)
\bchi_\rho \big)^{-1} \bchi_\rho \, W \,\big) \chi_\rho ,
\end{eqnarray*}
 we expand the resolvent $( \bchi_\rho (H_0+W)
\bchi_\rho \big)^{-1}$ 
in the norm convergent Neumann series
\begin{equation*}
 F_{\rho} \big( H \big) \ = \ \chi_\rho\big[H_0  + \sum_{s=0}^\infty
(-1)^{s} \,  W \big(  H_0^{-1}\bchi_\rho^2 \; W
\big)^{s}\big] \chi_\rho .
\end{equation*}
Next, we transform the right side to the generalized normal form using generalized Wick's theorem.

\bigskip

\textbf{Generalized Wick's theorem.} To write the product $W \big(  H_0^{-1}\bchi_\rho^2 \; W\big)^{s}$
in the generalized normal form we  pull the annihilation operators, $a$, to the right and the creation operators, $a^*$, to the left, apart from those which enter $H_f$. We use the rules (see Appendix \ref{sec:pullthrough}): 
 $$a(k)a^*(k') = a^*(k')a(k) +\delta(k-k'),$$
\begin{equation*}
a(k) \, f(\hf) \ = \ f( \hf + |k| ) \, a(k),\
f(\hf) \, a^*(k) \ = \ a^*(k) \, f( \hf +|k|)  .
\end{equation*}
Some of the creation and annihilation operators reach the extreme
left and right positions, while the remaining ones contract (see the figure below). The
terms with $m$ creation operators on the left
and $n$ annihilation operators on the right
contribute to the $(m,n)-$ formfactor, $w^{(s)}_{m,n}$, of the
operator $W \big(  H_0^{-1}\bchi_\rho^2 \; W\big)^{s}$. As the result we obtain the generalized normal form of $F_\rho(H)$:
\begin{equation*}
F_\rho(H) = \ \sum_{m+n\ge0}
W^{'}_{m,n}.
\end{equation*}
The term $W^{'}_{0,0}=\la W^{(s)}_{0,0}\ra_\Om +(W^{(s)}_{0,0}-\la W^{(s)}_{0,0}\ra_\Om)$ contributes the corrections to $E+T$.

%
%
\DETAILS{
\begin{center}
\psset{unit=1cm} \pspicture(-4,-5)(8,3.5)

\pscustom[linestyle=none,fillstyle=solid,fillcolor=lightgray]{
\psbezier(2,1)(3,0.5)(4,0)(4.5,-1.5)
\psbezier[liftpen=1](0,-4.5)(-0.25,-3)(-1,-2)(-2,-1)}
\psbezier[linewidth=0.5pt,linestyle=dashed](1,0.5)(2.2,-0.1)(3,-0.5)(3.5,-2.25)
\rput(0.5,-3.5){$\mathcal{M}_s$} \rput(0.5,0.8){$w H_f$}

\psset{linewidth=0.5pt} \psline(-3,-1.5)(3,1.5)
\rput(3.5,1.5){$\mathcal{M}_{fp}$}

\psline(0,-1)(0,2)\rput(0,2.3){$\mathcal{M}_u$}

\psline[linewidth=0.5pt,linestyle=dashed](1,0.5)(1,2.5)
\psbezier[linewidth=1pt]{->}(3,-0.7)(2,0.5)(1.2,0.5)(1.2,2)
\qdisk(1.4,0.97){2pt}\psline[linewidth=0.3pt]{<-}(1.5,1.1)(2.2,2)
\rput(3,2.2){$\mathcal{R}^{n}_{\rho}(H_\theta-\lambda)$}

\psline[linewidth=0.5pt,linestyle=dashed](3,0.2)(3,-0.7)
\qdisk(3,0.2){2pt}\qdisk(3,-0.7){2pt}\rput(3.4,0.2){$H_\theta$}
\psline[linewidth=0.3pt]{<-}(3.1,-0.7)(4.5,-0.5)\rput(5.2,-0.5){$H_\theta-\lambda$}
\endpspicture

Stable and unstable manifolds.
\end{center} }
%
%

\bigskip

\includegraphics[height=4cm]{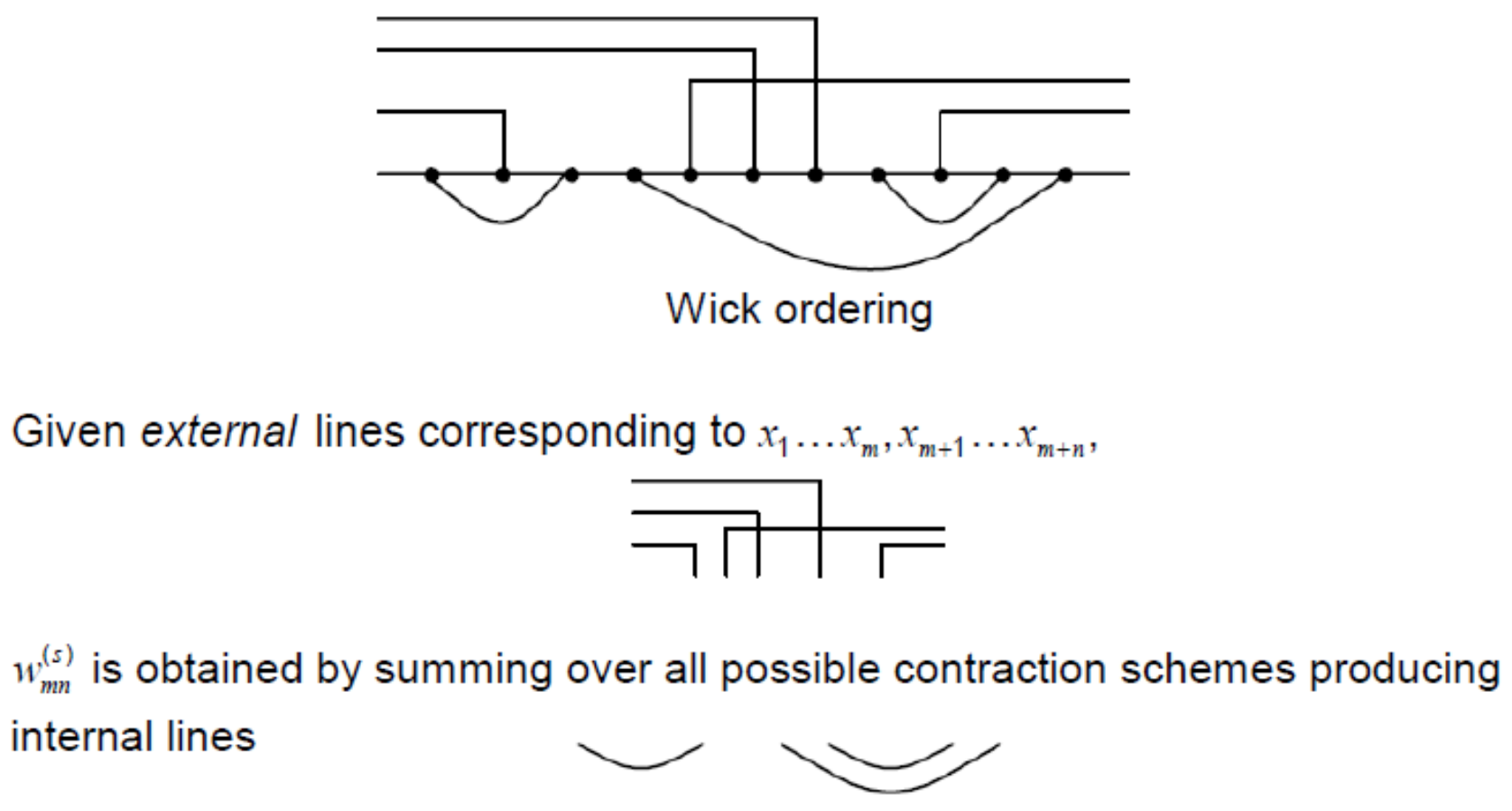}  
\bigskip

This is the standard way for proving the Wick theorem, taking into account the presence of $H_f-$ dependent factors. (See \cite{bfs1} for a different, more formal proof.)

\bigskip


\textbf{Estimating formfactors.} The problem here is that the number of terms generated by various contractions is 
 $O(s!)$. 
 Therefore a simple majoration of the series for  the $(m,n)-$ formfactor, $w^{(s)}_{m,n}$, of the operator $W \big(  H_0^{-1}\bchi_\rho^2 \; W\big)^{s}$ will diverge badly.

To overcome this we re-sum the series 
by, roughly, representing the sum over all contractions, for a given $m$ and $n$, as
$$w^{(s)}_{m,n}\sim \la\Omega, [W \big(  H_0^{-1}\bchi_\rho^2 \; W\big)^{s}]_{m,n} \mbox{}\Omega\ra,$$
where $[W \big(  H_0^{-1}\bchi_\rho^2 \; W\big)^{s}]_{m,n}$ is $W \big(  H_0^{-1}\bchi_\rho^2 \; W\big)^{s}$, with $m$ escaping creation operators and $n$ escaping annihilation operators deleted. 
Now the estimate of $w^{(s)}_{m,n}$ is straightforward and can be written (symbolically) as
$$\|w^{(s)}_{m,n}\|\lesssim \|\chi_{\rho'} \; W'\chi_{\rho'} \|^{s+1},$$
( for the operator norm, and similarly for $\cB^{\mu,\xi}-$norm.

\bigskip

\section{Renormalization Group}


To analyze spectral properties of individual
operators in $\cB^{\mu\xi}$ we use the discrete semi-flow, $\cR_{\rho}^n, n \ge 1$
(called renormalization group) , generated by the renormalization transformation, $\cR_{\rho}$.
By Theorem \ref{thm:act-RGmap}, in order to iterate $\cR_\rho$ we have to control the expanding direction: $\cR_{\rho}(\z\id)=\rho^{-1}\z\id.$
To control this direction, we adjust, inductively, at each step the constant component $\la H\ra_\Omega :=\la \Omega, H\Omega\ra$ of the initial Hamiltonian, $H$: 
$$|\la H\ra_\Omega - e_{n-1}| \leq \frac{1}{12}  \rho^{n+1},$$
\begin{equation*}
e_{n-1}\ \mbox{is the unique zero of the function}\ \lam\to \big\la\cR_\rho^{n-1}(H-\la H\ra_\Omega+\lambda)\big\ra_\Omega,
\end{equation*}
so that  $$H \in\ \quad \mbox{the domain of}\ \quad \cR_{\rho}^n.$$
This way one adjusts the initial conditions closer and closer to the stable manifold  $\mathcal{M}_s=\cap_n D(\cR_{\rho}^n)$.
\DETAILS{We have shown that $\cR_{\rho}$

\begin{itemize}
\item contracts, for $\mu>0$, in the co-dimension $2$: $\|\cR_{\rho}(H)-\cR_{\rho}(E+T)\|_\mu\le c\rho^\mu\|W\|_\mu$;

\item has fixed points $\z H_f,\ \z \in \C$:  $\cR_{\rho}(\z H_f)=\z H_f$; 

\item  expands along a subspace of  complex dimension $1$; $\cR_{\rho}(\z\id)=\rho^{-1}\z\id$.
\end{itemize}
To control the expanding direction, we adjust the parameter $\la H\ra_\Omega$ inductively,
$$|\la H\ra_\Omega - e_{n-1}| \leq \frac{1}{12}  \rho^{n+1},$$
\begin{equation*}
e_{n}\ \mbox{is the unique zero of the function}\ \big\la\cR_\rho^{n}(H-\la H\ra_\Omega+\lambda)\big\ra_\Omega,
\end{equation*}
so that  $H \in$ the domain of $\cR_{\rho}^n$.}
%
\begin{center}
\includegraphics[height=6.0cm]{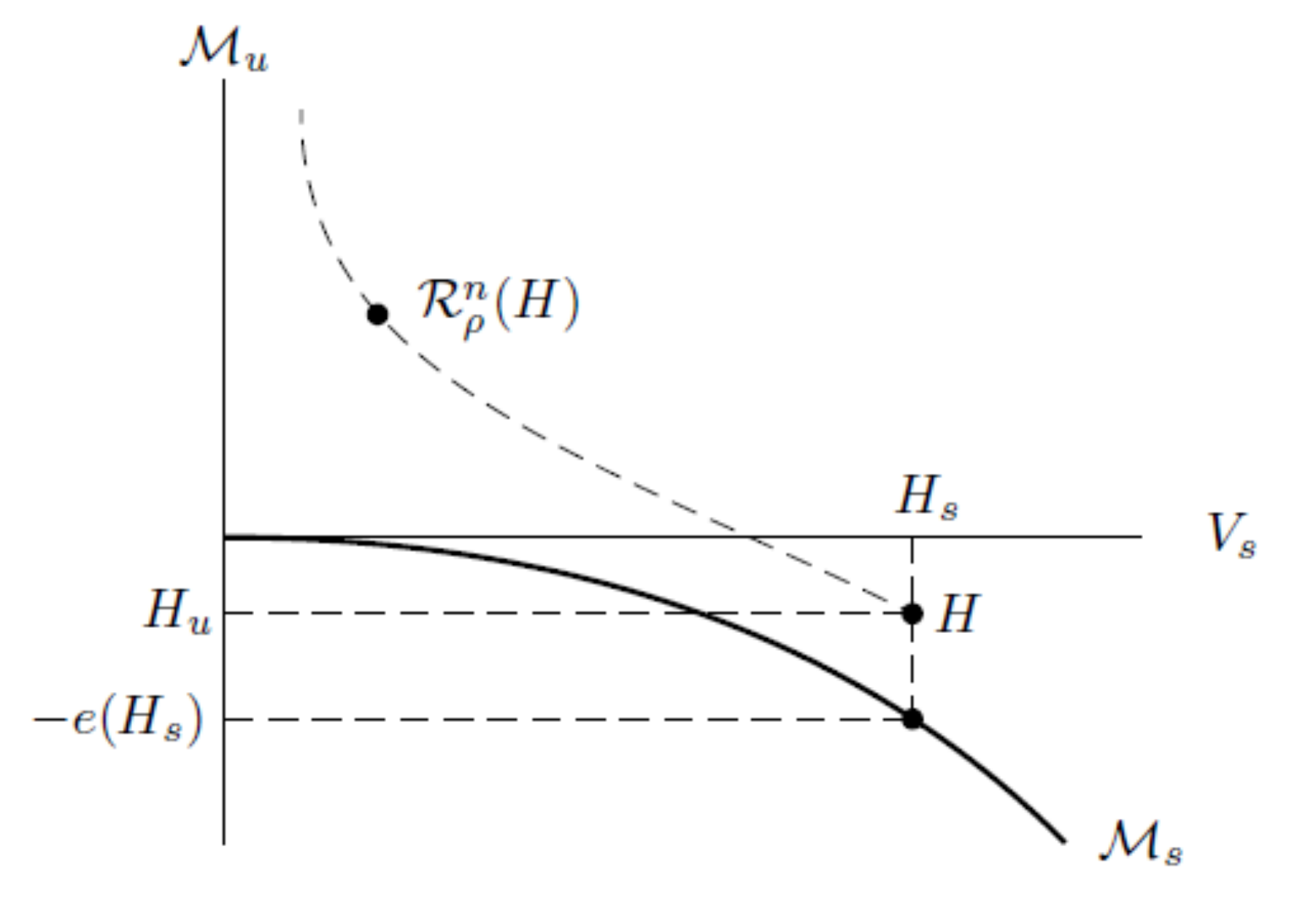}
\end{center}
%
%
\DETAILS{
\begin{center}
\psset{unit=1cm}
\begin{pspicture}(-1,-3)(8,3.5)

\psset{linewidth=0.5pt} \psline(0,-2)(0,3)
\psline(0,0)(6,0) 

\psbezier[linewidth=1pt](0,0)(2,0)(4,-0.5)(5.5,-2) 
\psbezier[linewidth=0.5pt,linestyle=dashed](4.5,-0.5)(2.5,0.5)(0.5,1)(0.5,2.8)
\qdisk(1,1.46){2pt}

\qdisk(4.5,-0.5){2pt} \qdisk(4.5,-1.19){2pt}
\psline[linewidth=0.5pt,linestyle=dashed](4.5,0)(4.5,-1.2) 
\psline[linewidth=0.5pt,linestyle=dashed](0,-0.5)(4.5,-0.5)
\psline[linewidth=0.5pt,linestyle=dashed](0,-1.19)(4.5,-1.19)
\rput(-0.7,-1.19){$-e(H_s)$}
\rput(1.8,1.5){$\mathcal{R}^{n}_{\rho}(H)$}

\rput(-0.3,-0.5){$H_u$} \rput(4.8,-0.5){$H$}\rput(4.6,0.25){$H_s$}

\rput(6,-2){$\mathcal{M}_s$} \rput(0,3.2){$\mathcal{M}_u$}

\rput(6.6,0){$V_s$} 
\end{pspicture}

$\mathcal{R}^{n}_{\rho}(H)$ converges to the unstable-fixed point manifold $\mathcal{M}_{ufp}:=\{\z_1\id+\z_2H_f\}$.

Here $V_s:=\{\sum_{m+n\ge 1}W_{mn}\},\ H_u:=\la H\ra_\Omega$ and $H_s:=H-\la H\ra_\Omega$.
\end{center}}
%
%

\bigskip

The procedure above leads to the following results:
\begin{itemize}
\item $\cR_{\rho}^n$  has the fixed-point manifold $\cM_{fp}:=\mathbb{C}H_f$;

\item $\cR_{\rho}^n$  has  an unstable manifold $\cM_{u}:=\mathbb{C}\one$;

\item  $\cR_{\rho}^n$  has a (complex) co-dimension $1$ stable
manifold $\cM_s$ for $\cM_{fp}$;

\item  $\cM_s$ is  foliated by (complex) co-dimension
$2$ stable manifolds for each fixed point.
\end{itemize}
%
%
\DETAILS{\begin{center}

\includegraphics[height=4.5cm]{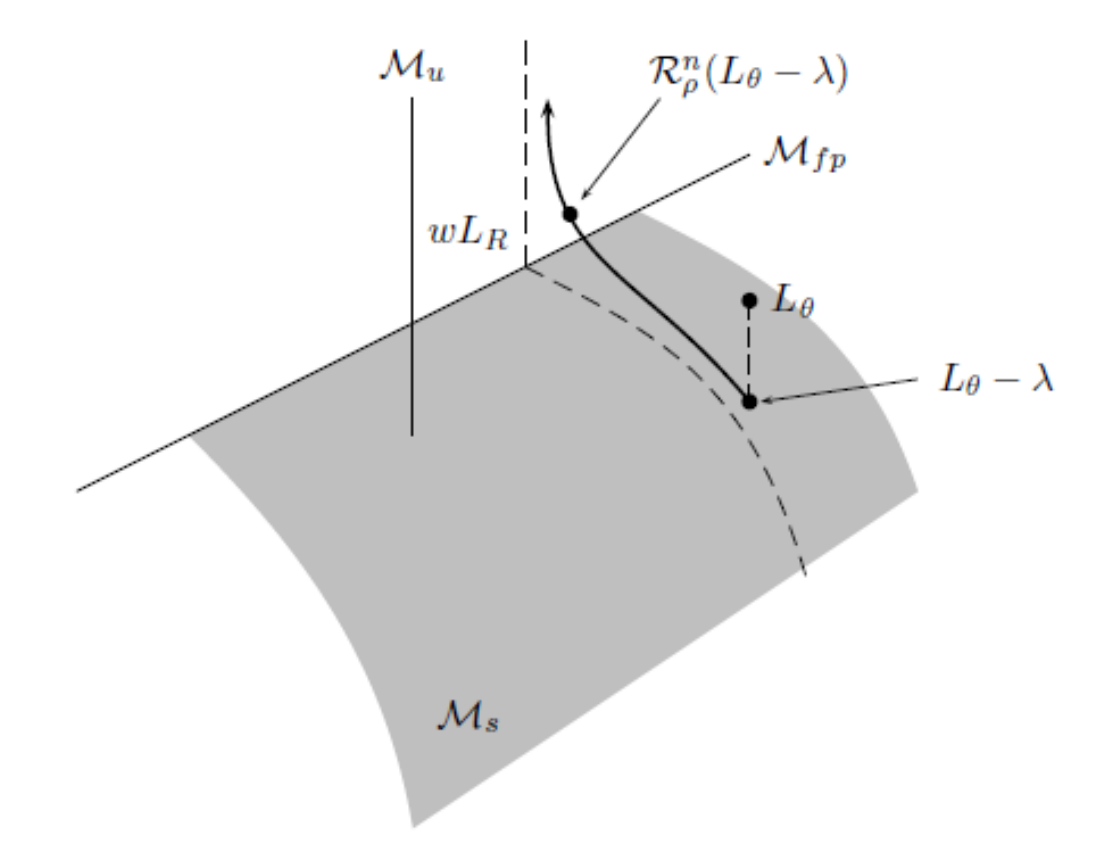}

Stable and unstable manifolds.
\end{center}}
\begin{center}
\includegraphics[height=6.0cm]{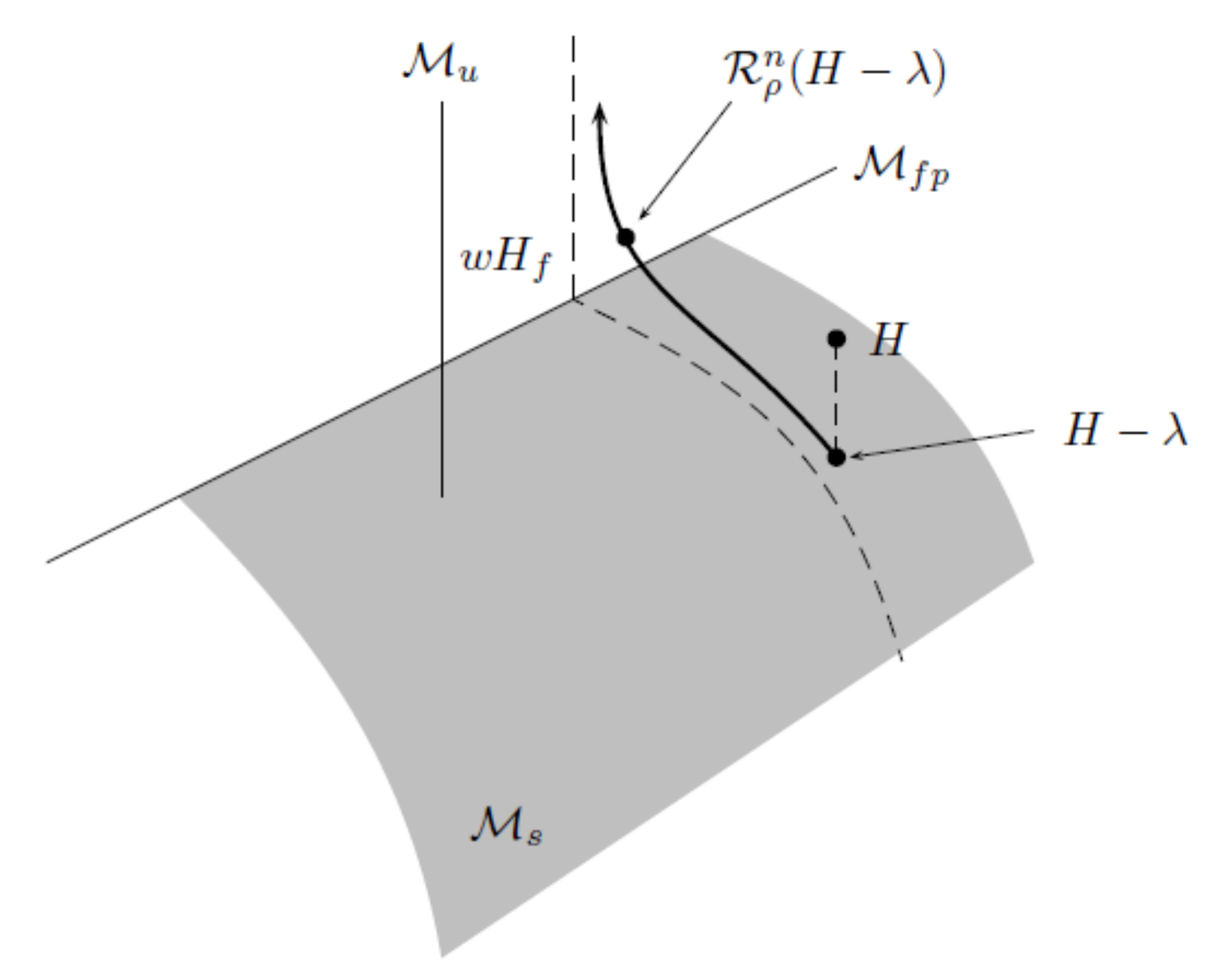}
\end{center}

%
%
\DETAILS{
\begin{center}
\psset{unit=1cm} \pspicture(-4,-5)(8,3.5)

\pscustom[linestyle=none,fillstyle=solid,fillcolor=lightgray]{
\psbezier(2,1)(3,0.5)(4,0)(4.5,-1.5)
\psbezier[liftpen=1](0,-4.5)(-0.25,-3)(-1,-2)(-2,-1)}
\psbezier[linewidth=0.5pt,linestyle=dashed](1,0.5)(2.2,-0.1)(3,-0.5)(3.5,-2.25)
\rput(0.5,-3.5){$\mathcal{M}_s$} \rput(0.5,0.8){$w H_f$}

\psset{linewidth=0.5pt} \psline(-3,-1.5)(3,1.5)
\rput(3.5,1.5){$\mathcal{M}_{fp}$}

\psline(0,-1)(0,2)\rput(0,2.3){$\mathcal{M}_u$}

\psline[linewidth=0.5pt,linestyle=dashed](1,0.5)(1,2.5)
\psbezier[linewidth=1pt]{->}(3,-0.7)(2,0.5)(1.2,0.5)(1.2,2)
\qdisk(1.4,0.97){2pt}\psline[linewidth=0.3pt]{<-}(1.5,1.1)(2.2,2)
\rput(3,2.2){$\mathcal{R}^{n}_{\rho}(H-\lambda)$}

\psline[linewidth=0.5pt,linestyle=dashed](3,0.2)(3,-0.7)
\qdisk(3,0.2){2pt}\qdisk(3,-0.7){2pt}\rput(3.4,0.2){$H$}
\psline[linewidth=0.3pt]{<-}(3.1,-0.7)(4.5,-0.5)\rput(5.2,-0.5){$H-\lambda$}
\endpspicture

Stable and unstable manifolds.
\end{center}}
%



\bigskip



Define the polydisc $\cD_s:=\cD^{\mu}(0, \beta_0, \gamma_0)$, with $\mu>0,\ \beta_0=c g^2\rho^{\mu_0 -1}$ and   $\gamma_0 =c g\rho^\mu_0$, the same as in Theorem \ref{thm:act-RGmap}. Using the above results, we draw the following conclusions about the spectrum of an operator $H\in \cD_s$. Find $\lam\in$ a neighbourhood of $0$, s.t. $H(\lambda):=H -\lam \in \cD(\cR_{\rho}^n)$ 
and define $H^{(n)}(\lambda):=\cR_{\rho}^n(H(\lambda) )$. We illustrate this as putting $H(\lambda)$ through the RG (black) box: 

$$H(\lambda) \Longrightarrow\ \textbf{RG BOX}\ \Longrightarrow\ H^{(n)}(\lambda).$$
Use that for $n$ sufficiently large, $H^{(n)}(\lambda)\approx \z H_f$, for some $\z \in \mathbb{C},\ \rRe\ \z >0$, to find wanted spectral information about $H^{(n)}(\lambda)$ in a neighbourhood of $0$. Then, use the isospectrality of $\cR_{\rho}$ to pass this information up the chain. 
These steps are described schematically as follows:
%


$\,$

 Spectral information about  $H^{(n)}(\lambda)$ $\Longrightarrow$ (by 'isospectrality' of  $\cR_{\rho}$)

 Spectral information about  $H^{(n-1)}(\lambda)$ 

$...$

%
%
$\Longrightarrow$ Spectral information about $H(\lambda)$.

\noindent We sum up this as
$$\textbf{Spec info}\ (H)\ \Longleftarrow\ \textbf{RG BOX}\ \Longleftarrow\ \textbf{Spec info}\ (H^{(n)}(\lambda)).$$
%
\DETAILS{Choose the parameter $\lambda$ inductively, so that  $H_{\theta}-\lambda \in D(\cR_{\rho}^n)$, 
i.e. $H_{\theta}-\lambda $ is in a $\rho^n-$neighborhood of the stable manifold $\cM_s$. Then

$H^{(n)}(\lambda):=\cR_{\rho}^n(H_{\theta}-\lambda )$ are close to the fixed point manifold $\cM_{fp}$

i.e. $ H^{(n)}(\lambda) \approx w H_f$, for some $\z\in \mathbb{C},\ \rRe\ \z >0$ and for  $n$ sufficiently large,

$\Longrightarrow$ Spectral information about $H^{(n)}(\lambda)$.
%

$\Longrightarrow$ Spectral information about  $H^{(n-1)}({\lambda })$ (by isospectrality of  $\cR_{\rho}$)

$...$}
%
%
%
If we start with the Hamiltonian $H_{\theta}$, then we are interested in, we derive this way the required spectral information about it.




\bigskip

\section{Related Results}\label{sec:relat-res}
 \textbf{Existence of the ionization threshold}.   There is $ \Sigma= \Sigma^{(p)}+O(g^2)>\inf\s(H)$ (the ionization threshold) s.t. for any energy interval in $\Delta\subset (\inf\s(H), \Sigma)$,
    $$\|e^{\delta|x|}\psi\|<\infty,\ \forall \psi\in\Ran E_\Delta(H),\ \delta<\Sigma-\sup\Delta.$$

\bigskip
\noindent\textbf{Analyticity.}
If the ground state energy $\e_g$ of $H$ is non-degenerate, then it is analytic in $g$ and has the following expansion
 \begin{equation}\label{GSE-exp}
\e_g=\sum\limits_{j=1}^\infty \e^{2j}_\al \al^{3j},
\end{equation}
where $\e^{2j}_\al$ are smooth function of $\al$ (recall that $g:= \alpha^{3/2}$). 

The proof relies on the renormalization group analysis together with the analyticity and the form-factor rotation symmetry transfer by the Feshbach-Schur map (see Appendix \ref{sec:anal-transf}) and on treating two sources of the dependence of $H$ on the coupling constant $g$ - in the prefactor of $\Af(y)$ and in its argument 
(see the line after \eqref{Hsm'}), differently, i.e. by considering the following family of  Hamiltonians 
  \begin{equation}\label{Hsm''}
\hsm:=\sum\limits_{j=1}^n{1\over 2m_j}
(i\nabla_{x_j}-gA_{\chi}(\al x_j))^2+ V(x)+H_f,
\end{equation}
and consider $g$ and $\al$ as independent variables. The powers of $g$ in \eqref{GSE-exp} come from $g$ in \eqref{Hsm''}.

\bigskip


\noindent\textbf{Analyticity of all parts of $H$.} 
Suppose that $\lambda\mapsto
H_\lam\equiv H(\underline{w}^{\lambda})$
is of the form \eqref{H} and is analytic in $\lambda\in S\subset\CC$ and that $H(\underline{w}^{\lambda})$ belongs to some polydisc
$\cD(\alpha,\beta,\gamma)$ for all $\lambda\in S$. Then
$\lambda\mapsto E_\lam:=w_{0,0}^{\lambda}(0),\  T_\lam :=w_{0,0}^{\lambda}(H_f)-w_{0,0}^{\lambda}(0),\  W_\lam:=H_{\lambda}-E_\lam-T_\lam $ 
are analytic in $\lambda\in S$. 

\bigskip

\noindent\textbf{Resonance poles.} Can we make sense of the resonance poles in the present context? 
Let $$ Q:= \{z \in \mathbb{C}^- |\ \epsilon_{0} < \rRe z < \nu \} / \bigcup_{j \le j(\nu), k} S_{j, k} .$$
\begin{theorem} {\em For each $\Psi$ and $\Phi$ from a dense set of
vectors (say, \eqref{D}), the meromorphic continuation, $F(z, \Psi, \Phi)$, of the
matrix element $\langle \Psi, (H-z)^{-1}\Phi\rangle$ is of the following form near the resonance
$\epsilon_j$ of $H$:
%
%
%
%
%
%
%
%
%
\begin{equation} \label{poles}
F(z, \Psi, \Phi)=(\epsilon_{j, k} -z)^{-1} p(\Psi, \Phi) + r(z, \Psi,
\Phi) .
\end{equation}
 Here $p$ and $r(z)$ are sesquilinear forms in $\Psi$ and $\Phi$, s.t.
 \begin{itemize}
\item  $r(z)$ is analytic in $Q$
and bounded on the intersection of a neighbourhood of
$\epsilon_{j, k}$ with $Q$ as $$|r(z, \Psi, \Phi)| \le C_{\Psi,
\Phi}|\epsilon_{j, k}-z|^{-\gamma}\ \mbox{for some}\ \gamma <1;$$
\item  $p \ne 0$ at least for one pair of vectors $\Psi$ and
$\Phi$ and $p = 0$
for a dense set of vectors $\Psi$ and $\Phi$ in a finite
co-dimension subspace.
 \end{itemize}}
\end{theorem}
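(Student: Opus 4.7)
The plan is to combine the complex-deformation identity \eqref{meromcont} with the isospectrality of the renormalization group already developed in the paper, and then to read off both the pole and the remainder bound from the behaviour of the renormalization-group iterates near their fixed-point manifold. For $\Psi,\Phi\in\cD$ the matrix element $F(z,\Psi,\Phi)$ equals $\langle \Psi_{\bar\theta},(H_\theta-z)^{-1}\Phi_\theta\rangle$ by \eqref{meromcont}, analytically continued in $\theta$ into the upper half-plane as in Section on meromorphic continuation. I would fix such a $\theta$ with $\Im\theta>0$ and study $z\mapsto (H_\theta-z)^{-1}$ in a neighbourhood of the deformed resonance $\epsilon_{j,k}$ within $Q$.

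First, apply the Feshbach--Schur isospectrality (Theorem~\ref{thm:isospF}(iv) together with the explicit resolvent formula in Appendix~\ref{sec:isosprel}) to write
\[
(H_\theta-z)^{-1}= Q_\rho(z)\,\bigl(F_\rho(H_\theta-z)\bigr)^{-1}\,\widetilde Q_\rho(z) \;+\; R_\rho(z),
\]
where $R_\rho(z)$ and the auxiliary operators $Q_\rho(z),\widetilde Q_\rho(z)$ are built from $\chi_\rho$, $\bchi_\rho$ and the regular piece $\bchi_\rho(\bchi_\rho(H_\theta-z)\bchi_\rho)^{-1}\bchi_\rho$; all of these are analytic in $z$ in the relevant neighbourhood of $\epsilon_{j,k}$ by construction of the domain of $\cR_\rho$ and by the adjustment procedure of Section on the RG. Iterating this $n$ times and inserting the rescaling $S_\rho$ turns the study of $(H_\theta-z)^{-1}$ into the study of $(H_\theta^{(n)}(\lambda))^{-1}$, with the spectral variable rescaled to $\lambda=\rho^{-n}(z-e_{j,k}^{(n)})$ and $H_\theta^{(n)}(\lambda)=\cR_\rho^n(H_\theta-z)$. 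Choosing $n$ sufficiently large so that $H_\theta^{(n)}(0)$ lies in a $\rho^{\mu n}$-neighbourhood of the fixed point $\zeta_\ast H_f$ with $\Re\zeta_\ast>0$, I can explicitly invert it on $\Ran\chi_1$.

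Second, at the fixed-point end I would expand
\[
\bigl(H_\theta^{(n)}(\lambda)\bigr)^{-1}= (\zeta_\ast H_f-\lambda)^{-1}\bigl(\one+O(\rho^{\mu n})\bigr),
\]
and decompose any matrix element using the direct integral of $H_f$: the vacuum contribution $\langle \Omega, (\zeta_\ast H_f-\lambda)^{-1}\Omega\rangle=-\lambda^{-1}$ furnishes the simple pole at $\lambda=0$, while the positive-photon-number part produces an integral $\int_0^1 d\nu(t)\,(\zeta_\ast t-\lambda)^{-1}$ with a spectral measure $d\nu$ whose density is bounded by the $\cB^{\mu\xi}$-norm of the tail of $H_\theta^{(n)}$. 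Because $\lambda$ is constrained to stay in $Q$ outside the wedge $S_{j,k}$ of \eqref{Sj}, $\arg(\zeta_\ast t-\lambda)$ is uniformly bounded away from $0$, and a standard Hölder estimate for such Cauchy-type integrals yields $|\int_0^1 d\nu(t)\,(\zeta_\ast t-\lambda)^{-1}|\le C|\lambda|^{-\gamma}$ for any $\gamma\in(0,1)$ (the loss is exactly due to the infrared overlap of the continuous spectrum of $H_f$ with $\lambda=0$, as flagged in Section~\ref{sec:IRprob}). Unwinding the rescaling turns this $|\lambda|^{-\gamma}$ bound into the $|\epsilon_{j,k}-z|^{-\gamma}$ bound required for the remainder $r(z,\Psi,\Phi)$, and the analyticity of $r$ in $Q$ follows from the joint analyticity of $H_\theta^{(n)}(\lambda)$ in $(\theta,\lambda)$ transferred by $\cR_\rho$ (the analyticity statement already recorded in Section~\ref{sec:relat-res}) together with the $\theta$-independence of the pole locations.

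Finally, the residue $p(\Psi,\Phi)=\lim_{z\to\epsilon_{j,k}}(\epsilon_{j,k}-z)F(z,\Psi,\Phi)$ factors, after unwinding $n$ applications of the Feshbach--Schur isospectrality, through the rank $\dim\cern F_\rho(H_\theta-\epsilon_{j,k})=\dim\cern(H_\theta-\epsilon_{j,k})$ eigenprojection supplied by Theorem~\ref{thm:isospF}(iii). Since this multiplicity equals the multiplicity of $\epsilon^{(p)}_j$ by Theorem~4.1, $p$ is a finite-rank sesquilinear form; it is nonzero on the natural image of $\phi_j^{(p)}\otimes\Omega$ under $Q_\rho(z)$ by a perturbative computation (the leading term is $|\langle\phi_j^{(p)}\otimes\Omega,\cdot\rangle|^2$ up to $O(g)$), while $p$ vanishes identically on the orthogonal complement of this finite-dimensional subspace intersected with $\cD$, which is dense and of finite codimension. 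The main obstacle is the third step, specifically producing the uniform-in-$n$ Hölder control of $(\zeta_\ast H_f-\lambda)^{-1}$-type integrals as $\lambda\to 0$ inside $Q$: the absence of a spectral gap (photons are massless) forces one to use the anisotropic regularity encoded in the exponent $\mu>0$ of $\cB^{\mu\xi}$ throughout the RG orbit, and to track the Hölder exponent through the scaling $S_\rho$, which is where the restriction $\gamma<1$ (rather than boundedness) originates.
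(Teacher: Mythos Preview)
The paper does not actually contain a proof of this theorem: it is listed in Section~\ref{sec:relat-res} (``Related Results'') as a statement of a result, and in Section~\ref{sec:liter} the proof is explicitly attributed to \cite{AFFS} (see also \cite{bfs3}). So there is no detailed argument in the paper to compare your proposal to line by line.

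That said, your proposal follows precisely the scheme the paper advertises for all such results (the ``RG BOX'' picture at the end of Section~9): pass to $H_\theta$ via \eqref{meromcont}, iterate the Feshbach--Schur/renormalization map using the resolvent identity \eqref{resolvrel}, land near the fixed-point manifold $\zeta_\ast H_f$, extract the vacuum pole and estimate the continuum remainder, then transport everything back up the chain by isospectrality. This is indeed the architecture of the proof in \cite{AFFS}, so your overall strategy is correct and matches what the paper would have you do.

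One point deserves more care than your sketch gives it. You write $H_\theta^{(n)}(\lambda)\approx \zeta_\ast H_f$ for $n$ large and then read off the pole from $\langle\Omega,(\zeta_\ast H_f-\lambda)^{-1}\Omega\rangle=-\lambda^{-1}$. But for any \emph{fixed} $n$ the operator $H_\theta^{(n)}(\lambda)$ is only $O(\rho^{\mu n})$-close to $\zeta_\ast H_f$, and the resonance location $\epsilon_{j,k}$ is defined only as the limit $\lim_n e_n$ of the adjusted parameters, not as the zero of the $n$-th level vacuum expectation. The actual argument does not fix $n$ and expand; rather, one must simultaneously control the whole RG orbit, summing the contributions from each scale and showing that the accumulated remainders (the analytic pieces $R_\rho(z)$ you mention, composed with the $Q$-operators at every level) stay bounded by $|\epsilon_{j,k}-z|^{-\gamma}$ uniformly in $n$. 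This is where the anisotropic norm with $\mu>0$ really earns its keep, and it is the place where a careless ``choose $n$ large, then perturb'' argument would circularly assume what it needs to prove. Your final paragraph acknowledges this obstacle, which is good, but the body of the sketch (``Choosing $n$ sufficiently large\ldots I can explicitly invert'') underplays it.
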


\noindent\textbf{Local decay.}  For any compactly supported function $f(\lambda)$ with $\supp
f \subseteq (\inf H, \infty)/$(a neighbourhood of $\Sigma$), and for $\theta> \frac{1}{2},\ \nu< \theta-\frac{1}{2}$, we have that
\begin{equation}\label{loc-dec}
\|\la Y\ra^{-\theta}e^{-i H t}f(H)\la Y\ra^{-\theta}\|\le C
t^{-\nu}.
\end{equation}
Here
$\la Y\ra := (\one +Y^2)^{1/2}$, and $Y$ denotes the self-adjoint operator on Fock space $\cF$ of photon co-ordinate,
\begin{equation} \label{Y}
Y\ : = \ \; \int d^3k \; a^*(k) \:  i\nabla_k  \: a(k),
\end{equation}
 extended to the Hilbert space $\cH=\cH_{\at}\otimes\cF$.
(A self-adjoint operator $H$ obeying \eqref{loc-dec} is said to have the ($\Delta, \nu, Y, \theta$) - \textit{local decay} (LD) property.) \eqref{loc-dec} shows that for well-prepared initial conditions $\psi_0$, the probability of
finding photons within a ball of an arbitrary radius $R < {\infty} $,
centered say at the center-of-mass of the particle system, tends to 0, as time $t$
tends to ${\infty}$.

 Note that one can also show that as $t\to \infty$,  the photon coordinate  and wave vectors in the support of  the solution $e^{-iHt}\psi_0$ of the Schr\"odinger equation become more and more parallel. This follows from the local decay for the self-adjoint generator of dilatations on Fock space  $\cF$,
 \begin{equation} \label{B}
B \ : = \ \frac{i}{2}\; \int d^3k \; a^*(k) \: \big\{ k \cdot
\nabla_k + \nabla_k \cdot k \big\} \: a(k) .
\end{equation}
(In fact, one first proves the local decay property for $B$ and then transfers it to the photon co-ordinate operator $Y$.) 

\bigskip

\section{Conclusion}

Apart from the vacuum polarization, the non-relativistic QED provides a good qualitative description of the physical phenomena related to the interaction of quantized electrons and nuclei and the electro-magnetic field. (Though construction of the scattering theory is not yet completed and the correction to the gyromagnetic ratio is not established, it is fair to conjecture that while both are difficult problems, they should go through without a hitch.)

The quantitative results though are still missing. Does the free parameter, $m$ (or  $\kappa$), suffice to give a good fit with the experimental data say on the radiative corrections? Another important open question is the behaviour of the theory in the ultra-violet cut off. 

\DETAILS{
\section{Open Problems}

Compute With $m=m(m_\el, \kappa)$ in the quantum Hamiltonian
$$$$
Minimal and maximal velocity of photons;
$$$$
Asymptotic completeness; 
$$$$
Bohr photon frequency laws. }
%
\section{Comments on Literature}\label{sec:liter}
%

These lectures follow the papers \cite{sig, fgs2, AFFS}, which in turn extend  \cite{bfs1, bfs2, bcfs1}.
The papers \cite{sig, fgs2,  AFFS} use the smooth Feshbah-Schur map (\cite{bcfs1, GriesemerHasler1}), which is much more powerful (see Appendix \ref{sec:sfm}), while in these lectures we use, for simplicity, the the original, Feshbach-Schur map (see \cite{bfs1, bfs2}), which is simpler to formulate.

The self-adjointness of $H$ is not difficult and was proven in \cite{ bfs3} for sufficiently small coupling constant ($g$) and in  \cite{Hiroshima4} (see also \cite{Hiroshima2a}), for an arbitrary one.

Theorems 4.1 and 4.2 were proven in \cite{bfs1, bfs2, bfs3} for 'confined particles' (the exact conditions are somewhat technical) and in the present form in \cite{sig}.


The results of \cite{ bfs3} on existence 
of the ground state were considerably improved in  \cite{  Hiroshima1,
Hiroshima2, Hiroshima3, HiroshimaSpohn,  AraiHirokawa, LMS2}  (by compactness techniques) and
\cite{BachFroehlichPizzo1} (by multiscale techniques), with the sharpest result given in \cite{GriesemerLiebLoss,   LiebLoss3}. 
(The papers
\cite{bfs3, Hiroshima1, GriesemerLiebLoss,
LiebLoss1, LiebLoss2, LiebLoss3, HiroshimaSpohn, LMS2} include the interaction of the spin with magnetic field 
in the Hamiltonian.)



Related results:

\begin{enumerate}
\item[\nonumber] The asymptotic stability of the ground state (local decay, see Section \ref{sec:relat-res}):
\cite{bfss, fgs1, fgs3}.

\item[\nonumber] The survival probabilities of excited states (see \eqref{ResonDecay}): 
 \cite{bfs3, HaslerHerbstHuber, AFFS}.

\item[\nonumber] Atoms with dynamic nuclei: 
 \cite{faupin, AGG, LMS}.

\item[\nonumber] Analyticity of the ground state eigenvalues in parameters and asymptotic expansions (see Section \ref{sec:relat-res}): \cite{bfs3, BachFroehlichPizzo1, BachFroehlichPizzo3, GriesemerHasler2, HaslerHerbst1, HaslerHerbst2,HaslerHerbst3}. 


\item[\nonumber]  Existence of the ionization threshold (see Section \ref{sec:relat-res}): \cite{Griesemer}.

\item[\nonumber]  Resonance poles (see Section \ref{sec:relat-res}): \cite{
AFFS} (see also \cite{bfs3}).

\item[\nonumber] Self-energy and binding energy: \cite{HiroshimaSpohn1, LiebLoss1, LiebLoss2, LiebLoss3, H1, bcv, cvv, HVV,  H2, HS, HHS, CEH, bv, bcvv}.


\item[\nonumber] Electron mass renormalization: 
\cite{HS,  HiroshimaSpohn2, bcfs2, ch, FroehlichPizzo}. 

\item[\nonumber] One particle states: \cite{fr1, fr2, cfp1, cfp2, FroehlichPizzo}.

\item[\nonumber] Scattering amplitudes: \cite{BachFroehlichPizzo2}. 

\item[\nonumber]  Semi-relativistic Hamiltonians: \cite{BDG, MiyaoSpohn, MatteStockmeyer, KMS, Stockmeyer, HiroshimaSasaki}.

\end{enumerate}


Other aspects of non-relativistic QED:
\begin{enumerate}

\item[\nonumber]  Photo-electric effect:  \cite{bkz, GriesemerZenk}.


\item[\nonumber] Scattering theory:
\cite{FroehlichGriesemerSchlein1,FroehlichGriesemerSchlein2, FroehlichGriesemerSchlein3}.

\item[\nonumber] Stability of matter:
\cite{BugliaroFroehlichGraff, FeffermanFroehlichGraff, LiebLoss1}.

\end{enumerate} 

\bigskip
\noindent There is an extensive literature on related models, which we do not mention here: Nelson model describing a particle linearly coupled to a free massless scalar field (phonons), semi-relativistic models, based on Dirac equation, and quantum statistics (open systems, positive temperature) models. (In the latter case, 
one deals with Liouvillians, rather than Hamiltonians, on positive temperature Hilbert spaces. 
The main results above were proved simultaneously for the QED and Nelson models and extended, at least partially, to positive temperatures.) 

\appendix

\section{Hamiltonian of the Standard Model} 
\label{sec:pot}

In this appendix we demonstrate the origin the quantum Hamiltonian $H_g$ given in \eqref{Hsm}.  To be specific consider 
an atom or molecule with $n$ electrons interacting with radiation field. 
In this case the Hamiltonian of the system in our units is given by
 \begin{equation}\label{Hsm'}
H(\al)=\sum\limits_{j=1}^n{1\over 2m}
(i\nabla_{x_j}-\sqrt{\alpha}A_{\chi'}(x_j))^2+\al V(x)+H_f,
\end{equation}
%
%
where $\al V(x)$ is the total Coulomb potential of the particle system, $m$ is the electron bare mass,  $\alpha =\frac{e^2}{4\pi \hbar c}\approx {1\over 137}$ (the
fine-structure constant) 
and $A_{\chi'}(y)$ is the original vector potential with the ultraviolet cut-off $\chi'$.
Rescaling $x \rightarrow \alpha^{-1} x$ and $k \rightarrow
\alpha^2 k$, we arrive at the Hamiltonian \eqref{Hsm},
where $g:= \alpha^{3/2}$ 
and $A(y) = A_{\chi}(\alpha y)$, 
with $\chi( k):=\chi'(\alpha^2 k)$. 
%
After that 
we relax the restriction on $V(x)$
by allowing it to be a standard generalized $n$-body potential (see Subsection \ref{sec:model}). Note that though this is not displayed, $A(x)$ does depend on $g$. This however does not effect the analysis of the Hamiltonian $\hsm$. (If anything, this makes certain parts of it simpler, as derivatives of $A(x)$ bring down $g$.)

$$****$$
In order not to deal with the problem of center-of-mass motion
which is not essential in the present context, we assume that either
some of the particles (nuclei) are infinitely heavy (a molecule in the
Born-Oppenheimer approximation), or the system is
placed in a binding, external potential field. (In the case of the
Born-Oppenheimer molecule, the resulting $V(x)$ also depends on
the rescaled coordinates of the nuclei, but this does not effect our analysis except of making the complex deformation of the particle system more complicated (see \cite{HunzikerSigal}).) This means that the operator $H_p$ has isolated eigenvalues below its essential
spectrum.  The general case is considered below. 

In order to take into account the particle spin we change the state space the particle system to  $\cH_\at=\otimes_1^nL^2(\R^{3}, \C^2)$ (or the antisymmetric in identical particles  subspace thereof),
and  the standard quantum Hamiltonian on $\cH=\cH_{p}\otimes\cH_{f}$, is taken to be (see e.g. \cite{Cohen-TannoudjiDupont-RocGrynberg1, Cohen-TannoudjiDupont-RocGrynberg2, sakurai})
\begin{equation} \label{Hsm-spin}
H_{spin}=\sum\limits_{j=1}^n{1\over 2m}
[\sigma_j\cdot(i\nabla_{x_j}-g\Af(x_j))]^2+V(x)+H_f,
\end{equation}
where $\s_j:=(\s_{j1}, \s_{j2}, \s_{j3}),\ \s_{ji}$ are the Pauli matrices of the $j-$th particle and the identity operator on $\C^{2n}$ is omitted in the last two terms. It is easy to show that
\begin{equation} \label{spin}
[\sigma\cdot(i\nabla_{x}-g\Af(x))]^2=(i\nabla_{x}-g\Af(x))^2+ g\sigma \cdot B(x).
\end{equation}
where $B(x):=\textrm{curl} \Af(x)$ is the magnetic field. As a result the operator  \eqref{Hsm-spin} can be rewritten as
\begin{equation} \label{Hsm-spin}
H_{spin}=H\otimes \one+\one\otimes g\sum\limits_{j=1}^n{1\over 2m}
\sigma_j\cdot B(x_j).
\end{equation}
For the semi-relativistic Hamiltonian, the non-relativistic kinetic energy $\frac{1}{2m}|p|^2$ is replaced by the relativistic one, $\sqrt{|p|^2+m^2}$ or $\sqrt{(\sigma\cdot p)^2+m^2}$.

\bigskip

\section{Translationally Invariant Hamiltonians} \label{sec:transl-Ham}

If we do not assume that the nuclei are infinitely and there are no external forces acting on the system, then the Hamiltonian \eqref{Hsm} is translationally symmetric. This 
leads to conservation 
of the total momentum (a quantum version of the classical Noether theorem). Indeed,
 \index{translation ! invariance}
the system of particles interacting with the quantized electromagnetic fields 
is invariant under translations of the particle coordinates, $\ux\to  \ux+\uy$, where $\uy=(y, \dots, y)$ ($n-$ tuple) and the fields, $A(x)\to A(x-y)$, i.e. $\hsm$ commutes with the translations
\DETAILS{\begin{equation}\label{Ty}
	T_{y}: \Psi(\ux, A)\to   \Psi(\ux+\uy, t_{y}A), 
	\end{equation}
where 
$(t_{y}A)(x, A)=A(x-y)$.  (We say that $\hsm$ is \textit{translation invariant}.) Indeed,
we use \eqref{Ty} and the definitions of the operators $A(x)$ and $E(x)$, to obtain  $T_y A(x) = (t_{y}A)(x) T_y$ and $T_y E(x) = (t_{y}E)(x) T_y$, which, due to the definition of  $H_{e\chi}$, give   
\begin{equation*} T_y \hsm = \hsm T_y, \end{equation*}

Note that in the Fock space representation this gives: $T_{y}:   	 \oplus_n\Psi_n(\ux, k_1, \dots, k_n)$\\ $\to \oplus_n e^{ iy\cdot (k_1+\dots k_n)}	 \Psi_n(\ux+\uy, k_1, \dots, k_n)$ and therefore can be rewritten as}
 $T_{y}: \Psi(\ux)\to  e^{ iy\cdot P_\re}	 \Psi(\ux+\uy)$, where $ P_\re$ is the momentum
operator associated to the quantized radiation field,
\[\Pf=\sum_\lambda\int dk \, k \, a_\lambda^*(k)a_\lambda(k).\] 
It is straightforward to show that $T_y$ are unitary operators and that they satisfy the relations
$ T_{x+y} = T_xT_y,$
and therefore $y\to T_{y}$ is a unitary Abelian representation of $\R^3$.
Finally, we observe that the group $T_y$ is generated  by the total momentum operator, $\Ptot$, of the electrons and the photon field: $T_y= e^{ iy\cdot \Ptot}$. Here $\Ptot$ is the selfadjoint operator on $\cH$, given by 
\eqn
	\Ptot \, := \, 
\sum_ip_i\otimes\1_f \, + \, \1_{el}\otimes P_f
\eeqn
where, as above, $p_j:=-i\nabla_{x_j}$, the momentum of the $j-$th electron and $ P_f$ is the field momentum given above.  
Hence $[\hsm, \Ptot]=0$.

\medskip

Let $\cH$ be the direct integral
	$\cH = \int_{\R^3}^\oplus \cH_P dP, $ 
with the fibers $\cH_P := L^2(X)\otimes \cF$, where $X:=\{x\in {\mathbb R}^{3n}\ |\ \sum_i m_ix_i=0\}\simeq{\mathbb R}^{3(n-1)}$, (this means that $\cH = L^2(\R^3, dP; L^2(X)\otimes \cF)$) and 
		define $U : \cH_\el \otimes \cH_{\re} \to \cH$ on smooth functions with compact domain by the formula
	\begin{equation}\label{U}
		(U \Psi)(\ux', P) = \int_{\mathbb{R}^3}e^{i(P-P_\re)\cdot x_{cm}}\Psi(\ux'+\ux_{cm}) \d y, 
	\end{equation}
where $\ux'$ are the coordinates of the $N$ particles in the center-of-mass frame and $\ux_{cm}=(x_{cm}, \dots, x_{cm})$ ($n-$ tuple), with $ x_{cm}=\frac{1}{\sum_im_i}\sum_im_ix_i$, the center-of-mass coordinate, so that $\ux=\ux'+\ux_{cm}$.  Then $U$ extends uniquely to a unitary operator (see below). Its converse is written, for  $\Phi (\ux', P)\  \in L^2(X)\otimes \mathcal{F}$, as
 \begin{equation}\label{Uinv} 
 (U^{-1}\Phi)(\ux)=\int_{\mathbb{R}^3} e^{-\i x_{cm}\cdot(P-P_\re)}\Phi (\ux', P) \d P.
\end{equation}
The functions $\Phi (\ux', P)\ =(U \Psi)(\ux', P)$ 
are called fibers of $\Psi$. One can easily prove the following
\DETAILS{ For $k \in \R^3$, let $H_{ k}$ be the operator $H$ acting on $\cH_k$ with domain consisting of those $v \in \cH_k \cap H^2$ such that $\Psi$
	satisfies the boundary conditions $T_y \Psi(x) = \chi_k(t) \Psi(x)$. Then
	\begin{equation}
	\label{K-decomposition}
		UH U^{-1} = \int_{\R^3}^\oplus H_{ k} dk.
	\end{equation}
	and
	\begin{equation}
	\label{KKkspecrelat}
		\sigma(H) = \bigcup_{k\in\R^3} \sigma(H_{ k}).
	\end{equation}}
\DETAILS{\bigskip

We consider  the direct integral decomposition
\eqn
	\cH \, = \, \int^\oplus dp \, \cH_p \,,
\eeqn
with respect to $\Ptot$, where each fiber $\cH_p\cong {\rm Ker}(\Ptot-p)$ is isomorphic to $\Fo$.
It corresponds to the natural isomorphism
\eqn
	\cH \, \cong \, L^2(\R^3,\Fo) \,
\eeqn
which we will use in the sequel.


For $\psi\in\cH$, we define the Fourier transform with respect to the electron variable,
\eqn
	\widehat\psi(p) \, = \, \int dx \, e^{-i(p-\Pf)x} \psi(x)
\eeqn
with inverse transform
\eqn
	\check\phi
 \, = \, \int dp \, e^{i(p-\Pf)x}\phi(p) \,.
\eeqn}
\begin{lemma}
 {\em The  operations \eqref{U} 
and \eqref{Uinv} 
define unitary maps $L^2({\mathbb R}^{3n})\otimes \cF\rightarrow
 \cH$ and  $\cH\rightarrow L^2({\mathbb R}^{3n})\otimes \cF$, and are mutual inverses.}
\end{lemma}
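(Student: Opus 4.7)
The plan is to recognize $U$ as a composition of three manifestly unitary operations, and then to verify the inversion directly by a Fourier-theoretic computation.

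First, decompose $U=\mathcal{F}\circ M\circ C$, where: (i) $C:\Psi(\ux)\mapsto\Psi(\ux'+\ux_{cm})$ is the change of variables corresponding to the linear bijection $\ux\leftrightarrow(\ux',x_{cm})$ between $\R^{3n}$ and $X\times\R^3$; since this is a linear isomorphism with constant, positive Jacobian depending only on the masses $m_1,\ldots,m_n$, an appropriate normalization turns $C$ into a unitary $L^2(\R^{3n})\otimes\cF\to L^2(X)\otimes L^2(\R^3;\cF)$. (ii) $M:\Phi(\ux',x_{cm})\mapsto e^{-iP_\re\cdot x_{cm}}\Phi(\ux',x_{cm})$ is multiplication, pointwise in $x_{cm}$, by a unitary operator on $\cF$, hence itself unitary. (iii) $\mathcal{F}$ is, up to the conventional $(2\pi)^{-3/2}$ factor, the Fourier transform in $x_{cm}$, which is unitary from $L^2(\R^3_{x_{cm}};\cF)$ to $L^2(\R^3_P;\cF)$ by Plancherel's theorem applied componentwise in an orthonormal basis of $\cF$. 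Composing these three operations reproduces \eqref{U} up to the overall normalization constant (absorbed into the definition of $U$), and so $U$ is unitary.

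Second, check that $U^{-1}$ given by \eqref{Uinv} inverts $U$. On a dense set of smooth compactly supported $\Psi$ taking values in a fixed finite-photon-number subspace of $\cF$ (on which $P_\re$ is bounded selfadjoint), substitute the definitions and interchange the order of integration, which is justified by absolute convergence on this domain:
\[
(U^{-1}U\Psi)(\ux)=\int_{\R^3}dP\,e^{-ix_{cm}\cdot(P-P_\re)}\int_{\R^3}dy\,e^{i(P-P_\re)\cdot y}\,\Psi(\ux'+\uy).
\]
Here $e^{\pm iP\cdot z}$ is a scalar while $e^{\pm iP_\re\cdot z}$ is a bounded operator on $\cF$ that commutes with the scalar factors. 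Performing the $P$-integral first, via the joint spectral decomposition of the three components of $P_\re$, reduces the computation to the scalar Fourier-inversion identity $\int_{\R^3}e^{iP\cdot(y-x_{cm})}dP=(2\pi)^3\delta(y-x_{cm})$, at which the operator factor $e^{iP_\re\cdot(x_{cm}-y)}$ collapses to the identity; the surviving $y$-integral then collapses to $\Psi(\ux'+\ux_{cm})=\Psi(\ux)$ up to the normalization constant. A symmetric calculation yields $UU^{-1}=\mathrm{id}$.

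Third, extend by density: once $U$ is isometric (up to normalization) and \eqref{Uinv} inverts it on the chosen dense domain, both identities extend to the full Hilbert spaces by continuity, giving the desired unitarity of $U$ and $U^{-1}$ and confirming that they are mutual inverses. The main technical obstacle is that the operator-valued integrands do not converge in operator norm, so the scalar Fourier inversion formula cannot be invoked naively. This is overcome by working on the dense domain of finite-photon-number, rapidly decaying vectors, where $P_\re$ is bounded and Fubini applies; the joint spectral decomposition of $P_\re$ then reduces the operator-valued Fourier inversion to its scalar counterpart.
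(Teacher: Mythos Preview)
Your argument is correct and, at its core, coincides with the paper's approach: both compute $U^{-1}U\Psi$ and $UU^{-1}\Phi$ directly on a dense set of smooth, compactly supported vectors by swapping the order of integration and invoking the scalar Fourier inversion formula (the paper works with $C_0^\infty$ in $\ux$, you with compactly supported finite-photon-number vectors), and both then extend by density. The paper's proof is terser and simply writes out the two chains of equalities, ending with ``Unitarity can be checked easily.''

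Your factorization $U=\mathcal{F}\circ M\circ C$ is an addition not present in the paper, and it is a genuinely cleaner way to see the unitarity: rather than leaving it as an exercise, you exhibit $U$ as a composition of a Jacobian-corrected coordinate change, pointwise multiplication by a unitary on $\cF$, and the Plancherel isomorphism. This buys you unitarity for free (up to the normalization constants you note), whereas the paper's route requires a separate check. Conversely, the paper's bare computation is slightly more economical if one is willing to take the unitarity for granted. Both are standard and sound; your version is the more self-contained of the two.
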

\DETAILS{\prf  
By density, we may assume that $\Psi$ is a $C_0^\infty$ in $\ux$. 
Then, it follows from standard arguments in Fourier analysis that
\eqn\label{eq-FT-def-1}
(U^{-1}U\Psi)(\ux) &=&
	\int \d P \, e^{-i(P-\Pf)x_{cm}} \int dy \, e^{i(P-\Pf)\cdot y} \, \Psi(\ux' +y)
	\nonumber\\
	&=&
	\int dy \,  \int \d P \,   e^{-iP\cdot(x_{cm}-y)}e^{i\Pf\cdot(x_{cm}-y)} \, \Psi(\ux'+y)
	\nonumber\\
	&=&
	  \int dy \,  \delta(x_{cm}-y) \, e^{i\Pf\cdot(x_{cm}-y)} \, \Psi(\ux'+y)
	 \nonumber\\
	 &=&\Psi(\ux) \,.
\eeqn
On the other hand, for $\psi\in {\mathcal S}_{\Fo} $,
\eqn\label{eq-FTinv-def-1}
	(UU^{-1}\Phi)(\ux', P)
	&=&
	\int dy \, e^{i(P-\Pf)y} \int dq \, e^{i(q-\Pf)y} \, \Phi(\ux', q)
	\nonumber\\
	&=&
	 \int dq \, \int dy \, e^{i(p-q)y} \,  \Phi(\ux', q)
	 \nonumber\\
	 &=&
	 \int dq \, \delta(P-q) \, \Phi(\ux', q)
	 \nonumber\\
	 &=&\Phi(\ux', P) \,.
\eeqn
From the density of 
 $C_0^\infty$ in $\ux$ functions, we infer that \eqref{eq-FT-def-1}
and \eqref{eq-FTinv-def-1}  define bounded maps 
which are mutual inverses. Unitarity can be checked easily.
\endprf}
\DETAILS{We observe that
\eqn
	(\Ptot\psi)^{\widehat{\;}}(p) & = & \int dx \, e^{i(p-\Pf)x} (-i\nabla_x+\Pf)\psi(x)
	\nonumber\\
	&=& \int dx \, \big( \, (-i\nabla_x+\Pf)e^{i(p-\Pf)x}\,\big) \, \psi(x)
	\nonumber\\
	&=&p \, \widehat\psi(p) \,.
\eeqn
Hence, $\widehat\psi(p)\in\cH_p\cong {\rm Ker}(\Ptot-p)$, for any $p\in\R^3$.}
%
Since $\hsm$ commutes with $\Ptot$, it follows that it admits the fiber decomposition
\begin{equation}
U \hsm U^{-1} = \int_{\mathbb{R}^3}^{\oplus} \hsm(P) \d P,
\end{equation}
where the fiber operators $\Hn(P)$, $P \in \mathbb{R}^3$, are self-adjoint operators on $\mathcal{F}$. Using $a(k)e^{-\i y\cdot P_\re}=e^{-\i y\cdot (P_\re+k)}a(k)$ and  $a^*(k)e^{-i y\cdot P_\re}=e^{-i y\cdot (P_\re-k)}a^*(k)$, we find $\nabla_y  e^{i y\cdot(P-P_\re)}\Af(x'+y) e^{\i y\cdot(P-P_\re)} =0$ and therefore
\begin{equation}
\Af(x) e^{i y\cdot(P-P_\re)} = e^{i y\cdot(P-P_\re)}\Af(x-y).
\end{equation}
Using this and \eqref{Uinv},  we compute $\hsm  (U^{-1}\Phi)(x)=\int_{\mathbb{R}^3} e^{i x\cdot(P-P_\re)}\hsm(P)\Phi (P) d P$, where  $\hsm(P)$ are Hamiltonians on the space fibers $\cH_P := \cF$ given explicitly by
\begin{equation}
\hsm(P) =  \sum_j\frac{1}{2m_i}\big ( P - P_\re -i\nabla_{x_j'} -e_i\Af(x_j')\big)^2 + V_{\rm coul}(\ux')+H_\re
\end{equation}
where $x_i'=x_i - x_{cm}$ is the coordinate of the  $i-$th particle in the center-of-mass frame. Now, this hamiltonian can be investigated similarly to the one in \eqref{Hsm}. 
%
%
\DETAILS{For an observable $A$ on $ \mathcal{H}_{\mathrm{el}} \otimes \mathcal{H}_\re$, we consider the operator $UA U^{-1}$, acting on the space $\cH = \int_{\R^3}^\oplus \cH_P dP$. It can be written as
$$(UA U^{-1}\Phi)(P)=\int_{\mathbb{R}^3} A_{PQ}\Phi(Q)dQ,$$
for some operator-function $A_{PQ}$ on $\cF$. If $A\in \cA$, then $A_{PQ}\in {\frak W}(L^{2}_0)$.
We write $UA U^{-1}=\int A_{PQ} dPdQ.$ Note that the evolution, $\al^t(A)$, of $A$ is then given by
\begin{equation}
\al^t(A)= \int_{\mathbb{R}^3\times \R^3}dPdQ\alpha_{PQ}^t (A_{PQ}),
\end{equation}
where $\alpha_{PQ}^t (A_{PQ}):=e^{itH_P}A_{PQ} e^{-itH_Q}$. If an observable $A$ is translation invariant, then we can write it as a fiber integral, $A = \int_{\mathbb{R}^3}^{\oplus} A_P \d P.$ If $A$ is translation invariant, then so is the observable $\al^t(A)$ and
\begin{equation}
\al^t(A)=\int_{\mathbb{R}^3} \al^t_{P}(A_{P})dP,\ \mbox{where}\ \al^t_{P}(A_{P}):=e^{itH_P}A_{P} e^{-itH_P}.
\end{equation}
We write $\al^t=\int \al^t_{P}dP$.  An example of  a  translation invariant observable is the particle momentum  $P_\el=-i\nabla=i[H, x]$. In this case we have $P_\el = \int_{\mathbb{R}^3}^{\oplus} (P-P_\re) \d P = \int_{\mathbb{R}^3}^{\oplus} P \d P-P_\re.$}
\DETAILS{\bigskip
Next, we consider the Hamiltonian \eqref{eq-Hn-def-1}.
Clearly, $[\Hn,\Ptot]=0$, by translation invariance.
Then,
\eqn
	(\Hn \psi)^{\widehat{\;}}(p) \, = \, \Hn(p) \widehat\psi(p)
\eeqn
where $\Hn(p) =  \Hn\big|_{\cH_p}$ is the fiber Hamiltonian associated to total momentum $p$. Let $\Af:=\Af(0)  $. We compute
\eqn
	\Hn(p) =\frac12( \, p \,  - \,  \Pf \, - \, \c \Af  \, )^2
	\, + \,  H_f \,
\eeqn
We then have that
\eqn
	\Hn \, = \, \int e^{i(p-\Pf)x}\Hn(p)e^{-i(p-\Pf)x}dP_{\Ptot}(p)
\eeqn
which is the spectral representation of $\Hn$
with respect to the spectral measure associated to $\Ptot$.}
%

\bigskip


\section{Proof of Theorem \ref{thm:isospF}} \label{sec:isosprel}
In this appendix we \textit{omit the subindex} $\rho$ at $\chi_{\rho}$ and $\bchi_{\rho}$, and \textit{replace the subindex} $\rho$ in other operators by the subindex $\chi$. Moreover, we replace $H-\lam$ by $H$. Though $\chi$ and $\bchi$ we deal with are projections, we often keep the powers $\chi^2$ and $\bchi^2$, which occur often below, having in mind showing possible generalization to  $\chi$ and $\bchi$ which are 'almost (or smooth) projections' satisfying $\chi^2+\bchi^2=1$ (see Appendix \ref{sec:sfm}).

First we note that the relation between $\psi$ and $\varphi$ in Theorem \ref{thm:isospF} (ii) is $\vphi=\chi\psi,\ \psi=Q_{\chi}( H)\vphi,$
and between $H^{-1}$ and $F_{\chi}(H)^{-1}$ in (iv) is
\begin{equation} \label{resolvrel}
H^{-1}  =  Q_{\chi} ( H) \: F_{\chi}
H^{-1} \: Q_{\chi} ( H)^\# + \; \bchi \, H_{\bchi}^{-1} \bchi ,
\end{equation}
where  $H_{\bchi}:=\bchi_{\rho}H\bchi_{\chi}$ and $Q_{\chi} ( H)$ and $Q_{\chi} ( H)^\#$ are the operators, given by

\medskip
$Q_{\chi} (H)  :=  \chi \: - \: \bchi \, H_{\bchi}^{-1}
\bchi H \chi ,$

\medskip
$Q_{\chi} ^\#(H)  :=  \chi \: - \: \chi H \bchi \,
H_{\bchi}^{-1} \bchi .$

%
\DETAILS{\item[(iii)] If $\vphi
\in \Ran\, \pi \setminus \{0\}$ solves $F_{\rho} (H - \lambda)
\, \vphi = 0$ then $\psi := Q_{\pi} (H - \lambda) \vphi \in \cH
\setminus \{0\}$ solves $H\psi = \lambda \psi$;}
%

 \emph{Proof of Theorem \ref{thm:isospF}. }
Throughout the proof we use the notation $F := F_\chi(H)$,
$Q := Q_\chi(H)$, and $Q^\# := Q_\chi^\#(H)$. Note that (i) ($ 0 \in \rho(H ) \Leftrightarrow 0 \in \rho(F_{\chi}(H))$) follows from (iv) ($H^{-1}$ exists $ \Leftrightarrow$ $F_{\chi}(H)^{-1}$
exists) and (iv) follows from \eqref{resolvrel}, so we start with the latter.

\noindent
\textbf{Proof of \eqref{resolvrel}.}
The next two identities,
\begin{equation} \label{eq-II-11}
H \, Q     \ = \ \chi \, F
\hspace{8mm} \mbox{and} \hspace{8mm}
Q^\# \,  H \ = \ F \, \chi ,
\end{equation}
are of key importance in the proof. They both derive from a simple
computation, which we give only for the first equality in (\ref{eq-II-11}).  We observe the relations
\begin{equation} \label{Hchi} 
H \, \chi \ = \ \chi \, H_\chi \; + \; \bchi^2 \, H \chi ,
\hspace{8mm} \mbox{and} \hspace{8mm}
H \, \bchi \ = \ \bchi \, H_\bchi \; + \; \chi^2 \, H \bchi ,
\end{equation}
which follow from $\chi^2 + \bchi^2 = \one$. Now, using the definition of the operator $Q$ and the relations \eqref{Hchi}, we obtain
\begin{eqnarray} \label{eq-II-12}
H \, Q
& = &
\chi H_\chi \, + \, \bchi^2 H \chi \, - \,
\big( \bchi H_\bchi + \chi^2 H \bchi \big)
\, H_\bchi^{-1} \bchi H \chi.
\end{eqnarray}
Canceling the second term on the r.h.s. with the first term in the parentheses in the third term, we see that the r.h.s is equal $\chi \, F$, which gives  the first equality in (\ref{eq-II-11}).
%

Now, suppose first that the operator $F$ has bounded invertible and define
\begin{equation} \label{eq-II-13}
R \ := \ Q \: F^{-1} \: Q^\#
\; + \; \bchi \, H_\bchi^{-1} \bchi .
\end{equation}
Using (\ref{eq-II-11}) and \eqref{Hchi}, we obtain
\begin{eqnarray} \label{eq-II-14}
H \, R
& = &
H \, Q \, F^{-1} \, Q^\#
\; + \; \big( \bchi H_\bchi + \chi^2 H \bchi \big)
\, H_\bchi^{-1} \bchi
\\ \nonumber
& = &
\chi \, Q^\# \; + \; \bchi^2
\; + \; \chi^2 H \bchi \, H_\bchi^{-1} \bchi
\\ \nonumber
& = &
\chi^2 \; + \; \bchi^2 \ = \ \one ,
\end{eqnarray}
and, similarly, $R H = \one$. Thus $R = H^{-1}$,
and \eqref{resolvrel} holds true.

Conversely, suppose that $H$ is bounded invertible. 
Then, using the definition of $F$ and the relation $\chi^2 + \bchi^2=\one$, we obtain
\begin{eqnarray} \label{eq-II-16}
F \, \chi \, H^{-1} \, \chi 
& = &
\chi H \, \chi^2 \; H^{-1} \, \chi 
- \; \chi H \bchi \, H_\bchi^{-1} \bchi H \chi^2  \, H^{-1} \chi
\\ \nonumber
& = &
\chi H \, \chi^2 \; H^{-1} \, \chi -  \; \chi H \bchi \, H_\bchi^{-1} \bchi H   \, H^{-1} \chi + \; \chi H \bchi \, H_\bchi^{-1} \bchi H \bchi^2  \, H^{-1} \chi
\\ \nonumber
& = &
\chi H \, \chi^2 \; H^{-1} \, \chi + \; \chi H \bchi^2  \, H^{-1} \chi=\chi^2 .
\end{eqnarray}
Similarly, one checks that $\chi \, H^{-1} \, \chi F = \one$. Thus $F$ is invertible on $\Ran\, \chi$ 
with inverse
$F^{-1} = \chi \, H^{-1} \, \chi$. 

\noindent
\textbf{Proof of (ii) ($H\psi = \lambda \psi\ \Longleftrightarrow$
$F_{\rho} (H - \lambda) \, \vphi = 0$).} 
If $\psi \in \cH \setminus \{0\}$ solves $ H \psi = 0$
then (\ref{eq-II-11}) implies that
\begin{equation} \label{eq-II-16-2}
F  \chi \psi  \ = \ Q^\# \, H \, \psi \ = \ 0 .
\end{equation}
Furthermore, by \eqref{Hchi},  
$0 \ = \ \bchi \, H \, \psi \ = \
H_\bchi \, \bchi \psi \: + \: \bchi H \chi^2 \psi ,$ 
and hence
\begin{equation} \label{eq-II-18}
Q \, \chi \psi
\ = \
\chi^2 \psi \, - \, \bchi H_\bchi^{-1} \bchi H \chi^2 \psi
\ = \
\chi^2 \psi \, + \, \bchi^2 \psi \ = \ \psi .
\end{equation}
Therefore, $\psi \neq 0$ implies $\chi \psi \neq 0$.

If $\vphi \in \Ran\, \chi \setminus \{0\}$ solves
$F \vphi = 0$ then the definition of $Q$ implies that
\begin{equation} \label{eq-II-22}
\chi Q\vphi =\chi\vphi= \vphi,
\end{equation}
\DETAILS{(\ref{eq-II-11}) implies that\begin{equation} \label{eq-II-19}
H \, Q \vphi \ = \ \chi \, F \vphi \ = \ 0 .
\end{equation}
Since $T$ is invertible on $\Ran\, \bchi$, the trivial
identity $F = T + \chi W Q$ implies that
\begin{equation} \label{eq-II-20}
\bchi \ = \ T^{-1} \, \bchi \, T
\ = \ T^{-1} \, \bchi \, (F \, - \, \chi W Q) ,
\end{equation}
which, together with $\chi = \chi Q$, gives
\begin{equation} \label{eq-II-21}
\one \ = \ \bchi \, + \, \chi \ = \
T^{-1} \bchi F \; + \;
\big(\chi \, - \, T^{-1} \bchi \chi W \big) \, Q .
\end{equation}
Applying (\ref{eq-II-21}) to $\vphi$, we obtain that
\begin{equation} \label{eq-II-22}
\vphi \ = \ \big( \chi \, - \, T^{-1} \bchi \chi W \big) \,
Q \vphi ,
\end{equation}}
which implies that $Q \vphi \neq 0$ provided $\vphi \neq 0$.

\noindent
\textbf{Proof of (iii) ($\dim \cern (H - \lambda) = \dim \cern F_{\rho} (H -
\lambda)$).}
By (i), $\dim \cern H =0$ is equivalent to $\dim \cern F =0$,
assuming that $H\in D(F)$. We may
therefore assume that $\cern H \neq 0$ and $\cern F \neq 0$ are
both nontrivial.
Eq.~(\ref{eq-II-18}) shows that $\chi: \cern H \to \cern F$ is injective,
hence $\dim \cern H \leq \dim \cern F$, and
Eq.~(\ref{eq-II-22}) shows that $Q: \cern F \to \cern H$ is injective,
hence $\dim \cern H \geq \dim \cern F$.
This establishes (iv) and moreover that
$\chi: \cern H \to \cern F$ and $Q: \cern F \to \cern H$ are
actually bijections.
\QED

\section{Smooth Feshbach-Schur Map}\label{sec:sfm}
\subsection{Definition and isospectrality}\label{sec:sfm-def-iso}
We define the smooth Feshbach-Schur map and formulate its important isospectral property
Let $\chi$,  $\bchi$ be a partition of unity on a separable Hilbert
space $\cH$, i.e. $\chi$ and  $\bchi$ are positive operators on
$\cH$ whose norms are bounded by one, $0 \leq \chi, \bchi \leq
\mathbf{1}$, and $\chi^{2}+ \bchi^{2} = \mathbf{1}$. We assume that
$\chi$ and $\bchi$ are nonzero. Let $\tau$ be a (linear) projection
acting on closed operators on $\cH$ with the property that operators
in its image commute with $\chi$ an
$\tau(\textbf{1}) =\textbf{1}$.
Let $\overline{\tau}:= \mathbf{1} - \tau$ and define
\begin{equation} \label{Htauchi} 
H_{\tau,\chi^{\#}} \ \; :=  \tau(H) \: + \: \chi^{\#}
\overline{\tau}(H)\chi^{\#} ,
\end{equation}
where $\chi^{\#}$ stands for either $\chi$ or $\bchi$.

Given $\chi$ and $\tau$ as above, we denote by $D_{\tau,\chi}$ the
space of closed operators, $H$, on $\cH$ which belong to the domain
of $\tau$ and satisfy the following three conditions:

(i) $\tau$ and $\chi$ (and therefore also $\btau$ and $\bchi$) leave
the domain $D(H)$ of $H$ invariant:
\begin{equation}\label{domtauchi} 
D(\tau(H))=D(H)\ \mbox{and}\  \chi D(H)\subset D(H),
\end{equation}

(ii)
\begin{equation}\label{HtaubchiInvert} 
H_{\tau,\bchi}\ \mbox{is (bounded) invertible on}\
\Ran \, \bchi,
\end{equation}

(iii)
\begin{equation}\label{btauHbound} 
\overline{\tau}(H) \chi\ \mbox{and}\ \chi
\overline{\tau}(H)\ \mbox{extend to bounded operators on}\ \cH.
\end{equation}
(For more general conditions see \cite{bcfs1, 
 GriesemerHasler1}.)
%
%

The \textit{smooth Feshbach-Schur map (SFM)} maps operators 
 from $D_{\tau,\chi}$ into operators on $\cH$ by 
\begin{equation} \label{sfm}
 F_{\tau,\chi} (H) \ := \ H_0 \, + \, \chi W\chi \, -
\, \chi W \bchi H_{\tau,\bchi}^{-1} \bchi W \chi ,
\end{equation}
where $H_0 := \tau(H)$ and $W := \overline{\tau}(H)$. Note that $H_0$
and $W$ are closed operators on $\cH$ with coinciding domains, $
D(H_0)= D(W)=D(H)$, and $H = H_0 + W$. We remark that the domains of
$\chi W\chi$, $\bchi W\bchi$, $H_{\tau,\chi}$, and $H_{\tau,\bchi}$
all contain $D(H)$.

Define operators $Q_{\tau,\chi} (H)  :=  \chi \: - \: \bchi \, H_{\tau,\bchi}^{-1}
\bchi W \chi$ and $Q_{\tau,\chi} ^\#(H)  :=  \chi \: - \: \chi W \bchi \,
H_{\tau,\bchi}^{-1} \bchi$. The following result (\cite{bcfs1}) generalizes Theorem \ref{thm:isospF} above; its proof is similar to the one of that theorem:

\begin{theorem}[Isospectrality of SFM] \label{sfmisosp}
 {\em Let $0 \leq \chi \leq \one$ and $H\in D_{\tau,\chi}$ be an operator on a separable Hilbert space $\cH$. Then we
have the following results:
\begin{itemize}
\item[(i)]
$H$ is bounded invertible on $\cH$ if and only if
$F_{\tau,\chi} (H)$ is bounded invertible on $\Ran\, \chi$.
In this case
\begin{eqnarray} \label{Hinvrepr} 
H^{-1} & = & Q_{\tau,\chi} (H) \: F_{\tau,\chi} (H)^{-1} \: Q_{\tau,\chi} (H)^\#
\; + \; \bchi \, H_\bchi^{-1} \bchi , \hspace{8mm}
\\ \label{FHinvrepr} 
F_{\tau,\chi} (H)^{-1} & = &
\chi \, H^{-1} \, \chi \; + \; \bchi \, \tau(H)^{-1} \bchi . \hspace{8mm}
\end{eqnarray}
\item[(ii)]
If $\psi \in \cH \setminus \{0\}$ solves $H \psi = 0$
then $\vphi := \chi \psi \in \Ran\, \chi \setminus \{0\}$
solves $F_{\tau,\chi} (H) \, \vphi = 0$.
\item[(iii)]
If $\vphi \in \Ran\, \chi \setminus \{0\}$ solves
$F_{\tau,\chi} (H) \, \vphi = 0$
then $\psi := Q_{\tau,\chi} (H) \vphi \in \cH \setminus \{0\}$ solves $H \psi = 0$.
\item[(iv)]
The multiplicity of the spectral value $\{0\}$ is conserved in the sense that
$\dim \Null H $ $= \dim \Null F_{\tau,\chi} (H)$.
\DETAILS{\item[(v)]
Assume, in addition, that $H = H^*$ and $\tau(H)=\tau(H)^*$ are self-adjoint,
and introduce the bounded operators
\begin{eqnarray} \label{eq-II-8.1a}
M & := & H_{\tau,\chi}^{-1} \, \bchi \, (H-\tau (H)) \, \chi
\hspace{8mm} \mbox{and}
\\ \label{eq-II-8.2}
N & := &
\big( \one \, + \, M^* M \big)^{-1/2} .
\end{eqnarray}
Then, for any $\psi \in \cH$,
\begin{eqnarray} \label{eq-II-8.3}
\lefteqn{
\lim_{\eps \searrow 0} \; \rIm
\big\la \psi , \: (H - i \eps)^{-1} \, \psi \big\ra
\ = \
} \\ \nonumber & &
\lim_{\eps \searrow 0} \; \rIm
\Big\la N \, Q_{\tau,\chi} (H)^* \, \psi , \:
\big( N \, F_{\tau,\chi} (H) \, N \, - \, i \eps \big)^{-1}
\: N \, Q_{\tau,\chi} (H)^* \, \psi \Big\ra
\end{eqnarray}
and
\begin{eqnarray} \label{eq-II-8.4}
\lefteqn{
\lim_{\eps \searrow 0} \; \rIm
\big\la \psi , \:
\big( N \, F_{\tau,\chi} (H) \, N \, - \, i \eps \big)^{-1}
\, \psi \big\ra
\ = \
} \\ \nonumber & &
\lim_{\eps \searrow 0} \; \rIm
\big\la \chi \, N^{-1} \, \psi , \:
( H  \, - \, i \eps \big)^{-1} \, \chi \, N^{-1}
\, \psi \Big\ra .
\end{eqnarray}}
\end{itemize}}
\end{theorem}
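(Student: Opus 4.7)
The overall strategy mirrors the proof of Theorem \ref{thm:isospF} given in Appendix \ref{sec:isosprel}, but now with the smooth partition $\chi^{2}+\bchi^{2}=\mathbf{1}$ in place of a genuine orthogonal decomposition, and with $H_{0}:=\tau(H)$ and $W:=\bar{\tau}(H)$ in the roles previously played by the ``diagonal'' and ``off-diagonal'' parts. The crucial structural input is that, by hypothesis, $\chi$ and $\bchi$ commute with every operator in the range of $\tau$; in particular $[\chi,H_{0}]=[\bchi,H_{0}]=0$. Combining this with $H=H_{0}+W$ and $\chi^{2}+\bchi^{2}=\mathbf{1}$ yields the two basic identities
\begin{equation*}
H\chi \;=\; \chi H_{0}+W\chi,\qquad H\bchi \;=\; \bchi H_{\tau,\bchi}+\chi^{2}W\bchi,
\end{equation*}
the second obtained by writing $H_{0}\bchi=\bchi H_{\tau,\bchi}-\bchi^{2}W\bchi$ and adding $W\bchi=(\chi^{2}+\bchi^{2})W\bchi$.

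The engine of the whole argument is the pair of \emph{transfer identities}
\begin{equation*}
H\,Q_{\tau,\chi}(H)\;=\;\chi\,F_{\tau,\chi}(H),\qquad Q_{\tau,\chi}^{\#}(H)\,H\;=\;F_{\tau,\chi}(H)\,\chi,
\end{equation*}
which I would derive by substituting the definition of $Q_{\tau,\chi}(H)=\chi-\bchi H_{\tau,\bchi}^{-1}\bchi W\chi$, using the formula for $H\bchi$ above so that $H\bchi\cdot H_{\tau,\bchi}^{-1}\bchi W\chi=\bchi^{2}W\chi+\chi^{2}W\bchi H_{\tau,\bchi}^{-1}\bchi W\chi$, and then exploiting $\mathbf{1}-\bchi^{2}=\chi^{2}$ to pull out a factor of $\chi$ on the left. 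The second identity is the adjoint-like mirror of the first.

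With the transfer identities in hand, part (i) is essentially algebraic. To prove the forward direction, set $R:=Q_{\tau,\chi}(H)F_{\tau,\chi}(H)^{-1}Q_{\tau,\chi}^{\#}(H)+\bchi H_{\tau,\bchi}^{-1}\bchi$ and compute $HR$; the first term becomes $\chi F_{\tau,\chi}(H)\cdot F_{\tau,\chi}(H)^{-1}Q_{\tau,\chi}^{\#}(H)=\chi Q_{\tau,\chi}^{\#}(H)$, while the second, after substituting $H\bchi=\bchi H_{\tau,\bchi}+\chi^{2}W\bchi$, collapses to $\bchi^{2}+\chi^{2}W\bchi H_{\tau,\bchi}^{-1}\bchi$. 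Expanding $\chi Q_{\tau,\chi}^{\#}(H)=\chi^{2}-\chi^{2}W\bchi H_{\tau,\bchi}^{-1}\bchi$, the cross terms cancel and one lands on $\chi^{2}+\bchi^{2}=\mathbf{1}$. Verifying $RH=\mathbf{1}$ is symmetric. The converse direction establishing \eqref{FHinvrepr} follows from a direct multiplication check, using $F_{\tau,\chi}(H)\chi=Q_{\tau,\chi}^{\#}(H)H$ and $\chi^{2}+\bchi^{2}=\mathbf{1}$ to peel the boundary terms onto $H_{0}^{-1}$ via the identities $F_{\tau,\chi}(H)\bchi=H_{0}\bchi+\chi^{2}W\bchi-\chi W\bchi H_{\tau,\bchi}^{-1}\bchi W\chi\bchi$ (and its mirror on the left).

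For parts (ii)--(iv), the transfer identities immediately give that $\chi$ maps $\Null H$ into $\Null F_{\tau,\chi}(H)$ and $Q_{\tau,\chi}(H)$ maps $\Null F_{\tau,\chi}(H)$ into $\Null H$. To see these are mutual inverses one checks, for $H\psi=0$, that
\begin{equation*}
\psi-Q_{\tau,\chi}(H)\chi\psi\;=\;\bchi H_{\tau,\bchi}^{-1}\bigl[H_{\tau,\bchi}\bchi\psi+\bchi W\chi^{2}\psi\bigr]\;=\;\bchi H_{\tau,\bchi}^{-1}\bchi H\psi\;=\;0,
\end{equation*}
where the middle equality uses $H_{\tau,\bchi}\bchi\psi=\bchi H_{0}\psi+\bchi W\bchi^{2}\psi$ and $\chi^{2}+\bchi^{2}=\mathbf{1}$. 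The reverse composition $\chi Q_{\tau,\chi}(H)\varphi=\varphi$ on $\Null F_{\tau,\chi}(H)\cap\Ran\,\chi$ is a similar direct manipulation. These bijections give (ii), (iii), and the multiplicity statement (iv).

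The main obstacle I anticipate is bookkeeping: unlike the hard-projection case of Theorem \ref{thm:isospF} where $\chi^{2}=\chi$, one must carefully distinguish $\chi$ from $\chi^{2}$ throughout, and exploit the identity $\chi^{2}+\bchi^{2}=\mathbf{1}$ at precisely the right moments so that cross terms of the form $\chi^{2}W\bchi H_{\tau,\bchi}^{-1}\bchi W\chi$ cancel against their counterparts. A secondary subtlety, invisible in the projection case, is that the inverse formula \eqref{FHinvrepr} uses $\tau(H)^{-1}$ rather than $H_{\tau,\bchi}^{-1}$, reflecting the fact that $F_{\tau,\chi}(H)$ in the smooth formalism acts on all of $\cH$ (not just a reducing subspace) and coincides with $\tau(H)=H_{0}$ on $\Ran\,\bchi$-like directions modulo terms that are absorbed by $\chi H^{-1}\chi$.
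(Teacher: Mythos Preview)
Your plan is correct and follows the paper's own route essentially verbatim: the paper does not give a separate proof of Theorem~\ref{sfmisosp} but simply states that ``its proof is similar to the one of that theorem'' (i.e., Theorem~\ref{thm:isospF}), and the argument in Appendix~\ref{sec:isosprel} is already written keeping the powers $\chi^{2}$, $\bchi^{2}$ precisely so that it transfers to the smooth setting. Your transfer identities $H\,Q=\chi F$ and $Q^{\#}H=F\chi$, the definition of $R$ and the verification $HR=\one$, and the computation $Q\chi\psi=\psi$ on $\Null H$ all match the paper's equations (\ref{eq-II-11})--(\ref{eq-II-18}) line for line, now with $H_{0}=\tau(H)$, $W=\bar\tau(H)$, and $H_{\tau,\bchi}$ replacing $H_{\bchi}$.

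The one place where you are too quick is the reverse composition: the claim ``$\chi\,Q\varphi=\varphi$ on $\Null F\cap\Ran\chi$ is a similar direct manipulation'' does not go through by the same bookkeeping, because in the smooth case $\chi\bchi\neq 0$, so $\chi Q\varphi=\chi^{2}\varphi-\chi\bchi H_{\tau,\bchi}^{-1}\bchi W\chi\varphi$ does not collapse to $\varphi$ the way $\chi Q\varphi=\chi\varphi=\varphi$ does for genuine projections. What one actually uses here is the identity $F=H_{0}+\chi W Q$ (which you implicitly have) together with the invertibility hypothesis to produce a left inverse for $Q$ on $\Null F$; this is exactly the argument sketched in the paper's suppressed \texttt{DETAILS} block after (\ref{eq-II-18}), where one writes $\one=H_{0}^{-1}\bchi F+(\chi-H_{0}^{-1}\bchi\chi W)Q$ (or the appropriate variant with $H_{\tau,\bchi}$) and applies it to $\varphi$. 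So your strategy is right, but this particular step needs the extra identity rather than a mirror of the $Q\chi\psi=\psi$ calculation.
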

We also mention the following useful property of $ F_{\tau, \chi}$:
 \begin{equation} \label{Hsa}
H\ \mbox{is self-adjoint}\ \quad \Rightarrow \quad   F_{\tau, \chi}(H)\ \mbox{is self-adjoint}.
\end{equation}

\subsection{ Transfer of local decay}\label{sec:locdectransf}
We have shown above that the  smooth Feshbach-Schur map is isospectral. In fact, under certain additional conditions it preserves (or transfers) much stronger spectral property -
the limiting absorption principle (LAP) (\cite{fgs3}), which is defined as follows. 
Let $\Delta\subset \R$ be an interval, $\nu>0$ and $B$, a self-adjoint operator.
We say that a  $C^1$ family of self-adjoint operators, s.t. $H(\lambda) \in D_{\tau,\chi}$ has the ($\Delta, \nu, B, \theta$) \textit{limiting  absorption principle} (LAP) property iff
\begin{eqnarray} 
\lim_{\eps\rightarrow 0+}\la B\ra^{-\theta}\big(H(\lambda)-i\eps\big)^{-1}\la B\ra^{-\theta}\ \mbox{exists and}\ \in C^\nu(\Delta).\label{BHinvB'}
\end{eqnarray}
Usually LAP holds for $\nu< \theta-\frac{1}{2}$.
 One can show that the LAP implies the local decay property (see, e.g. \cite{RS}, vol III;  recall that the definition of the local decay property is given in Section \ref{sec:relat-res}).
\begin{theorem} \label{thm:LAPtransfer}
 {\em Let $\Delta\subset \R$ and $,\ \forall \lambda\in\Delta,\ H(\lambda)$ be a $C^1$ family of self-adjoint operators, s.t. $H(\lambda) \in D_{\tau,\chi}$.  
 Assume that there is a self-adjoint operator $B$ s.t.  
\begin{equation} \label{commbnds} 
\adj_B^j(A)\ \mbox{is bounded and differentiable in}\ \lam,\ \forall j\le 2,
\end{equation}
where $A$ is one of the operators
$\chi,\ \overline{\chi},\ \chi \bar\tau(H(\lambda)),\ \bar\tau(H(\lambda))\chi$, $\partial_\lambda^k (\bchi H_{\tau,\bchi}(\lambda)^{-1} \bchi),$ $k=0, 1$.
Then, for any $0\le \nu\le 1$ and $0 < \theta \le 1$ and in the operator norm, we have
\item 
\begin{eqnarray} 
&&\lim_{\eps\rightarrow 0+}\lefteqn{\la B\ra^{-\theta}\left(F_{\tau, \chi}(H(\lambda))-i\eps\right)^{-1}\la B\ra^{-\theta}\ \mbox{exists and}\ \in C^\nu(\Delta)} \label{BFinvB} 
\\ &&\Rightarrow
\lim_{\eps\rightarrow 0+}\la B\ra^{-\theta}\big(H(\lambda)-i\eps\big)^{-1}\la B\ra^{-\theta}\ \mbox{exists and}\ \in C^\nu(\Delta).\label{BHinvB}
\end{eqnarray}}
\end{theorem}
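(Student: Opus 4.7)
My plan is to apply the isospectrality formula (\ref{Hinvrepr}) from Theorem on the SFM, with the substitution $H \to H(\lambda) - i\eps$, giving
\begin{equation*}
(H(\lambda) - i\eps)^{-1} \; = \; Q_\eps \, F_\eps^{-1} \, Q_\eps^\# \; + \; \bchi\,(H_{\tau,\bchi}(\lambda) - i\eps)^{-1}\,\bchi ,
\end{equation*}
where $F_\eps := F_{\tau,\chi}(H(\lambda) - i\eps)$, $W := \overline{\tau}(H(\lambda))$, and $Q_\eps, Q_\eps^\#$ are the associated intertwiners. I would handle the two summands separately. The second summand is holomorphic in $\eps$ at $\eps = 0$ because $H_{\tau,\bchi}(\lambda)$ is boundedly invertible on $\Ran\,\bchi$; thus its $\la B\ra^{-\theta}$-weighted limit is $\la B\ra^{-\theta} \bchi H_{\tau,\bchi}^{-1} \bchi \la B\ra^{-\theta}$, and the commutator bounds (\ref{commbnds}) (the $k=0,1$ cases on $\bchi H_{\tau,\bchi}^{-1}\bchi$) give this term as bounded and $C^\nu$ in $\lambda$ directly.

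For the first summand, I would first relate $F_\eps^{-1}$ to the resolvent controlled by the LAP hypothesis. Since $\tau(\one) = \one$ and $\overline{\tau}(\one)=0$, a short computation yields
\begin{equation*}
F_\eps \; = \; F_{\tau,\chi}(H(\lambda)) \, - \, i\eps\bigl(\one + K_\eps\bigr),\qquad K_\eps := \chi W \bchi H_{\tau,\bchi}^{-1} (H_{\tau,\bchi} - i\eps)^{-1} \bchi W \chi ,
\end{equation*}
with $K_\eps$ uniformly bounded as $\eps \to 0^+$. The second resolvent identity then gives
\begin{equation*}
F_\eps^{-1} \; = \; \bigl(F_{\tau,\chi}(H) - i\eps\bigr)^{-1} \, + \, \bigl(F_{\tau,\chi}(H) - i\eps\bigr)^{-1}\, i\eps K_\eps \, F_\eps^{-1}.
\end{equation*}
Sandwiching with $\la B\ra^{-\theta}$ and inserting $\la B\ra^{\theta}\la B\ra^{-\theta}=\one$ at the junction, I set $M_\eps := \la B\ra^{-\theta} F_\eps^{-1}\la B\ra^{-\theta}$, $A_\eps := \la B\ra^{-\theta} (F_{\tau,\chi}(H) - i\eps)^{-1}\la B\ra^{-\theta}$, and $\tilde K_\eps := i\eps\,\la B\ra^{\theta} K_\eps \la B\ra^{\theta}$, obtaining $M_\eps = A_\eps + A_\eps\,\tilde K_\eps\, M_\eps$. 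The LAP hypothesis (\ref{BFinvB}) makes $A_\eps$ uniformly bounded and convergent; the commutator bounds (\ref{commbnds}) for $j \le 2$, combined via the Leibniz rule together with the expansion $(H_{\tau,\bchi}-i\eps)^{-1} = H_{\tau,\bchi}^{-1} + i\eps H_{\tau,\bchi}^{-1}(H_{\tau,\bchi}-i\eps)^{-1}$, give $\tilde K_\eps = O(\eps)$ uniformly. Hence $\one - A_\eps\tilde K_\eps$ is invertible for small $\eps$ and $M_\eps = (\one - A_\eps\tilde K_\eps)^{-1} A_\eps$ converges to $\lim_{\eps \to 0^+} A_\eps$.

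To assemble the conclusion, I would write
\begin{equation*}
\la B\ra^{-\theta} Q_\eps F_\eps^{-1} Q_\eps^\# \la B\ra^{-\theta} \; = \; \bigl[\la B\ra^{-\theta} Q_\eps \la B\ra^{\theta}\bigr]\, M_\eps \,\bigl[\la B\ra^{\theta} Q_\eps^\# \la B\ra^{-\theta}\bigr].
\end{equation*}
The outer bracketed factors are products of $\chi$, $\bchi$, $\chi W$, $W\chi$, and $\bchi(H_{\tau,\bchi}-i\eps)^{-1}\bchi$, each of whose iterated $B$-commutators are controlled by (\ref{commbnds}); the usual commutator interpolation for $0 < \theta \le 1$ then makes the brackets bounded operators that converge as $\eps \to 0^+$. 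Adding the regular second summand proves the LAP conclusion (\ref{BHinvB}). For the $C^\nu(\Delta)$-regularity in $\lambda$, I would differentiate the resulting expression once in $\lambda$ and use the differentiability-in-$\lambda$ half of (\ref{commbnds}) on every constituent, the $C^\nu$-regularity of $A_\eps$ from the hypothesis, and the smoothness of $A\mapsto (\one - A\tilde K_\eps)^{-1}A$ at $\tilde K_\eps=0$.

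The main obstacle I expect is establishing the \emph{uniform-in-$\eps$} $B$-commutator bounds for the compound objects $K_\eps$, $Q_\eps$, $Q_\eps^\#$, and $\bchi(H_{\tau,\bchi}-i\eps)^{-1}\bchi$ with enough regularity to legitimise the weight shifts $\la B\ra^{\pm\theta}(\cdot)\la B\ra^{\mp\theta}$. The hypothesis (\ref{commbnds}) is designed exactly for this: the listed operators are the raw building blocks, and the resolvent expansion above quarantines all $\eps$-dependence into a manifestly $O(\eps)$ perturbation of them. Once that uniformity is in hand, the Neumann inversion of $\one - A_\eps \tilde K_\eps$ is automatic and the remaining manipulations reduce to bookkeeping.
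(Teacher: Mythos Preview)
The paper does not actually prove Theorem~\ref{thm:LAPtransfer}: it states the result, remarks that it reduces the LAP for $H-\lambda$ to that for $\cR_\rho^n(H-\lambda)$, and defers to \cite{fgs3} for the argument. So there is no in-paper proof to compare against.

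Your outline is the natural route and is essentially the one used in \cite{fgs3}: invoke the inversion formula \eqref{Hinvrepr} for $H(\lambda)-i\eps$, treat the $\bchi H_{\tau,\bchi}^{-1}\bchi$ term as regular at $\eps=0$, and reduce the $Q_\eps F_\eps^{-1}Q_\eps^\#$ term to the hypothesised LAP for $F_{\tau,\chi}(H(\lambda))$ via a second-resolvent perturbation that is $O(\eps)$ after conjugation by $\la B\ra^{\theta}$. Your identification of the main obstacle is also correct: the entire content of the proof is the bookkeeping that pushes the weights $\la B\ra^{\pm\theta}$ through $Q_\eps$, $Q_\eps^\#$, and $K_\eps$ uniformly in $\eps$, and this is exactly what the hypothesis \eqref{commbnds} is calibrated to deliver.

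One point to tighten: in the smooth setting $\bchi$ is not a projection, so $\bchi$ does not commute with $H_{\tau,\bchi}$ and you cannot freely insert $\bchi$'s inside the product $H_{\tau,\bchi}^{-1}(H_{\tau,\bchi}-i\eps)^{-1}$ appearing in $K_\eps$. The clean way is to iterate the resolvent identity once more so that every factor of $(H_{\tau,\bchi}-i\eps)^{-1}$ appearing between $\bchi$'s is either replaced by $\bchi H_{\tau,\bchi}^{-1}\bchi$ (covered by \eqref{commbnds}) or carries an extra explicit $\eps$, and then close by a Neumann series uniformly for small $\eps$. With that adjustment your plan goes through.
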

 This allows one to reduce the proof of the LAP for the original operator, $H-\lam$, to the proof of this property for a much simpler one, $\cR_\rho^n(H-\lam)$.

\bigskip

\subsection{ Transfer of analyticity}\label{sec:anal-transf}

\begin{theorem} 
 {\em Let $\Lam$ be an open set in $\C$ and
$H(\lambda)$, $\lambda\in\Lam$, a family of operators with a fixed domain, which belong to the domain of $F_{\tau\chi}$. Assume $H(\lambda)$ and $\tau(H(\lambda))$, with the same domain, are analytic in the sense of Kato (see e.g. \cite{RS}, vol IV). 
Then we have that
\begin{itemize}
\item $F_{\tau, \chi}(H(\lambda))$ is an analytic in $\lambda\in\Lam$ family of operators.
\end{itemize}}
\end{theorem}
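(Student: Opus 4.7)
The plan is to reduce the analyticity of $F_{\tau,\chi}(H(\lambda))$ to the standard fact that sums, products, and inverses (where they exist) of analytic operator-valued families are analytic. Writing
\[
F_{\tau,\chi}(H(\lambda)) = \tau(H(\lambda)) + \chi W(\lambda)\chi - \chi W(\lambda)\bchi\, H_{\tau,\bchi}(\lambda)^{-1}\bchi W(\lambda)\chi,
\]
with $W(\lambda):=\bar\tau(H(\lambda))=H(\lambda)-\tau(H(\lambda))$ and $H_{\tau,\bchi}(\lambda)=\tau(H(\lambda))+\bchi W(\lambda)\bchi$, the three summands must each be treated as an analytic operator-valued function.

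First I would note that by hypothesis $\lambda\mapsto H(\lambda)$ and $\lambda\mapsto\tau(H(\lambda))$ are Kato-analytic families on the common domain $D(H)$, so their difference $W(\lambda)$ is a Kato-analytic family on the same domain. Next, because the projection $\tau$ commutes with $\chi$ and $\bchi$ and leaves $D(H)$ invariant (condition (i) in the definition of $D_{\tau,\chi}$), the quantities $\chi W(\lambda)\chi$, $\chi W(\lambda)\bchi$, and $\bchi W(\lambda)\chi$ extend to bounded operators on $\cH$ for every $\lambda\in\Lambda$ by condition (iii). Their analyticity as bounded-operator-valued functions follows by expressing matrix elements $\langle\varphi,\chi W(\lambda)\bchi\psi\rangle=\langle\chi\varphi,W(\lambda)\bchi\psi\rangle$ (with $\bchi\psi\in D(H)$) and invoking Kato analyticity of $W(\lambda)$ on $D(H)$; uniform boundedness on compacta plus weak analyticity yields norm-analyticity.

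Next I would handle the resolvent factor $H_{\tau,\bchi}(\lambda)^{-1}$. The family $H_{\tau,\bchi}(\lambda)=\tau(H(\lambda))+\bchi W(\lambda)\bchi$ is the sum of a Kato-analytic family and a bounded-analytic family, hence itself Kato-analytic with fixed domain $D(H)$. By the assumption $H(\lambda)\in D_{\tau,\chi}$, the operator $H_{\tau,\bchi}(\lambda)$ is boundedly invertible on $\Ran\bchi$ for each $\lambda\in\Lambda$, i.e.\ $0\in\rho(H_{\tau,\bchi}(\lambda)\!\upharpoonright\!\Ran\bchi)$. A standard consequence of Kato analyticity is that $\mu\mapsto(H_{\tau,\bchi}(\lambda)-\mu)^{-1}$ is jointly analytic in $(\lambda,\mu)$ on the set where the inverse exists; specializing to $\mu=0$ gives analyticity of $\bchi H_{\tau,\bchi}(\lambda)^{-1}\bchi$ as a bounded-operator-valued function of $\lambda$.

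Finally, since products and sums of bounded-operator-valued analytic functions are analytic, and since $\tau(H(\lambda))$ is Kato-analytic, the identity displayed above exhibits $F_{\tau,\chi}(H(\lambda))$ as a Kato-analytic family (the three summands having a common $\lambda$-independent domain inherited from that of $\tau(H(\lambda))$, with the last two being genuinely bounded perturbations). The main technical obstacle in this plan is the passage from Kato analyticity of the unbounded family $W(\lambda)$ to norm analyticity of its compressions $\chi W(\lambda)\chi$, $\chi W(\lambda)\bchi$, etc.: one must verify that the bounded extensions guaranteed by condition (iii) depend analytically on $\lambda$, which is where a resolvent representation (or, equivalently, the fact that $W(\lambda)(H_0(\lambda)+i)^{-1}$ is a bounded analytic family together with analyticity of appropriate compressions) is most cleanly applied.
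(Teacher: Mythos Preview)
Your argument is correct and follows essentially the same route as the paper: write out $F_{\tau,\chi}(H(\lambda))$ via \eqref{sfm}, observe that $\tau(H(\lambda))$, $\chi W(\lambda)$, $W(\lambda)\chi$ are analytic, and reduce everything to the analyticity of $\bchi H_{\tau,\bchi}(\lambda)^{-1}\bchi$. The only cosmetic difference is that the paper dispatches this last point by invoking a Neumann series expansion of the inverse about a fixed $\lambda_0$, whereas you appeal to the general fact that resolvents of Kato-analytic families are analytic---which is of course proved by exactly that Neumann series.
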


\begin{proof} Since that $H(\lambda)$,  $\ H_0(\lambda):=\tau(H(\lambda))$ and $W(\lambda) \chi\ \mbox{and}\ \chi
W(\lambda)$, where $W(\lambda):=\overline{\tau}(H(\lambda))$, are analytic in
$\lambda \in \Lam,$ we see from the
definition of the smooth Feshbach-Schur map $F_{\tau\chi}$ in \eqref{sfm}
 that $F_{\tau\chi}(H(\lambda))$ is analytic in $\lambda \in
\Lam$, provided $\bchi_\rho H(\lambda)_{\tau,\bchi_\rho}^{-1}$ $ \bchi_\rho$
is analytic in $\Lam$. The analyticity of the latter family
follows by the Neumann series argument. 
\end{proof}
One can generalize the above result to $\Lam$'s which are open sets in a complex Banach space. Recall that a complex vector-function $f$ in an open set $\Lam$ in a
complex Banach space $\cW$ is said to be \textit{analytic} iff it is
locally bounded and G\^{a}teaux-differentiable. One can show that
$f$ is analytic iff $\forall \xi \in \cW,\ f(H+ \tau \xi)$ is
analytic in the complex variable $\tau$ for $|\tau|$ sufficiently
small (see \cite{Berger, HillePhillips}). Furthermore if $f$ is
analytic in $\Lam$ and $g$ is an analytic vector-function from an
open set $\Omega$ in $\mathbb{C}$ into  $\Lam$, then the composite
function $f\circ g$ is analytic on $\Omega$.


\bigskip


\bigskip

\section{Mass Renormalization} \label{sec:mass-renorm}

As the free electron is surrounded by virtual 'soft' photons its effective (inertial0 mass is greater than the value ('bare' mass) entering its Hamiltonian. One calls this  electron \textit{mass renormalization}. We begin with analyzing the definition of (inertial) mass in Classical Mechanics. Consider a classical particle with the Hamiltonian $h(x, k):=K(k)+V(x)$, where $K(k)$ is some function describing the kinetic energy of the particle. To find the particle mass in this case we have to determine the relation between the force and acceleration at very low velocities. The Hamilton equations give $\dot x=\p_k K$ and  $\dot k=F$, where $F=-\p_x V$ is the force acting on the particle. Assuming that $K$ has a minimum at $k=0$ and expanding $\p_k K (k)$ around $0$, differentiating the resulting relation $\dot x= K''(0)k$, where $K''(0)$ is the hessian of $K$ at $k=0$, w.r. to time and using the second Hamilton equation, we obtain $\ddot x= K''(0)F(x)$. This suggests to define the mass of the particle as $m=K''(0)^{-1}$, i.e. as the inverse of the Hessian of the energy, in the absence of external forces, as a function of of momentum at $0$. ($K(k)$ is called the dispersion relation.) We adopt this as a general definition: \textit{the (effective) mass of a particle interacting with fields is  the inverse of the Hessian of the energy of the total system as a function of of the total momentum at $0$.}

Now, we consider a single non-relativistic electron coupled to quantized electromagnetic field. Recall that the charge of electron is denoted by $-e$ and its \textit{bare} mass in our units is $m$. The corresponding Hamiltonian is
\begin{equation}\label{H-single}
	\Hn \, := \, \frac{1}{2m}(  i\nabla_x \otimes 1_f \,  - \, \c \Af(x) \, )^2
	\, + \, \1_{el} \otimes H_f ,
\end{equation}
acting on the space $L^2({\mathbb R}^{3})\otimes \cF\equiv \cH_\part \otimes \cH_f$. It is the generator for the dynamics of a single non-relativistic electron, and of
the electromagnetic radiation field, which interact via minimal coupling. Here recall $\Af(x)$ and $H_f$ are the quantized electromagnetic vector potential with ultraviolet cutoff and the field Hamiltonian and are defined in \eqref{A} and \eqref{Hf}

\bigskip

The system considered is \textit{translationally invariant} \index{translation ! invariance} in the sense that $\Hn$ commutes with the translations, $T_y$, 
 \DETAILS{In the Fock space representation this gives: $T_{y}:   	 \oplus_n\Psi_n(x, k_1, \dots, k_n)$\\ $\ra \oplus_n e^{ iy\cdot (k_1+\dots k_n)}	 \Psi_n(x+y, k_1, \dots, k_n)$ and therefore can be rewritten as $T_{y}: \Psi(x)\ra  e^{ iy\cdot P_\re}	 \Psi(x+y)$, where $ P_\re$ is the momentum
operator associated to the quantized radiation field,
\[\Pf=\sum_\lambda\int dk \, k \, a_\lambda^*(k)a_\lambda(k).\] 
As it is straightforward to show $T_y$ are unitary operators and that they satisfy the relations
$ T_{x+y} = T_xT_y,$
and therefore $y\ra T_{y}$ is a unitary Abelian representation of $\R^3$. We also note that $T_y$ commutes with  $H$,
\begin{equation*} T_y \Hn = \Hn T_y. \end{equation*}
Indeed, using \eqref{Ty} and the definitions of the operators $A(x)$ and $E(x)$, we arrive at  $T_y A(x) = (t_{y}A)(x) T_y$ and $T_y E(x) = (t_{y}E)(x) T_y$, which, due to the definition of  $\Hn$, gives  $T_y \Hn = \Hn T_y$.
Finally, we observe that the group $T_y$ is generated  by}
\begin{equation*} T_y \Hn = \Hn T_y, \end{equation*}
which in the present case take the form
 \begin{equation}\label{Ty'}
	T_{y}:  \Psi(\ux)\to  e^{ iy\cdot P_\re}	 \Psi(\ux+\uy), 
	\end{equation}
This as before leads to $\Hn$ commuting with the total momentum operator,
\eqn
	\Ptot \, := \, P_\el\otimes\1_f \, + \, \1_{el}\otimes \Pf,
\eeqn
of the electron and the photon field: $[H, \Ptot]=0$. Here
$P_\el:=-i\nabla_x$ and $ \Pf=\sum_\lambda\int dk \, k \, a_\lambda^*(k)a_\lambda(k)$ are electron and field momenta. 
Again as in Appendix \ref{sec:transl-Ham}, 
this leads to
%
%
\DETAILS{\bigskip

\subsection{Fiber decomposition with respect to total momentum}\label{subsec:fiber}
Let $\cH$ be the direct integral
	$\cH = \int_{\R^3}^\oplus \cH_P dP, $ 
with the fibers $\cH_P := \cF$ (this means that $\cH = L^2(\R^3, dP; \cF)$) and 
		define $U : \cH_\el \otimes \cH_{\re} \to \cH$ on smooth functions with compact domain by the formula
	\begin{equation}\label{U}
		(U \Psi)(P) = \int_{\mathbb{R}^3}e^{i(P-P_\re)\cdot y}\Psi(y) \d y. 
	\end{equation}
Then $U$ extends uniquely to a unitary operator (see below). Its converse is written, for  $\Phi (P)\  \in \mathcal{F}$, as
 \begin{equation}\label{Uinv} 
 (U^{-1}\Phi)(x)=\int_{\mathbb{R}^3} e^{-\i x\cdot(P-P_\re)}\Phi (P) \d P.
\end{equation}
The functions $\Phi (P)\ =\int_{\mathbb{R}^3} dye^{i(P-P_\re)\cdot y}\Psi(y)$ are called fibers of $\Psi$. We denote $\widehat\Psi(P) \, = \, (U \Psi)(P)$ and $	\check\Phi (x)
 \, = \, (U^{-1} \Phi)(x)$
\DETAILS{ For $k \in \R^3$, let $H_{ k}$ be the operator $H$ acting on $\cH_k$ with domain consisting of those $v \in \cH_k \cap H^2$ such that $\Psi$
	satisfies the boundary conditions $T_y \Psi(x) = \chi_k(t) \Psi(x)$. Then
	\begin{equation}
	\label{K-decomposition}
		UH U^{-1} = \int_{\R^3}^\oplus H_{ k} dk.
	\end{equation}
	and
	\begin{equation}
	\label{KKkspecrelat}
		\sigma(H) = \bigcup_{k\in\R^3} \sigma(H_{ k}).
	\end{equation}}
\DETAILS{\bigskip

We consider  the direct integral decomposition
\eqn
	\cH \, = \, \int^\oplus dp \, \cH_p \,,
\eeqn
with respect to $\Ptot$, where each fiber $\cH_p\cong {\rm Ker}(\Ptot-p)$ is isomorphic to $\Fo$.
It corresponds to the natural isomorphism
\eqn
	\cH \, \cong \, L^2(\R^3,\Fo) \,
\eeqn
which we will use in the sequel.


For $\psi\in\cH$, we define the Fourier transform with respect to the electron variable,
\eqn
	\widehat\psi(p) \, = \, \int dx \, e^{-i(p-\Pf)x} \psi(x)
\eeqn
with inverse transform
\eqn
	\check\phi
 \, = \, \int dp \, e^{i(p-\Pf)x}\phi(p) \,.
\eeqn}
\begin{lemma}
The  operations \eqref{U} 
and \eqref{Uinv} 
define unitary maps $L^2({\mathbb R}^{3})\otimes \cF\rightarrow\cH$ and  $\cH\rightarrow L^2({\mathbb R}^{3})\otimes \cF$, and are mutual inverses.
\end{lemma}
\prf
By density, we may assume that $\Psi$ is a generalized Schwartz class function,
\eqn
	{\mathcal S}_{\Fo} \, :=\,
	\big\{ \, \Psi\in L^2({\mathbb R}^{3})\otimes \cF \, \big| \, \sup_x\|x^\alpha\partial_x^\beta\Psi\|_{  \Fo} \, < \, \infty
	\; \; \; , \; \; \;  \forall \alpha,\beta\in\N_0^3 \, \big\} \,.
\eeqn
Then, it follows from standard arguments in Fourier analysis that
\eqn\label{eq-FT-def-1}
	(\widehat\psi)^\vee(x) &=&
	\int \d P \, e^{-i(P-\Pf)x} \int dx' \, e^{i(p-\Pf)x'} \, \Psi(x')
	\nonumber\\
	&=&
	\int dx' \,  \int \d P \,   e^{-ip(x-x')}e^{i\Pf(x-x')} \, \Psi(x')
	\nonumber\\
	&=&
	  \int dx' \,  \delta(x-x') \, e^{i\Pf(x-x')} \, \Psi(x')
	 \nonumber\\
	 &=&\Psi(x) \,.
\eeqn
On the other hand, for $\psi\in {\mathcal S}_{\Fo} $,
\eqn\label{eq-FTinv-def-1}
	(\check\Phi)^{\widehat{\;}}(P)
	&=&
	\int dx \, e^{i(P-\Pf)x} \int dq \, e^{i(q-\Pf)x} \, \Phi(q)
	\nonumber\\
	&=&
	 \int dq \, \int dx \, e^{i(p-q)x} \,  \Phi(q)
	 \nonumber\\
	 &=&
	 \int dq \, \delta(P-q) \, \Phi(q)
	 \nonumber\\
	 &=&\Phi(P) \,.
\eeqn
From the density of ${\mathcal S}_{\Fo}$ in $L^2({\mathbb R}^{3})\otimes \cF$, we infer that \eqref{eq-FT-def-1}
and \eqref{eq-FTinv-def-1}  define bounded maps 
which are mutual inverses. Unitarity can be checked easily.
\endprf
\DETAILS{We observe that
\eqn
	(\Ptot\psi)^{\widehat{\;}}(p) & = & \int dx \, e^{i(p-\Pf)x} (-i\nabla_x+\Pf)\psi(x)
	\nonumber\\
	&=& \int dx \, \big( \, (-i\nabla_x+\Pf)e^{i(p-\Pf)x}\,\big) \, \psi(x)
	\nonumber\\
	&=&p \, \widehat\psi(p) \,.
\eeqn
Hence, $\widehat\psi(p)\in\cH_p\cong {\rm Ker}(\Ptot-p)$, for any $p\in\R^3$.}
%
It follows that $\Hn$ admits}
 the fiber decomposition
\begin{equation}
U\Hn U^{-1} = \int_{\mathbb{R}^3}^{\oplus} \Hn(P) \d P,
\end{equation}
where the fiber operators $\Hn(P)$, $P \in \mathbb{R}^3$, are self-adjoint operators on $\mathcal{F}$. Using $a(k)e^{-\i x\cdot P_\re}=e^{-\i x\cdot (P_\re+k)}a(k)$ and  $a^*(k)e^{-\i x\cdot P_\re}=e^{-\i x\cdot (P_\re-k)}a^*(k)$, we find $\nabla_x  e^{\i x\cdot(P-P_\re)}\Af(x) e^{\i x\cdot(P-P_\re)} =0$ and therefore
\begin{equation}
\Af(x) e^{\i x\cdot(P-P_\re)} = e^{\i x\cdot(P-P_\re)}\Af(0).
\end{equation}
Using this and \eqref{Uinv},  we compute $\Hn  (U^{-1}\Phi)(x)=\int_{\mathbb{R}^3} e^{\i x\cdot(P-P_\re)}\Hn(P)\Phi (P) \d P$, where  $\Hn(P)$ are Hamiltonians on the fibers $\cH_P := \cF$ given explicitly by
\begin{equation}
\Hn(P) =  \frac{1}{2m}\big ( P - P_\re  -e\Af)^2 + H_\re
\end{equation}
where $\Af:=\Af(0)$. Explicitly, $\Af$ is given by
\begin{equation}\label{Achi}
\Af \, = \,
	\sum_{\lambda}\int  \d k \, \frac{\chi(|k|)}{|k|^{1/2}} \,
	  \e_\lambda(k) \,\{ \,   a_\lambda(k) \, + \, a^*_\lambda(k) \, \}. 
\end{equation}
%
%
\DETAILS{For an observable $A$ on $ \mathcal{H}_{\mathrm{el}} \otimes \mathcal{H}_\re$, we consider the operator $UA U^{-1}$, acting on the space $\cH = \int_{\R^3}^\oplus \cH_P dP$. It can be written as
$$(UA U^{-1}\Phi)(P)=\int_{\mathbb{R}^3} A_{PQ}\Phi(Q)dQ,$$
for some operator-function $A_{PQ}$ on $\cF$. If $A\in \cA$, then $A_{PQ}\in {\frak W}(L^{2}_0)$.
We write $UA U^{-1}=\int A_{PQ} dPdQ.$ Note that the evolution, $\al^t(A)$, of $A$ is then given by
\begin{equation}
\al^t(A)= \int_{\mathbb{R}^3\times \R^3}dPdQ\alpha_{PQ}^t (A_{PQ}),
\end{equation}
where $\alpha_{PQ}^t (A_{PQ}):=e^{itH_P}A_{PQ} e^{-itH_Q}$. If an observable $A$ is translation invariant, then we can write it as a fiber integral, $A = \int_{\mathbb{R}^3}^{\oplus} A_P \d P.$ If $A$ is translation invariant, then so is the observable $\al^t(A)$ and
\begin{equation}
\al^t(A)=\int_{\mathbb{R}^3} \al^t_{P}(A_{P})dP,\ \mbox{where}\ \al^t_{P}(A_{P}):=e^{itH_P}A_{P} e^{-itH_P}.
\end{equation}
We write $\al^t=\int \al^t_{P}dP$.  An example of  a  translation invariant observable is the particle momentum  $P_\el=-i\nabla=i[H, x]$. In this case we have $P_\el = \int_{\mathbb{R}^3}^{\oplus} (P-P_\re) \d P = \int_{\mathbb{R}^3}^{\oplus} P \d P-P_\re.$}
\DETAILS{\bigskip
Next, we consider the Hamiltonian \eqref{eq-Hn-def-1}.
Clearly, $[\Hn,\Ptot]=0$, by translation invariance.
Then,
\eqn
	(\Hn \psi)^{\widehat{\;}}(p) \, = \, \Hn(p) \widehat\psi(p)
\eeqn
where $\Hn(p) =  \Hn\big|_{\cH_p}$ is the fiber Hamiltonian associated to total momentum $p$. Let $\Af:=\Af(0)  $. We compute
\eqn
	\Hn(p) =\frac12( \, p \,  - \,  \Pf \, - \, \c \Af  \, )^2
	\, + \,  H_f \,
\eeqn
We then have that
\eqn
	\Hn \, = \, \int e^{i(p-\Pf)x}\Hn(p)e^{-i(p-\Pf)x}dP_{\Ptot}(p)
\eeqn
which is the spectral representation of $\Hn$
with respect to the spectral measure associated to $\Ptot$.}
%

\bigskip
Consider the infimum $E(P):=\inf\sigma (\Hn(P))$ of the spectrum of the fiber Hamiltonian $\Hn(P)$. Note that for $e=0$, $E(P)|_{e=0} \; =: \; E_0(P)$ is the ground state energy of  $H_0(P):=\Hn(P)|_{e=0} =  \frac{1}{2m}\big ( P - P_\re \big)^2 + H_\re$ with the ground state $\vac$ and is $E_0(P) \; = \frac{|P|^2}{2 m}$. The renormalized electron mass 
is defined as
\eqnn
	E(P) \; =  \, \frac{|P|^2}{2 m_{ren}} \, + \, O(|P|^3)
\eeqnn
where the left hand side is computed perturbatively up the second order in the coupling constant (charge).
Provided that $E(P)$ is spherically symmetric and $C^2$ at $P=0$,
and therefore, in particular,  $\partial_{|P|}  E(0) = 0$,
we define the renormalized electron mass at zero total momentum 
as
\eqnn
	m_{ren} \; := \; \frac{1}{ \partial_{|P|}^2 E(0) } \;.
\eeqnn
The kinematic meaning of this expression is as follows.
The ground state energy $E(P)$ can be considered as an effective Hamiltonian
of the electron in the ground state.
(The propagator $\exp(-itE(P))$ determines the propagation properties of a
wave packet formed of dressed one-particle states  with a wave function supported near $p = 0$
-- which exist as long as there is an infrared regularization.)
The first Hamilton equation gives the expression for the electron velocity as
\eqnn
	v \; = \; \partial_P E(P) \;.
\eeqnn
Expanding the right hand side in $P$ we find $v  = {\rm Hess} \, E(0)  P + O(P^2)$,
where
\eqn\label{eq:Hess-E-def-1}
		        \big({\rm Hess}\,E(P)\big)_{ij} \; = \;
        \Big(\delta_{ij}-\frac{P_i P_j}{|P|^2}\Big)
        \frac{\partial_{|P|}E(P)}{|P|}
        \, + \, \frac{P_i P_j}{|P|^2}
        \partial_{|P|}^2E(P) \;
\eeqn
is the Hessian of $E(P)$ at $P\in\R^3$ (given that $E(P)$ is spherically symmetric,
and $C^2$ in $|P|$ near $P=0$).
It follows from (\ref{eq:Hess-E-def-1}) and the fact $\partial_{|P|}  E(0) = 0$ that
${\rm Hess} \, E(0)  = \partial_{|P|}^2 E(0) \, \1$, so that
\eqnn
	v  \; = \; \partial_{|P|}^2 E(0,\sigma)  \, P \, + \, O(P^2) \;.
\eeqnn
This suggests  taking  $(\partial_{|P|}^2 E(0) )^{-1}$ as the renormalized
electron mass at $P=0$.
\DETAILS{For sufficiently small momenta $p$ we define the renormalized electron mass as
        $m_{ren}(p) \; := \; \frac{1}{\partial_{|p|}^2 E(p)} \;. $ 
We remark that a different notion of the renormalized electron mass
in non-relativistic QED can be introduced through the binding of an electron to a nucleus, see \cite{hasei,lilo1}.}
The following result is proven in \cite{bcfs2,ch,cfp1}: 
\begin{theorem}
 {\em For any $P$, s.t. $|P|<\frac13$, the infimum of the spectrum $\Eg(P)=\inf{\rm spec}(\Hn(P))$ is twice differentiable and satisfies $ 1 \; \leq \; m_{ren} \; \leq \; 1 \, + \, c \, g^2 \;$ for some $c>0$.}
\end{theorem}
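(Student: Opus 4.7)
My plan is to reduce the problem to the renormalization group machinery developed in the body of the paper, but now applied to the fiber Hamiltonians $H_n(P)$ with $P$ treated as a parameter in the coupling functions. After performing the fiber version of the generalized Pauli--Fierz transform $H_n(P)\mapsto H_n^{PF}(P):=e^{-ig F(0)}H_n(P)e^{ig F(0)}$ (cf.\ Section \ref{sec-PF-transf}), the transformed operator has the structure $\frac{1}{2m}(P-P_f-gA_{\chi\varphi})^{2}+H_f+gG$, which is an analytic family of type (A) in $P$ on the disc $\{|P|<1/3\}$ (this is where the restriction on $|P|$ enters, through invertibility of the resolvent on $\Ran\bchi_\rho$). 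I then run the spectral RG $\cR_\rho^n$ on this family, adjusting at each scale the constant component $\la H^{(n)}(P)\ra_{\Om}-e^{(n-1)}(P)$ as in Section 9 so that $H^{(n)}(P)\in\cD^{\mu_0}(\alpha_n,\beta_n,\gamma_n)$, with $\alpha_n,\gamma_n\to 0$ and $\beta_n\to \zeta(P)$. The limit $e(P)=\lim_n e^{(n)}(P)$ equals $E(P)$ by the isospectrality theorem.

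The regularity of $E(P)$ in $P$ follows by iterating the analyticity-transfer theorem of Appendix \ref{sec:anal-transf}: since $H_n^{PF}(P)$ and $\tau(H_n^{PF}(P))$ are analytic in $P$ in the sense of Kato on $\{|P|<1/3\}$, the same holds for each $\cR_\rho^k(H_n^{PF}(P)-\lambda)$, and the implicit equation $\la \cR_\rho^k(H_n^{PF}(P)-e^{(k)}(P))\ra_{\Om}=0$ defines $e^{(k)}(P)$ as an analytic function; the uniform control from Theorem \ref{thm:act-RGmap} gives convergence of $e^{(k)}(P)\to E(P)$ together with its first two derivatives, yielding $E\in C^2$ (in fact real-analytic) on $\{|P|<1/3\}$. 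Rotation invariance forces $\nabla_P E(0)=0$ and ${\rm Hess}\,E(0)=\partial_{|P|}^2E(0)\mathbf{1}$, so only one scalar needs to be estimated.

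For the upper bound $m_{ren}\ge 1$ (equivalently $\partial_{|P|}^2E(0)\le 1/m=1$), I use the variational principle with the trial vector $\Psi_0=\Psi_0(0)$, the ground state of $H_n^{PF}(0)$ (whose existence is ensured by the RG analysis applied at $P=0$). Expanding $(P-P_f-gA_{\chi\varphi})^2=(P_f+gA_{\chi\varphi})^2-2P\cdot(P_f+gA_{\chi\varphi})+|P|^2$ and using that $\la\Psi_0,(P_f+gA_{\chi\varphi})\Psi_0\ra=0$ by rotation invariance of $\Psi_0$, one obtains $\la\Psi_0,H_n^{PF}(P)\Psi_0\ra=E(0)+\tfrac{|P|^2}{2}$, hence $E(P)\le E(0)+|P|^{2}/2$ and $\partial_{|P|}^2E(0)\le 1$.

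For the lower bound $m_{ren}\le 1+cg^2$ the strategy is to compute the leading $O(g^2)$ correction to $\partial_{|P|}^2E(0)$ via second-order perturbation theory (Feshbach reduction in one step onto $\Ran\chi_{H_f\le\rho}$), which produces a representation of the form
\begin{equation*}
\partial_{|P|}^2E(0)=1-g^2 \,\mathsf{Q}+O(g^4),
\end{equation*}
where $\mathsf{Q}\ge 0$ is an explicit quadratic form of the coupling function $\varphi_{\lambda,0}$ evaluated on the free resolvent; the bound $\mathsf{Q}\le c$ follows from the infrared estimate \eqref{chi-estim1} and the ultraviolet cutoff. The key obstacle is the rigorous control of the error term: the standard Rayleigh--Schr\"odinger series is not directly applicable because the ground state energy is embedded in continuous spectrum. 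I resolve this by performing the perturbation-theoretic expansion on the RG-renormalized operator $H^{(n)}(P)$ for large $n$ (where the effective coupling is small and the spectral gap above the ground state is of order $\rho^n$), then transferring the estimates back to $H_n^{PF}(P)$ using the explicit formula \eqref{Hinvrepr} for the Feshbach resolvent together with the uniform bounds of Theorem \ref{thm:act-RGmap}; the $C^2$ regularity established above guarantees that the second derivatives commute with the RG limit.
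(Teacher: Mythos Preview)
The paper does not actually prove this theorem in the text: immediately before the statement it writes ``The following result is proven in \cite{bcfs2,ch,cfp1}'' and afterwards refers to \cite{HiroshimaSpohn2} for the leading-order computation. So there is no in-paper proof to compare against beyond these citations, and your RG-based strategy is indeed the one used in those references.

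That said, your reduction to the machinery of Sections~6--9 has a real gap. The Banach space $\cB^{\mu\xi}$ and the polydiscs $\cD^{\mu}(\alpha,\beta,\gamma)$ of Section~\ref{sec:Bansp} are built from operators in generalized normal form \eqref{H} whose coefficients $w_{m,n}$ depend only on the \emph{scalar} $H_f$. The fiber Hamiltonian $H(P)=\tfrac{1}{2m}(P-P_f-eA_\chi)^2+H_f$ involves $P_f$ as a \emph{vector} operator in the free part and in the linear coupling $-\tfrac{e}{m}(P-P_f)\cdot A_\chi$, so it simply does not lie in $\cB^{\mu\xi}$ as defined, and Theorem~\ref{thm:act-RGmap} does not apply to it. Moreover, the $P$-dependent piece $-\tfrac{e}{m}P\cdot A_\chi$ has form factor $\sim P\cdot e_\lambda(k)|k|^{-1/2}\chi(k)$, which for $\mu>0$ fails the norm \eqref{wmn}; this term is exactly marginal under $\rho^{-1}S_\rho$ and does not contract. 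Your proposed fix---conjugating by $e^{-igF(0)}$---is a fixed unitary on $\cF$ and does not play the role the Pauli--Fierz transform plays in Section~\ref{sec-PF-transf}: the infrared gain \eqref{chi-estim1} comes from the $x$-dependence of $F(x)$ together with exponential localization of the electron, and there is no electron coordinate in the fiber picture. In \cite{bcfs2,ch} this is handled by enlarging the Banach space to coupling functions $w_{m,n}(H_f,P_f;k_{(m+n)})$ and by introducing an additional running vector parameter (effectively $\nabla E_n(P)$ at scale $n$, the same object that appears in the Bogoliubov operators $W_{\nabla E(P)}$ of Appendix~\ref{sec:One-particle states}), whose convergence must be proved alongside $e^{(n)}(P)$. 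Without this extra marginal flow your iteration does not close for $P\neq 0$.

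A secondary point: your variational argument for $m_{ren}\ge m$ invokes ``the ground state $\Psi_0(0)$ of $H^{PF}(0)$'', but note that even at $P=0$ the existence of this ground state is itself a nontrivial result (it is the content of the theorem in Appendix~\ref{sec:One-particle States} for $P=0$, again proved in \cite{bcfs2,cfp1} rather than here); one should either cite that or run the argument with an infrared cutoff $\sigma>0$ and pass to the limit.
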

\bigskip
\DETAILS{We obtain this result  by a formal perturbation theory in $e$. To this end we  compute $\Eg(P)$ (by perturbation theory in $e$). We write $\Hn(P)$ as $\Hn(P)=H_0(P)+W$ where $H_0(P)=  \big ( P - P_\re )^2 + H_\re$. Let $P$ be the orthogonal projection onto the
eigenspace $\Null(H_0 - \lam_0)$ spanned by the vacuum $\vac$, let
$\oP := \bfone - P$ and let $H_{0\oP}(P) := \oP H_0(P) \oP \restriction_{\Ran \oP}$ . Then a general result of perturbation theory (see Appendix \ref{subsec:pert}, \eqref{pert-exp})
gives
\begin{equation}\label{eq:pertexp}
  \Eg(P) = \E0(P) + \sum_{n=0}^\infty(-1)^n \lan \vac,  W' (\bar{R}_0 W')^n  \vac \ran,
\end{equation}
where $\E0(P)=|P|^2/2$, $W':=\oP(W+  \Eg(P) - \E0(P))\oP$ and $\bar{R}_0 = \oP ( H_{0 \oP}(P) - \E0(P) )^{-1} \oP$.
Using that $W=\, -e\frac{1}{m}( \, P \,  - \,  \Pf ) \, \cdot \Af +\frac{1}{2m}e^2\Af^2$ and therefore in particular $\lan \vac, W \vac \ran=0$. 
Now, we compute the second term on the r.h.s: 
\eqn  \label{E-ren-pert-lead-1}
        &&\lan W \vac, \bar{R}_0 W \vac \ran \;= \; \frac{e^2}{m^2}\lan P \, \cdot \Af \vac, \bar{R}_0 P \, \cdot \Af \vac \ran     \;\nonumber\\
         & = & \; \frac{e^2}{m^2}P_iP_j\sum_{\lambda}\int  dk \, \frac{\kappa_\sigma(|k|)}{|k|^{1/2}} \,
	  \e^i_\lambda(k) \,\sum_{\lambda'} \int  dk' \, \frac{\kappa_\sigma(|k'|)}{|k'|^{1/2}} \,
	  \e^j_{\lambda'}(k') \,   \;\nonumber\\
         & \times &  \lan  \,   \vac, a_\lambda(k) ( H_{0}(P) - \E0(P) )^{-1}   a^*_{\lambda'}(k') \vac \ran.
\eeqn
To compute the inner product on the r.h.s., we move the operators $a(k)$ to the extreme right and
the operators $a^*(k)$ to the extreme left.  In doing this
we use the following rules:
\begin{enumerate}
\item
  $a(k)$ is pulled through $a^* (k)$ according to the
  relation
  \[
    a(k) a^* (k') = a^* (k') a(k) + \delta (k-k')
  \]
\item
  $a(k)$ and $a^*(k)$ are pulled through $R_{0,\oP}$ according
  to the relations
  \begin{eqnarray*}
    a(k) ( H_{0} - \E0(P) )^{-1} &=& ( H_{0}(P) +\omega(k)- \E0(P) )^{-1} a(k),\\
    ( H_{0} - \E0(P) )^{-1}a^*(k) &=& a^*(k)( H_{0}(P) +\omega(k)- \E0(P) )^{-1}
  \end{eqnarray*}
  (see the equations \eqref{eq:pull} and of \eqref{eq:pull2} of Appendix \ref{sec:pullthrough}).
\end{enumerate}
Using that $a(k)\vac=0$ and $H_{0}(P)\vac=\frac{|P|^2}{2}\vac$, we obtain, to leading order in $\c$,
\eqn  \label{E-ren-pert}
E(P) \; = 
\frac{|P|^2}{2m}\Big(1   -    8\pi \, \c^2\frac{\kappa}{m} \Big)
\, + \, O(\c^4)
\eeqn
so that we infer \eqref{m-ren-pert}.
Note that (\ref{E-ren-pert}) and (\ref{m-ren-pert}) depend
on the ultraviolet cutoff $\Lambda$.
\[\mbox{Representation of \eqref{E-ren-pert-lead-1} and more generally terms in \eqref{eq:pertexp} in terms of Feynman diagrams.}\]}
A presentation of the leading order calculations 
can be found in \cite{HiroshimaSpohn2}.

\begin{remark}
 {\em The estimate  $ 1 \; \leq \; m_{ren} \; \leq \; 1 \, + \, c \, g^2 \;$ 
reflects the fact that the mass of the electron is increased
by interactions with the photon field.}
\end{remark}

\DETAILS{\begin{remark}
The existence of the ground state at $p=0$
(see also \cite{mol}), and
renormalization of the electron mass is an important ingredient for the
phenomenon of {\em enhanced binding}, \cite{cvv} and \cite{hispo1,hvv}. A
Schr\"odinger operator with a non-confining potential
can exhibit a bound state when the interaction of the
electron with the quantized electromagnetic field is included.
Binding to a shallow potential can be energetically more favorable for the electron
than forming an infraparticle through binding of a cloud of soft photons.
\end{remark}}

\bigskip


\section{One-particle States}\label{sec:One-particle states} 
First we note that for $e=0$, the one-particle states of the Hamiltonian  $H_0:=\Hn|_{e=0} =  |P_\el|^2 + H_\re$ are the generalized eigenfunctions
\begin{equation}\label{eq:2.4}
    e^{-iP\cdot x} \otimes  \,  \Omega,
\end{equation}
corresponding to the spectral points $E (P) = \frac{|P|^2}{2m}$. This corresponds to the true ground state $\vac$ of the fiber Hamiltonians $H_0(P):=\Hn(P)|_{e=0} =  \frac{1}{2m} ( P - P_\re )^2 + H_\re$. The generalization of such a state for the interacting system would be the ground state of $\Hn(P)$, if it existed. However, we have
\begin{thm}
$\Hn(P)$ has a ground state if and only if $P=0$.
\item 
\end{thm}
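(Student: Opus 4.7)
The plan is to handle the two implications separately. The non-existence for $P\neq 0$ is a manifestation of the infrared problem, while the existence at $P=0$ rests on a parity-induced cancellation. The main tool in both cases is the pull-through formula, which for any putative normalized ground state $\psi_P$ of $\Hn(P)$ with eigenvalue $\Eg(P)$ gives an explicit expression for $a_\lambda(k)\psi_P$. Writing $V := P - \Pf - e\Af$ and using $[a_\lambda(k),H_\re]=|k|a_\lambda(k)$, $[a_\lambda(k),\Pf]=k\,a_\lambda(k)$, $[a_\lambda(k),\Af]=\chi(k)\epsilon_\lambda(k)/\sqrt{|k|}$, together with the Coulomb gauge condition $k\cdot\epsilon_\lambda(k)=0$, one obtains
\begin{equation}
a_\lambda(k)\psi_P \; = \; -\frac{e\chi(k)}{m\sqrt{|k|}}\,\bigl(\Hn(P-k)+|k|-\Eg(P)\bigr)^{-1}\epsilon_\lambda(k)\cdot V\psi_P.
\end{equation}

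For the non-existence, I would extract the dominant infrared behavior on the right-hand side from the expansion $\Eg(P-k)=\Eg(P)-k\cdot\nabla\Eg(P)+O(|k|^2)$ together with the Feynman-Hellmann identity $\nabla\Eg(P)=m^{-1}\la\psi_P,V\psi_P\ra$. Projecting the resolvent onto a neighbourhood of the bottom of the spectrum of $\Hn(P-k)$ yields, for small $|k|$, the lower bound $\|a_\lambda(k)\psi_P\|^2 \gtrsim e^2\,|\epsilon_\lambda(k)\cdot\nabla\Eg(P)|^2/|k|^3$. Summing over polarizations produces the transverse component $|\nabla\Eg(P)|^2-|\hat k\cdot\nabla\Eg(P)|^2$, which is strictly positive on a set of directions $\hat k$ of full measure whenever $\nabla\Eg(P)\neq 0$. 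Consequently $\la\psi_P,N\psi_P\ra = \sum_\lambda\int\|a_\lambda(k)\psi_P\|^2 d^3k$ diverges logarithmically at $k=0$, contradicting $\psi_P\in\cF$. That $\nabla\Eg(P)\neq 0$ for $0<|P|<1/3$ follows from Appendix \ref{sec:mass-renorm}: $\nabla\Eg(0)=0$ by rotation symmetry, while the Hessian of $\Eg$ at $0$ is positive (since $m_{ren}<\infty$), so $\Eg$ is strictly convex in a neighbourhood of the origin.

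For the existence at $P=0$, I would introduce an infrared regularization $\chi_\sigma(k) := \chi(k)\mathbf{1}_{|k|\geq\sigma}$ and let $\Hn_\sigma(0)$ be the resulting regularized fiber Hamiltonian. It has a spectral gap above its ground state energy and hence admits a normalized ground state $\psi_\sigma$. The anti-unitary parity symmetry $a_\lambda(k)\mapsto -a_\lambda(-k)$, $P\mapsto -P$ commutes with $\Hn_\sigma$, so $\Eg_\sigma(P)=\Eg_\sigma(-P)$ and therefore $\nabla\Eg_\sigma(0)=0$. The regularized pull-through identity then shows that $\|a_\lambda(k)\psi_\sigma\|$ is square-integrable in $k$ uniformly in $\sigma$, giving the uniform bound $\sup_\sigma\la\psi_\sigma,(N+H_\re)\psi_\sigma\ra<\infty$. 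A subsequence of $\psi_\sigma$ therefore converges weakly to some $\psi_0$ which satisfies $\Hn(0)\psi_0=\Eg(0)\psi_0$. The principal obstacle is showing $\psi_0\neq 0$, since $N$ is not compact; I would handle this by combining a uniform lower bound on $\|\chi_{N\leq n}\psi_\sigma\|$ for some finite $n$ with tightness in the photon momentum variables, so that the weak limit retains positive norm.
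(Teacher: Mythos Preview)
The paper does not supply a proof of this statement; it is quoted without argument in the appendix on one-particle states, with the literature (notably \cite{HaslerHerbst2007} for the absence direction) left to furnish the details. Your outline is the standard one and is sound in its broad strokes: pull-through plus infrared divergence of the expected photon number for non-existence, infrared regularization plus a compactness argument for existence at $P=0$.

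There is, however, one genuine gap in the non-existence argument. You need $\nabla E(P)\neq 0$ for every $P\neq 0$, and you justify this by appealing to positivity of the Hessian of $E$ at the origin (Appendix~\ref{sec:mass-renorm}). That result only asserts $\partial_{|P|}^2 E(0)>0$, hence yields strict convexity---and thus $\nabla E(P)\neq 0$---merely in some small, $g$-dependent neighbourhood of $0$; it does not cover the full range $0<|P|<1/3$, let alone arbitrary $P\neq 0$, and you yourself write ``in a neighbourhood of the origin'' while claiming the conclusion on all of $0<|P|<1/3$. Closing this requires an independent monotonicity or global-convexity input for $|P|\mapsto E(P)$, which is established in the cited references \cite{bcfs2,ch,cfp1} but is not a corollary of the second-derivative bound at zero that you invoke. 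In the existence direction your plan is correct, though the non-triviality of the weak limit is more delicate than your sketch indicates: the usual route controls not only $\langle\psi_\sigma, N\psi_\sigma\rangle$ uniformly but also bounds a vacuum-type overlap such as $|\langle\Omega,\psi_\sigma\rangle|$ away from zero, which is what actually prevents the mass from escaping to infinitely many soft photons.
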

To define one particle states for the interacting model, we first introduce IR regularization,
$\ks\in C_0^\infty([0,\kappa];\R_+)$ is assumed to be a smooth cutoff function obeying 
$\lim_{x\rightarrow0}\frac{\ks(x)}{x^\sigma} \; = \; 1\;.$ 
 The corresponding Hamiltonian is
\eqn\label{eq-Hn-def-1}
	\Hns \, := \, \frac{1}{2m}\big( \, -i\nabla_x \otimes 1_f \,  + \, \c \Afs(x) \, \big)^2
	\, + \, \1_{el} \otimes H_f
\eeqn
with the quantized electromagnetic vector potential subjected, besides the ultra-violet
cutoff, also  infrared regularization,
\eqn
	\Afs(x) \, = \,
	\sum_{\lambda}\int  dk \, \frac{\kappa_\sigma(|k|)}{|k|^{1/2}} \,
	\{ \,  \e_\lambda(k) \,  e^{-ikx} \otimes a_\lambda(k) \, + \, h.c. \, \}.
\eeqn
Let $\cS:=\{P\in \R^3\  |\ |P|<1/3\}$.
\begin{thm}
For $P\in\cS$ and for any $\sigma>0$, the infimum of the spectrum $\Egs(P)=\inf{\rm spec}(\Hns(P))$   is a simple eigenvalue.\\
\end{thm}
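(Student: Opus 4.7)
The plan is to apply the spectral renormalization group of Sections~\ref{sec:RG}--\ref{sec:actionRG} to the fiber Hamiltonian $\Hns(P)$, with the photon vacuum $\vac$ playing the role of the unperturbed ground state and the condition $|P|<1/3$ replacing the spectral gap $\epsilon^{(p)}_{gap}(\nu)$ used in the proof of Theorem~4.1. First I would pass from $\Hns(P)$ to its generalized Pauli--Fierz transform $\Hns^{PF}(P)$ and express its interaction part as an element of the anisotropic Banach space $\cB^{\mu\xi}$. The infrared regularization implies the form factor of $\Afs$ behaves like $|k|^{\sigma-1/2}$ near $k=0$, so the interaction monomials lie in $\cB^{\mu\xi}$ with $\mu=\sigma>0$ and are therefore irrelevant under the scaling $\rho^{-1}S_\rho$; consequently the parameter $\gamma$ in Theorem~\ref{thm:act-RGmap} contracts by a factor $c\rho^\sigma<1$ per iteration.

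Next I would set $E_0(P):=|P|^2/(2m)=\la\vac,H_0(P)\vac\ra$ with $H_0(P):=(P-\Pf)^2/(2m)+\hf$, and verify that $\Hns^{PF}(P)-\lambda\in D(F_\rho)$ for $\lambda$ in a complex neighbourhood of $E_0(P)$. Taking $\chi_\rho:=\chi_{\hf\le\rho}$ and using the operator inequality $\pm P\cdot\Pf\le |P|\hf$ together with $|P|<1/3\le m$, I obtain the crucial lower bound
\begin{equation*}
H_0(P)-E_0(P)\;=\;\tfrac{1}{2m}\Pf^2-\tfrac{1}{m}P\cdot\Pf+\hf\;\ge\;(1-|P|/m)\,\hf\;\ge\;\tfrac{2}{3}\rho\quad\text{on }\Ran\,\bchi_\rho.
\end{equation*}
Combined with the basic bound of Section~\ref{sec:bnd} and smallness of the coupling $g$, this renders $\bchi_\rho\bigl(H_0(P)-E_0(P)+gW\bigr)\bchi_\rho$ invertible on $\Ran\,\bchi_\rho$, so $\Hns^{PF}(P)-E_0(P)\in D(\cR_\rho)$.

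Then, following the scheme of Section~\ref{sec:RG}, I would inductively adjust $\lambda=\lambda_\sigma(P)$ so that $\Hns^{PF}(P)-\lambda_\sigma(P)\in D(\cR_\rho^n)$ for every $n\ge 1$, i.e.\ so it lies on the stable manifold $\cM_s$. Because $\mu=\sigma>0$, Theorem~\ref{thm:act-RGmap} gives the convergence $\cR_\rho^n(\Hns^{PF}(P)-\lambda_\sigma(P))\to \zeta_\infty\hf$ in $\cB^{\mu\xi}$ with $\mathrm{Re}\,\zeta_\infty>0$; the limit $\zeta_\infty\hf$ has the simple eigenvalue $0$ with eigenvector $\vac$. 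Propagating this back up the chain via the isospectrality statement of Theorem~\ref{thm:isospF}(iii) and using unitarity of the Pauli--Fierz conjugation, I conclude $\dim\Null(\Hns(P)-\lambda_\sigma(P))=1$; a standard second-order perturbation estimate gives $\lambda_\sigma(P)=E_0(P)+O(g^2)$, which together with the variational upper bound $\Egs(P)\le \la\vac,\Hns(P)\vac\ra$ identifies $\lambda_\sigma(P)$ with $\Egs(P)$. The main obstacle is the first RG step, where the invertibility of the off-diagonal block in the Feshbach--Schur construction requires precisely the lower bound displayed above; this is where the hypothesis $|P|<1/3$ enters and cannot be relaxed in this approach, whereas the subsequent iterations are controlled uniformly because $\sigma>0$ makes all perturbative corrections contractive.
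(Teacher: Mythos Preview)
The paper does not actually prove this theorem; it is stated in Appendix~\ref{sec:One-particle states} as a known result, with the surrounding material attributed to \cite{bcfs2,ch,cfp1}. So there is no in-paper proof to compare against, and I can only assess your outline on its own merits and against what those references do.

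Your overall strategy---run the spectral RG on the fiber Hamiltonian, use $\sigma>0$ to make the interaction irrelevant, and invoke $|P|<1/3$ to get invertibility of the $\bchi_\rho$--block---is the right one, and your lower bound $H_0(P)-E_0(P)\ge(1-|P|/m)H_f$ via $|P\cdot P_f|\le|P|H_f$ is exactly how the constraint $P\in\cS$ enters. Two points, however, need repair.

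First, the generalized Pauli--Fierz transform of Section~\ref{sec-PF-transf} is conjugation by $e^{igF(x)}$, with $F$ depending on the particle position $x$. After the fiber decomposition there is no $x$ left; $\Hns(P)$ acts on $\cF$ alone. So that step does not make sense as written. It is also unnecessary: the infrared regularization $\kappa_\sigma(|k|)\sim|k|^\sigma$ already provides the extra infrared power that the Pauli--Fierz transform was designed to manufacture, so you should simply drop that step and work with $\Hns(P)$ directly.

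Second, and this is the real gap, the Banach space $\cB^{\mu\xi}$ of Section~\ref{sec:Bansp} is built from coupling functions $w_{m,n}(H_f;k_{(m+n)})$ depending on $H_f$ only. The free part of the fiber Hamiltonian,
\[
H_0(P)-E_0(P)\;=\;\tfrac{1}{2m}P_f^{\,2}-\tfrac{1}{m}P\cdot P_f+H_f,
\]
is not a function of $H_f$; moreover $S_\rho(P_f)=\rho P_f$, so $\rho^{-1}S_\rho(P\cdot P_f)=P\cdot P_f$ is \emph{marginal}. Iterating $\cR_\rho$ therefore produces effective Hamiltonians whose diagonal and off-diagonal pieces depend on both $H_f$ and $P_f$, and these do not lie in $\cB^{\mu\xi}$ as defined here, so Theorem~\ref{thm:act-RGmap} cannot be invoked beyond the first step. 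The fix, carried out in \cite{bcfs2,ch}, is to enlarge the Banach space to coupling functions $w_{m,n}(H_f,P_f;k_{(m+n)})$ and to track the flow of the marginal direction $P\cdot P_f$ (equivalently, of the effective velocity $\nabla E_\sigma(P)$) along the iteration. Once that extension is in place, your outline goes through essentially as you wrote it.
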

\begin{remark}
The upper bound on $|P|$ of $\frac13$ is not optimal, but we note that,
For $E(P)$ to be an eigenvalue, $|P|$ cannot exceed a critical value $P_c< 1$ (corresponding
to the speed of light). As $|P|\rightarrow P_c$,
it is expected that the eigenvalue at $E(P)$ dissolves in
the continuous spectrum, while a resonance appears.
This is a manifestation of a phenomenon analogous to
Cherenkov radiation.
\end{remark}
Let $\Psig(P)\in\Fo$ denote the associated normalized fiber ground state, $\|\Psig(P)\|_\Fo=1$,
for $P\in\cS$,
\eqn
	\Hns(P)\Psig(P) \, = \, \Egs(P) \, \Psig(P) \,.
\eeqn
 The vector
$\Psi(P,\sigma)$ is an {\em infraparticle state}, describing a compound
particle comprising the electron together with a cloud of low-energy (soft)
photons whose expected number diverges as $\sigma\rightarrow0$, unless $p=0$. 
For $P\in \cS$,  we introduce the Weyl operators
\eqn
	W_{\nablE(P)}(x) 
:= e^{D(x)-D^*(x)}
\eeqn
where
\eqn
	D(x) \, := \, \sum_\lambda  \int dk \,  G_\lambda(k,p) \,  e^{-ikx} \, a_\lambda(k) \,,
\eeqn
with
\eqn
	G_\lambda(k,p)  \, := \,\alpha^{\frac12} \, \kappa_\sigma(|k|) \,
	\frac{\nablE(p)\cdot\e_\lambda(k)  }{|k|^{1/2}(|k|-\nablE(p)\cdot k)} \,.
\eeqn
(Here and in the sequel, we will use the abbreviated notation 
	$\nablE(p) \, \equiv \, \nabla_p \Eg(p) \,.$)
We observe that they commute with the total momentum operator,
\eqn
	[\Ptot \, , \, W_{\nablE(p)}(x)] \, = \, 0 \,.
\eeqn
To see this, we note that $[\Ptot,D(x)]=0=[\Ptot,D^*(x)]$.
Indeed, we have that
\eqn
	\lefteqn{
	\Ptot \, \sum_\lambda \int dk \,  G_\lambda(k,p) \,  e^{-ikx} \, a_\lambda(k) \, \psi(x)
	}
	\nonumber\\
	&=&
	\sum_\lambda \int dk \,  G_\lambda(k,p) \, (-i\nabla_x+\Pf) \, e^{-ikx} \, a_\lambda(k) \, \psi(x)
	\nonumber\\
	&=&
	\sum_\lambda \int dk \,  G_\lambda(k,p)  \, e^{-ikx} \, a_\lambda(k) \,  (-i\nabla_x+k+\Pf-k) \, \psi(x)
	\nonumber\\
	&=&
	\sum_\lambda \int dk \,  G_\lambda(k,p) \,  e^{-ikx} \, a_\lambda(k)   \, \Ptot \,  \psi(x) \,.
\eeqn
Accordingly, we infer that $W_{\nablE(p)}(x) =\exp[D(x)-D^*(x)]$ commutes with $\Ptot$.

Furthermore, we observe that
\eqn
	W_{\nablE(p)}(x)  \, e^{i(p-\Pf)x} \, = \,  e^{i(p-\Pf)x} \, W_{\nablE(p)}
\eeqn
holds. Here and in what follows, we will use the abbreviated notation
\eqn
	 W_{\nablE(p)} \, \equiv \, W_{\nablE(p)}(x=0) \,.
\eeqn
We define the maps
\eqn
	(\Wmap\phi)(x) & := & \int dp \,  W_{\nablE(p)}(x)  \, e^{i(p-\Pf)x} \, \widehat\phi(p)
	\nonumber\\
	&=& \int dp \,   e^{i(p-\Pf)x} \, W_{\nablE(p)}  \, \widehat\phi(p) \,.
\eeqn
Likewise,
\eqn
	(\Wmap^*\phi)(x)
	& := & \int dp \,  W_{\nablE(p)}^*(x)  \, e^{i(p-\Pf)x} \, \widehat\phi(p)  \,.
\eeqn
The associated Bogoliubov-transformed Hamiltonian is given by
\eqn
	\Kns \, := \, (\Wmap\Hns\Wmap^{*}) \,.
\eeqn
We also introduce the Bogoliubov-transformed fiber Hamiltonians
\eqn
	\Kns(p) \, := \, W_{\nablE(p)} \, \Hns(p) \, W_{\nablE(p)}^* \,.
\eeqn
Then,we observe that
\eqn
	\Kns & = & (\Wmap \Hns \Wmap^{*})(x)
	\nonumber\\
	&=& \int W_{\nablE(p)}(x) \, e^{i(p-\Pf)x}\Hn(p)e^{-i(p-\Pf)x}  \, dP_{\Ptot}(p) \, W_{\nablE(p)}^*(x)
	\nonumber\\
	&=& \int W_{\nablE(p)}(x) \, e^{i(p-\Pf)x}\Hn(p)e^{-i(p-\Pf)x}  \, W_{\nablE(p)}^*(x) \, dP_{\Ptot}(p)
	\nonumber\\
	& = & \int e^{i(p-\Pf)x}\Kn(p)e^{-i(p-\Pf)x}dP_{\Ptot}(p).
\eeqn
In particular, we have that
\eqn
	\Wmap (\Hns\psi) \, = \, \Kns(\Wmap\psi) \,,
\eeqn
as can be readily verified.
Defining
\eqn
	\Phsig(p) \, := \, W_{\nablE(p)} \, \Psig(p) \,,
\eeqn
we obtain
\eqn
	\Kns(p) \, \Phsig(p) \, = \, \Egs(p) \, \Phsig(p) \,.
\eeqn
The following result is proven in \cite{cfp1}. 
\begin{thm}
For any $P\in \cS$,
the ground state eigenvector $\Phsig(P)$ of $\Kns(P)$ converges strongly in $\Fo$:
 $\Phi(P):=\lim_{\sigma\rightarrow0}\Phsig(P)$ exists in $\Fo$.
\end{thm}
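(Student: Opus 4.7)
The plan is to run a multiscale telescoping argument: choose a geometric sequence of infrared cutoffs $\sigma_n = \rho^n$ for some fixed $\rho\in(0,1)$ and prove that $\{\Phi_{\sigma_n}(P)\}$ is Cauchy in $\Fo$ by showing
\begin{equation*}
\|\Phi_{\sigma_{n+1}}(P) - \Phi_{\sigma_n}(P)\|_{\Fo} \le C\,\rho^{n\gamma}
\end{equation*}
for some $\gamma>0$, uniformly for $P$ in compact subsets of $\cS$. Summability then yields a strong limit $\Phi(P)\in\Fo$; independence of the limit from the chosen sequence follows from the analogous estimate with $\sigma'<\sigma$ arbitrary, and from $\|\Phi_\sigma(P)\|_\Fo = \|\Psi_\sigma(P)\|_\Fo = 1$.

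First I would represent each $\Phi_{\sigma_n}(P)$ via the isospectral machinery of Section on the RG map: by Theorem~\ref{thm:isospF}, the ground state of $K_{\sigma_n}(P)$ can be reconstructed, through the operator $Q_\chi$, from the ground state of an effective renormalized Hamiltonian on $\Ran\chi_{\sigma_n}$, obtained by iterating the Feshbach--Schur map down to photon energy scale $\sigma_n$. The role of the Bogoliubov transformation $W_{\nabla E(P)}$ is to remove the coherent soft-photon cloud that is built into $\Psi_\sigma(P)$ (and which is responsible for $\Psi_\sigma(P)$ failing to converge in $\Fo$). The key algebraic identity
\begin{equation*}
W_{\nabla E(P)}\, a_\lambda(k)\, W_{\nabla E(P)}^{\,*} \;=\; a_\lambda(k) - G_\lambda(k,P)
\end{equation*}
turns the bare coupling $\epsilon_\lambda(k)|k|^{-1/2}$ in $K_\sigma(P)$ into a dressed coupling in which, thanks to the explicit choice of $G_\lambda(k,P)$, the leading infrared term cancels to first order, leaving a formfactor that vanishes like $|k|^{\mu_0-1/2}$ as $|k|\to 0$ for some $\mu_0>0$.

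Second, writing $K_{\sigma_{n+1}}(P) = K_{\sigma_n}(P) + V_n(P)$, where $V_n(P)$ collects the photon modes in the shell $\sigma_{n+1}\le|k|\le\sigma_n$ after dressing, the improved infrared behavior translates into the Banach-space bound $\|V_n(P)\|_{\mu_0,\xi}\le C\sigma_n^{\mu_0}$, i.e.\ $V_n$ sits in the irrelevant sector in the language of Theorem~\ref{thm:act-RGmap}. Since the effective $K_{\sigma_n}(P)$ has, on $\Ran\chi_{\sigma_n}$, a spectral gap of order $\sigma_n$ above $E_{\sigma_n}(P)$ (inherited from the gap of $H_f$ on that subspace and the RG contraction), standard analytic ground-state perturbation theory gives
\begin{equation*}
\|\Phi_{\sigma_{n+1}}(P) - \Phi_{\sigma_n}(P)\|_{\Fo} \;\lesssim\; \frac{\|V_n(P)\|}{\mathrm{gap}} \;\lesssim\; \sigma_n^{\mu_0},
\end{equation*}
which is summable. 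Uniformity on compact subsets of $\cS$ follows from the uniform bound $|\nabla E(P)|<1$ supplied by Appendix~\ref{sec:mass-renorm}.

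The main obstacle lies in step two: controlling, uniformly in $P\in\cS$ and in the scale $n$, the infrared improvement produced by the Bogoliubov dressing. One must show that the difference $\epsilon_\lambda(k)|k|^{-1/2} - (\text{dressing contribution})$ is genuinely of order $|k|^{\mu_0-1/2}$ in the relevant anisotropic norm; this requires the algebraic cancellation built into $G_\lambda(k,P)$ to dominate the would-be singularity from the denominator $|k|-\nabla E(P)\cdot k$, which is small precisely on the Cherenkov-like cone $k/|k|\approx \nabla E(P)/|\nabla E(P)|$. The restriction $|P|<1/3$, combined with $|\nabla E(P)|<1$ from the mass renormalization estimate, ensures this denominator is bounded away from zero and validates the required shell estimates. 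All other ingredients---existence and simplicity of the fiber ground state, the gap estimate, the isospectrality of the Feshbach--Schur step---are already established in the previous sections.
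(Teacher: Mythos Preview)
The paper does not actually contain a proof of this theorem; it merely states the result and cites \cite{cfp1} (Chen--Fr\"ohlich--Pizzo). So there is no in-paper argument to compare your proposal against.

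That said, your overall strategy---a telescoping estimate along a geometric sequence of infrared cutoffs, together with the observation that the Bogoliubov dressing $W_{\nabla E_\sigma(P)}$ improves the infrared behaviour of the effective coupling so that the shell perturbations become summable against the spectral gap---is precisely the scheme of \cite{cfp1} and the earlier Pizzo papers \cite{Pizzo2003,Pizzo2005}. In that sense your proposal is on the right track.

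There are, however, two genuine gaps in your sketch. First, the Weyl operator $W_{\nabla E_\sigma(P)}$ depends on $\sigma$ both through the cutoff $\kappa_\sigma$ appearing in $G_\lambda(k,P)$ and through $\nabla E_\sigma(P)$ itself. Hence $K_{\sigma_{n+1}}(P)$ is \emph{not} obtained from $K_{\sigma_n}(P)$ by merely adding the photon modes in the shell $[\sigma_{n+1},\sigma_n]$; one must also control the change of dressing, which in turn requires an a~priori bound on $|\nabla E_{\sigma_{n+1}}(P)-\nabla E_{\sigma_n}(P)|$. In \cite{cfp1} this is a separate, nontrivial input that has to be established jointly with the Cauchy estimate, and your write-up does not address it. Second, your appeal to Theorems~\ref{thm:isospF} and~\ref{thm:act-RGmap} is misplaced: those results are formulated for the confined-atom Hamiltonian $H^{PF}_\theta$ on the anisotropic Banach spaces of Section~\ref{sec:Bansp}, not for the translation-invariant fiber Hamiltonians $K_\sigma(P)$. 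The proof in \cite{cfp1} does not use the spectral RG of this review; it runs a direct iterative analysis on nested Fock spaces (over modes with $|k|\ge\sigma_n$), with the gap of order $\sigma_n$ supplied by Pizzo's earlier work rather than by the Feshbach--Schur map. If you want to carry out your plan you will need to set up that nested-Fock-space framework and the associated gap estimate independently.
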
 

\bigskip

\section{Pull-through formulae}\label{sec:pullthrough}
In this appendix we prove the very useful ``pull-through'' formulae
(see \cite{bfs1})
\begin{equation}\label{eq:pull}
  a(k) f(H_f) = f(H_f + \omega(k)) a(k)
\end{equation}
and
\begin{equation}\label{eq:pull2}
  f(H_f) a^*(k) = a^*(k)f(H_f+\omega(k)),
\end{equation}
valid for any piecewise continuous, bounded
function, $f$, on $\BR$.
First, using the commutation relations for $a(k),\ a^*(k)$,
one proves relations~(\ref{eq:pull})-~(\ref{eq:pull2})
for $f(H) = (H_f - z)^{-1}$,
$z \in \BC / \bar{\BR}^+$.
Then using the Stone-Weierstrass theorem,
one can extend~(\ref{eq:pull})-~(\ref{eq:pull2})
from functions of the form
$f(\lambda) = (\lambda - z)^{-1}$,
$z \in \BC \backslash \bar{\BR}^+$,
to the class of functions mentioned above.
\DETAILS{Alternatively, (\ref{eq:pull}) -~(\ref{eq:pull2})
follow from the relation
\[
  f(H_f) \prod_{j=1}^N a^*(k_j) \Omega
  = f(\sum_{i=1}^N \omega(k_i)) \prod_{j=1}^N a^*(k_j) \Omega.
\]
\begin{hw}
{\em Prove this last relation, and
derive~(\ref{eq:pull})-~(\ref{eq:pull2})
from it.}
\end{hw}}

\bigskip
%
%
%

\section{Supplement: Creation and Annihilation Operators}\label{sec:crannihoprs}

Let $ \fh$ be either $ L^2 (\RR^3, \mathbb{C}, d^3 k)$ or  $ L^2
(\RR^3, \mathbb{C}^2, d^3 k)$. In the first case we consider $ \fh$
to be the Hilbert space of one-particle states of a scalar boson or
 phonon, and in the second case,  of a photon. The variable
$k\in\RR^3$ is the wave vector or momentum of the particle. (Recall
that throughout these lectures, the propagation speed $c$, of photon or
photons and Planck's constant, $\hbar$, are set equal to 1.) The
Bosonic Fock space, $\cF$, over $\fh$ is defined by
\begin{equation} \label{fock}
\cF \ := \ \bigoplus_{n=0}^{\infty} \cS_n \, \fh^{\otimes n} , 
\end{equation}
where $\cS_n$ is the orthogonal projection onto the subspace of
totally symmetric $n$-particle wave functions contained in the
$n$-fold tensor product $\fh^{\otimes n}$ of $\fh$; and $\cS_0
\fh^{\otimes 0} := \CC $. The vector $\Om:= (1, 0, ... )$ 
is called the \emph{vacuum vector} in
$\cF$. Vectors $\Psi\in \cF$ can be identified with sequences
$(\psi_n)^{\infty}_{n=0}$ of $n$-particle wave functions,  which are
totally symmetric in their $n$ arguments, and $\psi_0\in\CC$. In the
first case these functions are of the form, $\psi_n(k_1, \ldots,
k_n)$, while in the second case, of the form $\psi_n(k_1, \lambda_1,
\ldots, k_n, \lambda_n)$, where $\lambda_j \in \{-1, 1\}$ are the
polarization variables.

\DETAILS{The Bosonic Fock space, $\cF$, over $L^2 (\R^3, \mathbb{C},
d^3 k)$ (or  $ L^2 (\mathbb{R}^3, \mathbb{C}^2, d^3 k)$) is defined by
\begin{equation} \label{eq-I.10}
\cF \ := \ \bigoplus_{n=0}^{\infty} \cS_n \, L^2 (\mathbb{R}^3, \C^2,
d^3 k)^{\otimes n} ,
\end{equation}
where $\cS_n$ is the orthogonal projection onto the subspace of
totally symmetric $n$-particle wave functions contained in the
$n$-fold tensor product $L^2 (\RR^3, \mathbb{C}, d^3 k)^{\otimes n}$
of $L^2 (\RR^3, \mathbb{C}, d^3 k)$; and $\cS_0 L^2 (\RR^3,
\mathbb{C}, d^3 k)^{\otimes 0} := \BC $. The vector $\Omega:=1
\bigoplus_{n=1}^{\infty}0$ is called the \emph{vacuum vector} in
$\cF$. Vectors $\Psi\in \cF$ can be identified with sequences
$(\psi_n)^{\infty}_{n=0}$ of $n$-particle wave functions,
$\psi_n(k_1, \ldots, k_n)$,  which are
totally symmetric in their $n$ arguments, and $\psi_0 \in \BC$.

and $a^*(k)$ and $a(k)$ denote the creation and annihilation
operators on  $\cF$. The families $a^*(k)$ and $a(k)$ are
operator-valued generalized, transverse vector fields: (below $a_{\lambda}^{\#}= a_{\lambda}$ or $a_{\lambda}^*$.)
$$a^\#(k):= \sum_{\lambda \in \{-1, 1\}}
e_{\lambda}(k) a^\#_{\lambda}(k),$$ where $e_{\lambda}(k)$ are
polarization vectors, i.e. orthonormal vectors in $\mathbb{R}^3$
satisfying $k \cdot e_{\lambda}(k) =0$, and $a^\#_{\lambda}(k)$ are
scalar creation and annihilation operators, satisfying the \emph{canonical
commutation relations}:
\begin{equation} \label{ccr}
\big[ a_{\lambda}^{\#}(k) \, , \, a_{\lambda'}^{\#}(k') \big] \ = \ 0 , \hspace{8mm}
\big[ a_{\lambda}(k) \, , \, a_{\lambda'}^*(k') \big] \ = \ \del_{\lam\lam'}\delta^3 (k-k').
\end{equation}}
%
%
In what follows we present some key definitions in the first case,
limiting ourselves to remarks at the end of this appendix on how
these definitions have to be modified for the second case. The
scalar product of two vectors $\Psi$ and $\Phi$ is given by
\begin{equation} \label{F-scalprod}
\langle \Psi \, , \; \Phi \rangle \ := \ \sum_{n=0}^{\infty}  \int
\prod^n_{j=1} d^3k_j \; \overline{\psi_n (k_1, \ldots, k_n)} \:
\varphi_n (k_1, \ldots, k_n) .
\end{equation}

Given a one particle dispersion relation $\omega(k)$, the energy of
a configuration of $n$ \emph{non-interacting} field particles with
wave vectors $k_1, \ldots,k_n$ is given by $\sum^{n}_{j=1}
\omega(k_j)$. We define the \emph{free-field Hamiltonian}, $H_f$,
giving the field dynamics, by
%
\begin{equation} \label{Hfn}
(H_f \Psi)_n(k_1,\ldots,k_n) \ = \ \Big( \sum_{j=1}^n \omega(k_j)
\Big) \: \psi_n (k_1, \ldots, k_n) ,
\end{equation}
for $n\ge1$ and $(H_f \Psi)_n =0$ for $n=0$. Here
$\Psi=(\psi_n)_{n=0}^{\infty}$ (to be sure that the r.h.s. makes
sense we can assume that $\psi_n=0$, except for finitely many $n$,
for which $\psi_n(k_1,\ldots,k_n)$ decrease rapidly at infinity).
Clearly that the operator  $H_f$ has the single eigenvalue  $0$ with
the eigenvector $\Omega$ and the rest of the spectrum absolutely
continuous.

With each function $\varphi \in L^2 (\RR^3, \mathbb{C}, d^3 k)$ one
associates an \emph{annihilation operator} $a(\varphi)$ defined as
follows. For $\Psi=(\psi_n)^{\infty}_{n=0}\in \cF$ with the property
that $\psi_n=0$, for all but finitely many $n$, the vector
$a(\varphi) \Psi$ is defined  by
\begin{equation} \label{a}
(a(\varphi) \Psi)_n (k_1, \ldots, k_n) \ := \ \sqrt{n+1 \,} \, \int
d^3 k \; \overline{\varphi(k)} \: \psi_{n+1}(k, k_1, \ldots, k_n).
\end{equation}
These equations define a closable operator $a(\varphi)$ whose
closure is also denoted by $a(\varphi)$. Eqn \eqref{eq-I.12} implies
the relation
\begin{equation} \label{aOm}
a(\varphi) \Omega \ = \ 0 ,
\end{equation}
The creation operator $a^*(\varphi)$ is defined to be the adjoint of
$a(\varphi)$ with respect to the scalar product defined in
Eq.~\eqref{F-scalprod}. Since $a(\varphi)$ is anti-linear, and
$a^*(\vphi)$ is linear in $\varphi$, we write formally
\begin{equation} \label{ak}
a(\varphi) \ = \ \int d^3 k \; \overline{\varphi(k)} \, a(k) ,
\hspace{8mm} a^*(\varphi) \ = \ \int d^3 k \; \varphi(k) \, a^*(k),
\end{equation}
where $a(k)$ and $a^*(k)$ are unbounded, operator-valued
distributions. The latter are well-known to obey the \emph{canonical
commutation relations} (CCR):
\begin{equation} \label{CCR}
\big[ a^{\#}(k) \, , \, a^{\#}(k') \big] \ = \ 0 , \hspace{8mm}
\big[ a(k) \, , \, a^*(k') \big] \ = \ \delta^3 (k-k') ,
\end{equation}
where $a^{\#}= a$ or $a^*$.

Now, using this one can rewrite the quantum Hamiltonian $H_f$ in
terms of the creation and annihilation operators, $a$ and $a^*$, as
\begin{equation} \label{Hfa}
H_f \ = \ \int d^3 k \; a^*(k)\; \omega(k) \; a(k) ,
\end{equation}
acting on the Fock space $ \cF$.

More generally, for any operator, $t$, on the one-particle space $
L^2 (\mathbb{R}^3, \mathbb{C}, d^3 k)$ we define the operator $T$ on
the Fock space $\cF$ by the following formal expression $T: = \int
a^*(k) t a(k) dk$, where the operator $t$ acts on the $k-$variable
($T$ is the second quantization of $t$). The precise meaning of the
latter expression can obtained by using a basis $\{\phi_j\}$ in the
space $ L^2 (\mathbb{R}^3, \mathbb{C}, d^3 k)$ to rewrite it as $T:
= \sum_{j} \int a^*(\phi_j) a(t^* \phi_j) dk$.

To modify the above definitions to the case of photons, one replaces
the variable $k$ by the pair $(k, \lambda)$ and adds to the
integrals in $k$ sums over $\lambda$. In particular, the creation-
and annihilation operators have now two variables: $a_
\lambda^\#(k)\equiv a^\#(k, \lambda)$; they satisfy the commutation
relations
\begin{equation}
\big[ a_{\lambda}^{\#}(k) \, , \, a_{\lambda'}^{\#}(k') \big] \ = \
0 , \hspace{8mm} \big[ a_{\lambda}(k) \, , \,
a_{\lambda'}^*(k') \big] \ = \ \delta_{\lambda, \lambda'} \delta^3
(k-k') .
\end{equation}
One can introduce the operator-valued transverse vector fields by
$$a^\#(k):= \sum_{\lambda \in \{-1, 1\}} e_{\lambda}(k) a_{\lambda}^\#(k),$$
where $e_{\lambda}(k) \equiv e(k, \lambda)$ are polarization
vectors, i.e., orthonormal vectors in $\mathbb{R}^3$ satisfying $k
\cdot e_{\lambda}(k) =0$. Then, in order to reinterpret the
expressions in this paper for photons, one either adds the variable
$\lambda$, as was mentioned above, or replaces, in appropriate
places, the usual product of scalar functions or scalar functions
and scalar operators by the dot product of vector-functions or
vector-functions and operator-valued vector-functions.



\bigskip
%
%
\DETAILS{
\section{Related Problems}\label{sec:relatprobl}
Similar techniques are or can be used to obtain

\begin{itemize}

\item The mass renormalization for electrons,

\item Local decay of scattering states,

\item Existence and stability of thermal states.
\end{itemize}
We will discuss this at the end of these lectures. }

\bigskip
\DETAILS{
where $X_{\theta} := U_{\theta}e^{-ig F} $ with $F$, the
self-adjoint operator \textbf{defined below}. The
transformation $H \rightarrow e^{-ig F} H_{g}^{SM} e^{ig
F}$ is a generalization of the well-known Pauli-Fierz
transformation. Note that the operator-family $X_{\theta}$ has the
following two properties needed in order to establish the desired
properties of the resonances:

(a) $U_{\theta}$ are unitary for $\theta \in \mathbb{R}$;

(b) $U_{\theta_1 +\theta_2}= U_{\theta_1}U_{\theta_2}$ where
$U_{\theta}$ are unitary for $\theta \in \mathbb{R}$. }

%
%

\bigskip

\DETAILS{
To prove the results mentioned above we apply the
spectral renormalization group (RG) method
(BachChenFroehlichSigal2003,
BachFroehlichSigal1998a,BachFroehlichSigal1998b, GriesemerHasler2,
FroehlichGriesemerSigal2008b) to the Hamiltonians $H_{g \theta=0}^{SM}= e^{-ig F}
H_{g}^{SM} e^{ig F}$ (the ground state case) and $H_{g \theta}^{SM},\ \rIm \theta >0,$
(the resonance case).
Note that the version of RG needed in this work uses new --
anisotropic -- Banach spaces of operators, on which the
renormalization group acts.
Using the RG technique we describe the spectrum of the operator
$H^{SM} _{g \theta}$ in $\{z \in \mathbb{C}^- |\ \epsilon_{0} < \rRe
z < \nu$\} from which we derive Theorems \ref{thm-main} and
\ref{thm-main2}. }

\end{document}